% This is samplepaper.tex, a sample chapter demonstrating the
% LLNCS macro package for Springer Computer Science proceedings;
% Version 2.20 of 2017/10/04
%
\documentclass[runningheads]{llncs}
\usepackage{graphicx}
% Used for displaying a sample figure. If possible, figure files should
% be included in EPS format.
%
% If you use the hyperref package, please uncomment the following line
% to display URLs in blue roman font according to Springer's eBook style:
% \renewcommand\UrlFont{\color{blue}\rmfamily}
\usepackage{stmaryrd} % llbracket & rrbracket
\usepackage{amsmath}
\usepackage{lineno}
\usepackage{listings}
\usepackage{ebproof} % proof trees
\usepackage{lineno}
\pagestyle{plain}

% \linenumbers

% \linenumbers %TODO uncomment
\begin{document}
\title{Execution Time Program Verification With Tight Bounds}
%
%\titlerunning{Abbreviated paper title}
% If the paper title is too long for the running head, you can set
% an abbreviated paper title here
%
\author{Ana Carolina Silva\inst{1} \and
Manuel Barbosa\inst{1,2} \and
Mário Florido\inst{1,3}}
\authorrunning{A. C. Silva et al.}
% First names are abbreviated in the running head.
% If there are more than two authors, 'et al.' is used.
%
\institute{FCUP, Universidade do Porto \and
INESC TEC \and
LIACC, Universidade do Porto}
\maketitle              % typeset the header of the contribution
\begin{abstract}
This paper presents a proof system for reasoning about execution time bounds for a core imperative programming language. Proof systems are defined for three different scenarios: approximations of the worst-case execution time, exact time reasoning, and less pessimistic execution time estimation using amortized analysis. We define a Hoare logic for the three cases and prove its soundness with respect to an annotated cost-aware operational semantics. Finally, we define a verification conditions (VC) generator that generates the goals needed to prove program correctness, cost, and termination. Those goals are then sent to the Easycrypt toolset for validation. The practicality of the proof system is demonstrated with an implementation in OCaml of the different modules needed to apply it to example programs. Our case studies are motivated by real-time and cryptographic software.

\keywords{Program Verification \and Execution time analysis \and Amortized analysis \and Hoare logic}
\end{abstract}
\section{Introduction}\label{sec:introduction}
Semantics-based approaches to program verification usually belong to two different broad classes: 
1) partial correctness assertions expressing relations between the initial and final state of program variables in the form of pre and postconditions, assuming that the program terminates; and 2) total correctness properties, which besides those assertions which specify claims about program behavior, also express program termination.

However, another class of properties is fast growing in relevance as a target for program verification: resource consumption when executing a program. The term {\em resource}  is used broadly: resources can be time used to execute the program on a particular architecture, memory used (stack or heap) during program execution, or even energy consumption. Resource consumption has a significant impact in different specific areas, such as real-time systems, critical systems relying on limited power sources, and the analysis of timing side-channels in cryptographic software. 

A proof system for total correctness can be used to prove that a program execution terminates, but it does not give any information about the resources it needs to terminate. In this dissertation, we want to study extended proof systems for proving assertions about program behavior that may refer to the required resources and, in particular, to the execution time.

Proof systems to prove bounds on the execution time of program execution were defined before in \cite{nielson1987,barbosa21}. Inspired by the work presented in \cite{barbosa21} our goal is to study inference systems that allow proving assertions of the form $\{ \varphi \} C \{\psi | t\}$, meaning that if the execution of the statement $C$ is started in a state that validates the precondition $\varphi$ then it terminates in a state that validates postcondition $\psi$ and the required execution time is at most of magnitude $t$.

\section{Goals and Contributions}\label{sec:intro-obj}

Our main goal is to define an axiomatic semantics-based proof system for reasoning about execution time bounds for a core imperative programming language. Such a system would be useful not only to understand the resource necessities of a program but also it could be applied to cryptographic implementations to prove the independence of resource usage from certain program variables.
This high-level goal translates into the following concrete objectives:
\begin{enumerate}
    \item Study axiomatic systems. This will further our knowledge in the field and allow us to understand how to define our logic for resource analysis.
    \item Study amortized analysis so we can understand how to apply amortization to a proof system to refine cost-bound estimation of while loops.
    \item Analyze the state-of-the-art to understand what has already been developed to analyze resource consumption and the main limitations found.
    \item Develop a sound logic capable of verifying correction, terminations, and bounds on resource consumption using a simple imperative programming language.
    \item Create a tool based on our logic, capable of verifying time bounds, correction, and terminations, for example-problems.
    \item Apply this logic to analyze the time complexity of classic algorithms.
\end{enumerate}

The main contribution of this paper is a proof system that can verify resource assumptions in three different scenarios:
\begin{enumerate}
\item Upper bounds on the required execution time. This is mostly an adaptation of previous work in \cite{barbosa21}.
\item Amortized costs denoting less pessimistic bounds on the execution time.
\item Exact costs for a fragment of the initial language with bounded recursion and a constrained form of conditional statements.
\end{enumerate}

The two last scenarios are a novel contribution of our system, and we treat them in a unified way to enable their integrated use. 

Assertions on program behavior that establish upper bounds on execution time may be useful for general programming, where one wants to prove safety conditions concerning the worse case program complexity. 
As in prior approaches, the tightness of the bound is not captured by the logic, and there is often a trade-off between the tightness of the proved bound and the required proof effort.

Proofs that leverage amortized costs may be used when trivially composing worst-case run-time bounds results in overly pessimistic analyses. This is particularly useful for algorithms where some components imply a significant cost in resources, whereas other components are not as costly. With amortized costs, we may prove assertions about the aggregate use of costly and less costly operations over the whole algorithm execution. 

Finally, the third class of assertions denoting exact costs are useful in scenarios where the approximation of execution time is not enough to guarantee safety, as it happens for critical systems and real-time programming. 
Moreover, proving that the exact execution time of a program is an expression that does not depend on confidential data provides a direct way to prove the absence of timing leakage, which is relevant in cryptographic implementations.
We must restrict the programming language to guarantee the ability to prove exact costs. Thus, in this third scenario, programs have bound recursion, and conditional statement branches have to have the same cost.

Before defining our proof system, we defined an operational semantics capable of computing the execution time for expressions and statements during program execution. This cost-aware operational semantics is another contribution of our work, and it is used to prove the soundness of our inference system.

A third contribution of this work, which shows the practicality of our proof system, is an implementation in OCaml of the different modules needed to apply it to example programs. We then present several application examples motivated by real-time and cryptographic software.

\section{Document Structure}\label{sec:intro-structure}
This document is organized as follows:
\begin{itemize}
    \item The first chapter - \textbf{Introduction} - gives a context of our work in the field, the motivation for this project, our main goals, our contributions, and how the document is organized.
    \item The second chapter - \textbf{Background} - elaborates on the theoretical results used in the basis of our work and needed to understand our definitions and results.
    \item The third chapter - \textbf{Related Work} - presents an analysis of the literature on static resource analysis, from type-based systems to axiomatic semantics systems.
    \item The fourth chapter - \textbf{Cost Aware Program Logic} - presents our language definition, our original logic for upper bound estimation, the respective VCG, and some illustrative examples.
    \item The fifth chapter - \textbf{Amortized Costs} - briefly introduces the field of amortized analysis and presents an extension to the logic and VCG from chapter 4, with the use of amortized analysis to improve the upper-bound estimation.
    \item The sixth chapter - \textbf{Exact Logic} - presents a variation to our language and an extension to our logic that allows for the derivation of the exact cost of a program.
    \item The seventh chapter - \textbf{Implementation and Experimental Results} - shows the architecture of the tool developed, as well as some implementation details and practical results.
    \item The eighth chapter - \textbf{Conclusion and Future Work} - reflects on the main conclusions from our research and developed work and presents some goals to further extend and improve our project.
\end{itemize}

\section{Background}\label{sec:background}
In this chapter, we provide an overview of the field of formal verification and some of the most relevant theoretical results that are the basis of our work.
We start by giving historical background on the field of formal verification. Here we will present results, such as the ones achieved by Floyd and Hoare, used as the base of our definitions. We will also define concepts fundamental to understanding the work presented in this dissertation.
% Then we will delve into a less explored subject in formal verification, time verification, where we will analyze some of the systems already developed.

%%%%%%%%%%%%%%%%%%%%%%%%%%%%%%%%%%%%%%%%%%%%%%%%%%
%%  History
%%%%%%%%%%%%%%%%%%%%%%%%%%%%%%%%%%%%%%%%%%%%%%%%%%
\section{Historical Background}\label{sec:pre-history}
As computers became more powerful,  programs also became longer and more complex. When programs were still relatively small, flowcharts or extensive testing was enough to prove a program's functionality. 
But programs quickly started being so complex that these methods became unreliable and more prone to error. 

At the beginning of the second half of the 20th-century, experts started to find vulnerabilities in public distributed software. 
Since then, the use of computational systems has grown exponentially, and so did the number of vulnerabilities and their impact. A simple error might have drastic consequences, such as a leak of confidential information, the crash of critical systems, and direct loss of assets.

% {\color{orange} Add example of a vulnerability that lead to big losses. For example challenger explosion}

% {\color{orange} Should I also mention the cost of bug squashing?}

This problem proved to be enough reason to start thinking about a more reliable way to guarantee the properties of a program and develop tools that help verify these properties. 

Formal Verification refers to using mathematical principles to prove the correction of a given specification of a program. 
It is hard to pinpoint where it all started, but the works of Robert Floyd~\cite{floyd1967} and Tony Hoare~\cite{hoare1969} were undoubtedly pioneers in the field, and their definitions are still the base of verification tools used today.

% Should I have a section introducing formal verification before starting the hoare logic section where I mention the types of formal verification: model checking, deductive verification and equivalence checking ?

%%%%%%%%%%%%%%%%%%%%%%%%%%%%%%%%%%%%%%%%%%%%%%%%%%
%%  Semantics
%%%%%%%%%%%%%%%%%%%%%%%%%%%%%%%%%%%%%%%%%%%%%%%%%%
\section{Semantics}\label{sec:pre-semantics}

% 1 - Define what is a semantic
When defining programming languages, we want a way to be capable of reasoning about what programs are doing. 
The syntax describes the grammatical rules we must follow to write a program in a language. The syntax allows us to distinguish between languages and identify a program's language.
But if we want to understand what that program is doing, we need to look at its semantics. Semantics is a way to make sense of the meaning of a program and understand what it is trying to accomplish.

% 2 - Define the type of semantics there are and their main differences
There are multiple strategies to analyze the meaning of a program. The most popular ones are operational, denotational, and axiomatic semantics. %Itemize?
Operational semantics focus on what steps we take during the program's execution. In denotational semantics, we do not care about the "how" but only about "what" the program is doing. In axiomatic semantics, we are concerned about evaluating the satisfability of assertions on the program and its variables. We will go more in-depth on how operational and axiomatic semantics work.

%%%%%%%%%%%%%%%%%%%%%%%%%%%%%%%%%%%%%%%%%%%%%%%%%%
%%  Operational Semantics
%%%%%%%%%%%%%%%%%%%%%%%%%%%%%%%%%%%%%%%%%%%%%%%%%%
\subsection{Operational Semantics}\label{subsec:operational}

% 3 - Define an operational Semantics
Operational semantics describes the meaning of a program by specifying the transitions between states of an abstract state machine.
As we mentioned, unlike with denotational semantics, here we are concerned about \emph{how} the machine changes states with the execution of a statement.

% 3.1- Small-step vs big-step
There are two styles of operational semantics
\begin{itemize}
    \item Small-step or Structural Operational Semantics
    \item Big-step or Natural Semantics
\end{itemize}

In Structural Operational Semantics or Small-step Semantics, we are concerned about every individual transition we take throughout the program's execution.
In Natural Semantics or Big-step semantics, we want to understand how we transition from the initial to the final state. We are concerned about a high-level analysis of how the machine state changes and not about each individual step.

%%%%%%%%%%%%%%%%%%%%%%%%%%%%%%%%%%%%%%%%%%%%%%%%%%
%%  Hoare Logic
%%%%%%%%%%%%%%%%%%%%%%%%%%%%%%%%%%%%%%%%%%%%%%%%%%
\subsection{Axiomatic Semantics (Hoare Logic)}\label{subsec:pre-hoare}
In 1967 Floyd specified a method that would allow proving properties on programs, such as correctness, equivalence, and termination \cite{floyd1967}. They achieved this by representing programs as flowcharts and associating propositions to each connection on the flowchart. The proof is done by induction on the number of steps. If an instruction is reached by a connection whose proposition is true, then we must leave it with a true condition as well. 
In 1969 Hoare wrote a paper where they extended Floyd's logic to prove properties on a simple imperative program \cite{hoare1969}. In this paper, they defined what we now call Hoare (or Floyd-Hoare) triples.
\begin{definition}[Hoare Triples]\label{def:hoare-triple}
A Hoare triple is represented as $$\{P\}Q\{R\}$$ 
and can be interpreted as "if the assertion $P$ is true before we run program $Q$, then assertion $R$ will be true when the program ends".
\end{definition}
Notice this definition does not offer guarantees over termination. Executing program $Q$ from a state validating $P$, does not have to halt. As long as whenever it does the final state validates $R$. We call assertions in the form $\{P\}Q\{R\}$ \emph{partial correctness assertions}.

 Using this definition Hoare specifies a proof system with a set of axioms and inference rules, which allow to prove assertions on any program written in this language. A derivation on this proof system is called a \emph{theorem} and is written as $\vdash \{P\}Q\{R\}$.

As an example, let us look at the assignment axiom. Consider an assignment to variable $x$ of an expression $a$ in the form 
$$x := a$$

If an assertion P is true after executing the assignment (when variable x takes the value of expression a) then it has to be true before the assignment if we replace any mentions of $x$ in P by $a$. This is usually represented as $P[a/x]$.
Therefore the assignment axiom is written as
$$\{P[a/x]\}\ x := a\ \{ P \}$$

If, in addition to proving a program specification is correct, we also want to prove the program always halts, we are looking for Total Correctness.

\begin{definition}[Total Correctness]
A total correctness assertion is represented as
$$[P]Q[R]$$
where $P$ and $R$ are assertions and $Q$ is a program. If we execute $Q$ from a state that satisfies $P$ program Q will terminate and the final state will satisfy $R$.
\end{definition}

partial correctness + termination = total correctness

Total correctness is harder to prove than partial correctness and not always possible. But it also gives a stronger guarantee about a programs behavior.

We consider two properties on this proof system, soundness an completeness. Soundness ensures our proof systems generates valid partial correctness assertions. Completeness ensures that our system is capable of deriving every valid assertion.

\begin{definition}[Validity]
We say an assertion \{P\}Q\{R\} is valid if it is true for all possible states.
$$\forall \sigma \in \Sigma_\perp.\ \sigma \models \{P\}Q\{R\}$$
Or we can simply represent it as
$$\models \{P\}Q\{R\}$$
\end{definition}

\begin{definition}[Soundness]
Our proof system is sound if every rule preserves validity. 
In other words, for any partial correctness assertion $\{P\}Q\{R\}$
$$\mathrm{if}\ \vdash \{P\}Q\{R\}\ \mathrm{then}\ \models \{P\}Q\{R\}$$
\end{definition}

\begin{proof}
Soundness is proved by structural induction on the statement $Q$.
\end{proof}

\begin{definition}[Completeness]
A proof system is complete if every true assertion $\{P\}Q\{R\}$ can be proved by our system.
$$\mathrm{if}\ \models \{P\}Q\{R\}\ \mathrm{then}\ \vdash \{P\}Q\{R\}$$
\end{definition}

Proving completeness is not as trivial and most of the times not possible. The completeness of the proof system presented by Hoare was established by Cook in 1978~\cite{cook1978}. In this paper he presents a proof of \textit{relative completeness}, that is, assuming our assertion language is complete then the logic presented by Hoare is also complete. 

%TODO: Check relative completeness explanation

%%%%%%%%%%%%%%%%%%%%%%%%%%%%%%%%%%%%%%%%%%%%%%%%%%
%%  Verification Condition Generator
%%%%%%%%%%%%%%%%%%%%%%%%%%%%%%%%%%%%%%%%%%%%%%%%%%
\section{Verification Conditions Generator}

We now have the necessary notation to specify the behavior of a program. Hoare's set of axioms and rules allows proving that program's said behavior. 
Manually proving these properties is not only long and tedious but also prone to error. We are missing a mechanized solution we could apply to any program to guarantee its validity.

Dijkstra defined the weakest precondition algorithm in is 1976 book "A Discipline of Programming" \cite{dijkstra1976}.

\begin{definition}[Weakest Precondition]
The weakest precondition is the simplest condition, necessary and sufficient to guarantee the post-condition is true when the program terminates. We use the notation $wp(Q,R)$ where Q is a statement and R is a post-condition. 
$$\models \{P\}Q\{R\}\ \textit{iff}\ P \rightarrow wp(Q,R)$$

\end{definition}

%TODO: improve
Let us consider as an example the statement $Q \equiv x := y + 2$ and we want to prove $R \equiv x \ge 0$ is true after executing Q. The weakest precondition $wp(Q,R)$ would say that as long as $y \ge 2$ is true before executing statement Q, then R will be true after execution.

In 1979, JC King presents the first mechanized algorithm to automatically verify the correctness of a program~\cite{king1970}. His work was based on the definitions provided by Floyd~\cite{floyd1967} and Manna~\cite{manna1969}. %TODO check if Manna citation is the right one
% https://www.proquest.com/openview/86dc9c2f2b6168a5fad871af78cf9610/1?cbl=18750&diss=y&parentSessionId=XUIZCXHzGNwsSUBT4Dqx6wkF4GPmBhR8Jm0QAS5bPSM%3D&pq-origsite=gscholar
In more recent implementations this algorithm is usually based on Hoare's definition of correctness and it is called a VCG. This algorithm makes use of the weakest precondition function. Given a Hoare triple $\{P\}Q\{R\}$, for the program Q to be correct we must guarantee that $P \rightarrow wp(Q,R)$ is true. This is the condition that needs to be proved in order to guarantee correctness. If our language contains loops we need to satisfy extra conditions, including the preservation of the loop invariant. 
The \textit{loop invariant} is an assertion that is satisfied before and after every execution of the loop's body.

\begin{definition}[Verification Condition Generator]
A VCG is an algorithm that when applied to a Hoare triple returns a set of VC. The Hoare triple is derivable in our proof system, if and only if all the generated conditions are valid.
$$\models VCG(\{P\}Q\{R\})\ \textit{iff}\ \vdash \{P\}Q\{R\}$$
\end{definition}

\begin{theorem}[Soundness of VCG]\label{theor:soundness-vcg}
$\models VCG(\{P\}Q\{R\})\ \rightarrow\ \vdash \{P\}Q\{R\}$.
\end{theorem}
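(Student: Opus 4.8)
The plan is to proceed by structural induction on the program $Q$, exploiting the recursive definition of the generator. The starting observation is that, following the Definition above, on a triple $\{P\}Q\{R\}$ the VCG returns the core condition $P \rightarrow wp(Q,R)$ together with a set of auxiliary conditions $\mathit{aux}(Q,R)$ collected while traversing $Q$ — and these auxiliary conditions are precisely the ones produced by the loop-invariant annotations occurring inside $Q$. Hence the hypothesis $\models VCG(\{P\}Q\{R\})$ unpacks into (i) $\models P \rightarrow wp(Q,R)$, and (ii) validity of every formula in $\mathit{aux}(Q,R)$.

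First I would isolate and prove an auxiliary lemma, stated independently of $P$: for every statement $Q$ and assertion $R$, if every formula of $\mathit{aux}(Q,R)$ is valid then $\vdash \{wp(Q,R)\}\,Q\,\{R\}$. I would prove this by structural induction on $Q$. The $\mathtt{skip}$ and assignment cases are immediate from the corresponding axioms, since $wp(\mathtt{skip},R)=R$ and $wp(x:=a,R)=R[a/x]$. For a sequence $Q_1;Q_2$, using $wp(Q_1;Q_2,R)=wp(Q_1,wp(Q_2,R))$ together with the analogous split of $\mathit{aux}$, I would apply the induction hypothesis to $Q_2$ with postcondition $R$ and to $Q_1$ with postcondition $wp(Q_2,R)$, then combine them with the sequencing rule. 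For a conditional $\mathtt{if}\ b\ \mathtt{then}\ Q_1\ \mathtt{else}\ Q_2$, where $wp(Q,R)=(b\rightarrow wp(Q_1,R))\wedge(\neg b\rightarrow wp(Q_2,R))$, the branch derivations coming from the induction hypothesis plus two tautological applications of the rule of consequence feed into the conditional rule. For a loop $\mathtt{while}\ b\ \mathtt{do}\ \{I\}\,Q'$, the generator sets $wp(Q,R)=I$ and places into $\mathit{aux}(Q,R)$ the formulas $I\wedge b\rightarrow wp(Q',I)$ and $I\wedge\neg b\rightarrow R$ (besides $\mathit{aux}(Q',I)$); from the induction hypothesis I get $\vdash\{wp(Q',I)\}\,Q'\,\{I\}$, tighten the precondition to $I\wedge b$ by consequence, apply the while rule to obtain $\vdash\{I\}\,Q\,\{I\wedge\neg b\}$, and weaken the postcondition to $R$ by a final use of consequence.

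With the lemma available, the theorem is one step away: (ii) together with the lemma gives $\vdash\{wp(Q,R)\}\,Q\,\{R\}$, and then (i) together with the rule of consequence gives $\vdash\{P\}\,Q\,\{R\}$.

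I expect the main obstacle to be bookkeeping rather than conceptual difficulty: I must fix the recursive definitions of $VCG$ and $wp$ precisely enough that, in the sequence and conditional cases, the sub-statements receive the correct intermediate postconditions and the auxiliary-condition sets line up with what the induction hypothesis requires. The while case is the only genuinely delicate one, since the argument depends on the loop invariant being syntactically available to the VCG, and it also relies on the rule of consequence admitting arbitrary valid implications as side conditions, so that the validity of a generated verification condition translates directly into an applicable proof step.
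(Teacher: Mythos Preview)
Your proposal is correct and follows the paper's indicated route of structural induction on $Q$; the paper's proof at the point where this theorem is stated is the one-line sketch ``By induction on the structure of $Q$,'' and you have filled in a sound elaboration.

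Where you diverge from the paper's later, detailed VCG-soundness argument (given for the cost-aware system) is in organization: you factor the induction through an auxiliary lemma---if the side conditions $\mathit{aux}(Q,R)$ are all valid then $\vdash\{wp(Q,R)\}\,Q\,\{R\}$---and apply consequence once at the very end to introduce $P$. The paper instead carries $P$ (and, in the cost-aware setting, the time bound $T$) through the induction, taking the full statement as the induction hypothesis and invoking the weakening rule inside each case. Your factorization is the cleaner one for plain Hoare logic, since it isolates the role of the user-supplied precondition to a single final step and makes the while case especially transparent. The paper's direct style, on the other hand, extends uniformly to the cost-annotated triples $\{P\}S\{Q\mid T\}$, where the additional verification condition $P\rightarrow T\ge t_S$ couples $P$ to the computed cost and would not separate out as neatly into a precondition-free lemma.
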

\begin{proof}
By induction on the structure of Q.
\end{proof}

\begin{theorem}[Completness of VCG]\label{theor:completeness-vcg}
$\vdash \{P\}Q\{R\}\ \rightarrow\ \models VCG(\{P\}Q\{R\})$.
\end{theorem}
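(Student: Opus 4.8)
The plan is to prove the implication by induction on the derivation of $\vdash\{P\}Q\{R\}$. Since the VCG is defined by recursion on $Q$, and every Hoare rule except the rule of consequence is syntax-directed, each syntax-directed rule yields one case that mirrors one recursive clause of the VCG, and the rule of consequence yields a single further case in which $Q$ is unchanged. The argument rests on two facts that I would establish first, each by a routine induction on $Q$: (i) \emph{monotonicity} --- if $\models R_1\rightarrow R_2$ then $\models wp(Q,R_1)\rightarrow wp(Q,R_2)$, and, likewise, every side condition emitted for $Q$ relative to $R_1$ implies the corresponding one relative to $R_2$; and, packaging this with the observation that $P$ enters $VCG(\{P\}Q\{R\})$ only through the single goal $P\rightarrow wp(Q,R)$, the \emph{compatibility lemma}: $\models VCG(\{P'\}Q\{R'\})$, $\models P\rightarrow P'$ and $\models R'\rightarrow R$ together imply $\models VCG(\{P\}Q\{R\})$. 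I would also record the structural unfolding of $VCG$: for a sequence $C_1;C_2$ it is $\{P\rightarrow wp(C_1,wp(C_2,R))\}$ joined with the side conditions of $C_1$ relative to $wp(C_2,R)$ and those of $C_2$ relative to $R$; for a loop $\mathbf{while}\ b\ \mathbf{do}\ C$ with annotated invariant $I$ it is $\{P\rightarrow I\}$ joined with the side conditions of $C$ relative to $I$, the preservation goal $I\wedge b\rightarrow wp(C,I)$, and the exit goal $I\wedge\neg b\rightarrow R$.

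With these in hand the cases are short. The axioms ($\mathbf{skip}$, assignment) force $P$ to be exactly $wp(Q,R)$, so the lone generated condition is a tautology and there are no side conditions. For $Q\equiv C_1;C_2$ derived by the sequence rule through some intermediate $S$, the induction hypothesis gives the validity of $VCG(\{P\}C_1\{S\})$ and $VCG(\{S\}C_2\{R\})$; monotonicity lets one weaken $S$ to $wp(C_2,R)$ wherever it occurs, and the unfolding then identifies what remains with $VCG(\{P\}C_1;C_2\{R\})$. The conditional case is identical with the branches guarded by $b$ and $\neg b$. For $Q\equiv\mathbf{while}\ b\ \mathbf{do}\ C$, the while rule has premise $\vdash\{I\wedge b\}C\{I\}$ and conclusion $\{I\}Q\{I\wedge\neg b\}$; the induction hypothesis gives directly that the side conditions of $C$ relative to $I$ are valid and that $\models I\wedge b\rightarrow wp(C,I)$, which, together with the trivial entry and exit goals, is all of $VCG(\{I\}Q\{I\wedge\neg b\})$, and a general $\{P\}Q\{R\}$ is obtained from this one by a consequence step, hence handled by the compatibility lemma. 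The pure consequence case is the compatibility lemma applied verbatim to the induction hypothesis.

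The genuine obstacle is not any single rule but the status of the loop annotations. The VCG reads the invariant written into $Q$, whereas a derivation of $\{P\}Q\{R\}$ could in principle use an unrelated invariant, in which case the conditions emitted for $Q$ need not be valid; so the theorem must be read --- and, if this has not already been made explicit, fixed by convention --- as quantifying over \emph{annotated} statements, with $\vdash\{P\}Q\{R\}$ meaning a derivation whose while-steps use exactly the annotated invariants. Making that convention precise is the delicate point, because appealing to relative completeness does not rescue a badly chosen annotation. Once it is settled, the compatibility lemma and the monotonicity of $wp$ reduce every remaining case to a mechanical unfolding, and each additional language construct carrying its own VCG clause --- such as the cost-annotated and bounded-recursion forms used later in the paper --- contributes exactly one more case to be discharged in the same style.
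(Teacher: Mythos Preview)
Your approach is essentially the same as the paper's: induction on the derivation of $\vdash\{P\}Q\{R\}$. The paper's proof is in fact a single sentence (``By induction in the derivation of $\vdash \{P\}Q\{R\}$''), so your proposal supplies considerably more detail than the original --- the monotonicity of $wp$, the compatibility lemma for the consequence rule, and the case analysis are all left implicit there. You also correctly flag the one genuinely delicate point the paper does not discuss: the VCG reads the invariant (and, in the cost-aware versions, the variant, bound, and cost function) from the oracle/annotations, so completeness only holds under the convention that the derivation uses exactly those annotated values; the paper silently assumes this.
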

\begin{proof}
By induction in the derivation of $\vdash \{P\}Q\{R\}$.
\end{proof}

%TODO improve + add VC acronym
Returning to our last example where $wp(x:=y+2, x \ge 0) = y \ge -2$ let us consider the partial correctness assertion $\{y = 0\} x := y + 2 \{x \ge 0\}$. Since in this case the program does not contain any loops there is only one verification condition that needs to be satisfied in order to prove correctness: $(y = 0) \rightarrow (y \ge -2)$. Since this VC is valid, we know our program is correct.

\section{Related Work}\label{sec:related}
There has been increasing interest in the field of static resource analysis. Knowing bounds on resources or rough estimates of resource consumption can help us optimize embedded software and real-time systems. 

This chapter briefly describes some of the most relevant work on resource estimation. We divide the chapter into sections according to the methodology used to prove or infer resource bounds. We also compare the literature with the work presented in this dissertation and show the relevance of the work we developed. 

\section{Axiomatic Semantics Systems}

One way of proving bounds on a program
's resources is by using axiomatic semantics. Other works have already implemented systems that use axiomatic semantics for resource analysis, which differ from our work in multiple ways, from the paradigm of language used to the precision of the derived bounds. This section explains some of the most relevant work in the literature that inspired our definitions. We further subdivided this literature into two categories: classical cost analysis and amortized cost analysis.

\subsection{Classical Cost Analysis}
There is already some work using derivations on the logic presented by Hoare~\cite{hoare1969} for resource analysis. Some estimate orders of magnitude, while others use a more detailed annotation to automate the process but lack ways of optimizing the bounds. 

One of the first and still one of the most relevant works on this topic is the one presented by Nielsen~\cite{nielson1987,nielson2007}. The author defines an axiomatic semantics for a simple imperative programming language in this work and extends Hoare's logic so that the proof system would be capable of proving the magnitude of worst-case running time and termination. This system is also proven sound. Even though this work operates on a similar imperative language to the one we defined, it lacks the precision our logic provides since it only allows proving order-of-magnitude.

In 2014, Carbonneaux \emph{et al.}~\cite{carbonneaux2014} presented a system that verifies stack-space bounds of compiled machine code (x86 assembly) at the C level. That is, it derives bounds during compilation from C to assembly. They developed a quantitative Hoare logic capable of reasoning about resource consumption. This work is an extension of the CompCert C Compiler~\cite{leroy2009}. Coq was used to implement and verify the compiler. The work by Carbonneaux focuses on the compilation from C to assembly using quantitative logic, which does not serve the same purpose we are trying to achieve. With our work, we can prove tight bounds on imperative programs using an assisted proof system, where the user can help make the bounds as precise as possible. Also, how default constructors' costs are defined makes our work easy to adjust for a system with different resource usage.

In 2018, Kaminski \emph{et al.} defined a conservative extension of Nielsen's logic for deterministic programs~\cite{nielson1987,nielson2007} by developing a Weakest Precondition calculus for expected runtime on probabilistic programs~\cite{kaminski2018}. Again this work is largely automated, which differs from our user-assisted approach. Since it reasons about probabilistic programs, it faces other challenges than the ones we are interested in this work. 

In 2021 a paper was released extending the logic of EasyCrypt~\cite{barbosa21}. EasyCrypt is an interactive tool created to prove the security of cryptographic implementations. One of its core concepts is a set of Hoare Logics, which allow proofs on relational procedures and probabilistic implementations. In this paper, the authors propose an extension to the EasyCrypt tool, allowing to prove properties on the cost of a program. To achieve this, they extended the existing logic to include cost rules. They also implemented a way to define the cost of custom operators. Our work operates on a subset of EasyCrypt's language, but we extended the logic to use the potential method of amortized analysis, increasing the accuracy of the generated bounds.

\subsection{Amortized Cost Analysis}
We will now present some of the literature that, in addition to using axiomatic semantics for static cost analysis, also uses amortized analysis to increase the accuracy of the bounds.

Carbonneaux \emph{et al.}~\cite{carbonneaux2015} continued their previous work~\cite{carbonneaux2014} on deriving worst-case resource bounds for C programs, but they now implemented a system that uses amortized analysis and abstract interpretations.

In~\cite{haslbeck2018}, Haslbeck and Nipkow analyze the works of Nielson\cite{nielson2007}, Carbonneaux \emph{et al.}~\cite{carbonneaux2014,carbonneaux2015} and Atkey~\cite{atkey2011} and prove the soundness of their systems. In this paper, they implement Verification Condition Generators based on Nielsen's logic and Carbonneaux's Logic, proving it sound and complete. They compare all three methodologies and explain some of the limitations of these systems.

\section{Type-Based Systems}
While our system uses a Hoare logic to prove upper bounds on program cost, many existing systems are type-based. Usually, these systems use type inference and type size/cost annotation in order to be able to analyze resource usage statically. Even though these works highly differ from ours, we will briefly mention some of the most relevant works in the field.

In \cite{radicek2018}, the authors present a proof system for cost analysis on functional programs using a fine-grained program logic for verifying relational and unary costs of higher-order programs. The paper~\cite{avanzini2017} presents a fully automated methodology for complexity analysis of higher-order functional programs based on a type system for size analysis with a sound type inference procedure. Hoffman and Jost, \cite{hoffmann2006} defined a type system capable of analyzing heap space requirements during compilation time based on amortized analysis. This work was limited to linear bounds in the size of the input, so they later extended it to polynomial resource bounds~\cite{hoffmann2010}.

In 2017, Hoffman \emph{et al.}~\cite{hoffmann2017} developed a resource analysis system capable of proving worst-case resource bounds. This resource is a user-defined input and can be anything from time or memory to energy usage. This work is an extension of their previous work in Automatic Amortized Resource Analysis (AARA), where they used amortized analysis to derive polynomial bounds for the first time. Their proof system is a type system with inductive type, refined from OCaml's type system. 

Atkey~\cite{atkey2011} presents a type-based amortized resource analysis system adapted from Hofmann's work to imperative pointer-manipulating languages. They achieve this by implementing a separation logic extension to reason about resource analysis.

Serrano \emph{et al.}~\cite{serrano2014} introduced a general resource analysis for logic programs based on sized types, i.e., types that contain structural information and lower and upper bounds on the size of the terms. They achieved this by using an abstract Interpretation Framework. 

In~\cite{simoes2012}, the authors develop a type-based proof system capable of automatically and statically analyzing heap allocations. This work is an extension of Hoffman's work for a lazy setting.
Vasconcelos \emph{et al.}~\cite{vasconcelos2015} defined a type system capable of predicting upper bounds on the cost of memory allocation for co-recursive definitions in a lazy functional language. 

\section{Other Proof Systems}
Some other works on static estimation of resource bounds use other methodologies other than axiomatic semantics or type theory.
In a 2009 paper, the tool COSTA is presented~\cite{albert2009}. COSTA is a static analyzer for Java bytecode. It infers cost and termination information. It takes a cost model for a user-defined resource as input and obtains an upper bound on the execution time of this resource. 

In~\cite{gulwani2009} they compute symbolic bounds on the number of statements a procedure executes in terms of its input. They use the notion of counter variables and an invariant generation tool to compute linear bounds on these counters. These bounds are then composed to generate a total non-linear bound.

Brockschmidt \emph{et al.}~\cite{brockschmidt2014} uses Polynomial Rank Functions (PRF) to compute time bounds. Then these bounds are used to infer size bounds on program variables. They consider small parts of the program in each step and incrementally improve the bound approximation.

\section{Cost-Aware Program Logic}\label{sec:logic}
In this chapter, we will focus on our first goal of formally verifying the worst-case execution time of imperative programs. To achieve this, we specify a simple while language with annotations. We first define an operational semantics capable of computing the execution time. Having this, we define a cost logic, which we use to prove correctness, termination, and worst-case execution time. In the last section of this chapter, we present a Verification Condition Generator (VCG) algorithm for our logic. Practical results using this algorithm definition are shown in chapter~\ref{sec:results}.

%%%%% 1 - Annotated While Language %%%%%
\section{Annotated While Language}\label{chap4:sec:language} 
%%%%%%%%%%%%%%%%%%%%%%%%%%%%%%%%%%%%%%%%
We start by defining a core imperative language (\textbf{IMP}) with the following syntactic structures: numbers, booleans, identifiers, arithmetic expressions, boolean expressions, statements, and assertions. 
To simplify the presentation and explanation of these structures we will use meta-variables to refer to elements in these sets: $n$ for numbers, $x$ for identifiers, $a$ for arithmetic expressions, $b$ for boolean expressions, $S$ for statements, and $P, Q$ for assertions.
 
\paragraph{Numbers and Booleans} 
We consider numbers, to be the usual set of signed decimal numerals for positive and negative integer numbers. Our boolean set is defined as $\{ true, false\}$.

\paragraph{Arithmetic Expressions}
Arithmetic expressions are defined by the following rules
$$a ::= n\ 
    |\ x\ 
    |\ x [ a ]\ 
    |\ a_1 + a_2\ 
    |\ a_1 - a_2\  
    |\ a_1 * a_2\  
    |\ a_1 / a_2\  
    |\ {a_1}^{a_2}\  
    |\ \sum_{x = n}^{a_1} a_2
$$

\paragraph{Boolean Expressions}
Similarly to arithmetic expressions, booleans expressions are defined as
$$
    b ::= true\ |\ false\ 
    |\ b_1 = b_2 \ 
    |\ b_1 \ne b_2 \ 
    |\ b_1 < b_2 \ 
    |\ b_1 > b_2 \ 
    |\ b_1 \le b_2 \ 
    |\ b_1 \ge b_2 \ 
    |\ \neg b \ 
    |\ b_1 \land b_2 \ 
    |\ b_1 \lor b_2
$$

\paragraph{Statements}
Finally, we define the following statement rules:
$$
S ::= x = a\  
 |\ x [ a_1 ] = a_2\ 
 |\ \textbf{if}\ b\ \textbf{then}\ S_1\ \textbf{else}\ S_2\ \textbf{done}\ 
 |\ \textbf{while}\ b\ \textbf{do}\ S\ \textbf{done}\ 
 |\ S_1 ; S_2
$$

An example of a simple program in our language is shown in figure~\ref{fig:division}, where we note the time annotation on the right-hand side of the post-condition. For now, let us ignore the annotations inside $\{\}$, they will be explained later in section~\ref{chap4:sec:axiomatic}.
\begin{figure}[htbp]
  \centering
  \includegraphics[width=0.8\linewidth]{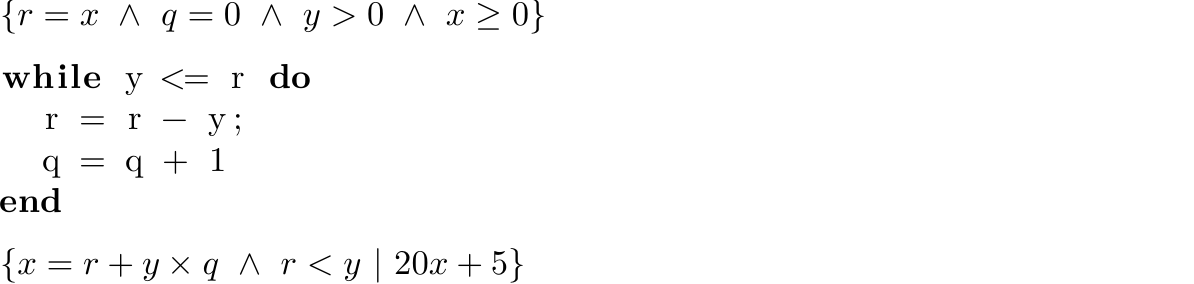}
  \caption{Division algorithm implemented in IMP.}
  \label{fig:division}
\end{figure}
%

%%%%%%% 2 - Operational Semantics %%%%%%%
\section{Operational Semantics}\label{chap4:sec:formal-operational}
%%%%%%%%%%%%%%%%%%%%%%%%%%%%%%%%%%%%%%%%%
Now that we have defined the syntax of our language we need to set semantic rules that will give the meaning of a program.
In order to achieve this, and since our language has variable declarations, we first need to define the notion of \emph{state}.
A state can be defined as a function that given a variable will return its value.
$$State: var \rightarrow int$$
A variable is defined as either an identifier or a position in an array.
$$var ::= x\ |\ x [n]$$
Thus, writing $\sigma\ x$ will specify the value of variable $x$ in state $\sigma$.

%%%%% 4.2.1 - Semantic of Expression %%%%%
\subsection{Semantic of Expressions}
%%%%%%%%%%%%%%%%%%%%%%%%%%%%%%%%%%%%%%%%%%

In order to evaluate arithmetic expressions, we define a semantic function $\mathcal{A}$ which will receive two arguments, an arithmetic expression and a state.
$$\mathcal{A}: aexp \rightarrow state \rightarrow int$$

Writing $\mathcal{A} \llbracket a \rrbracket \sigma$ will return the value of evaluating expression $a$ in state $\sigma$. The function is defined in figure~\ref{fig:arith}.
\begin{figure}[htbp]
  \centering
  \includegraphics[width=0.85\linewidth]{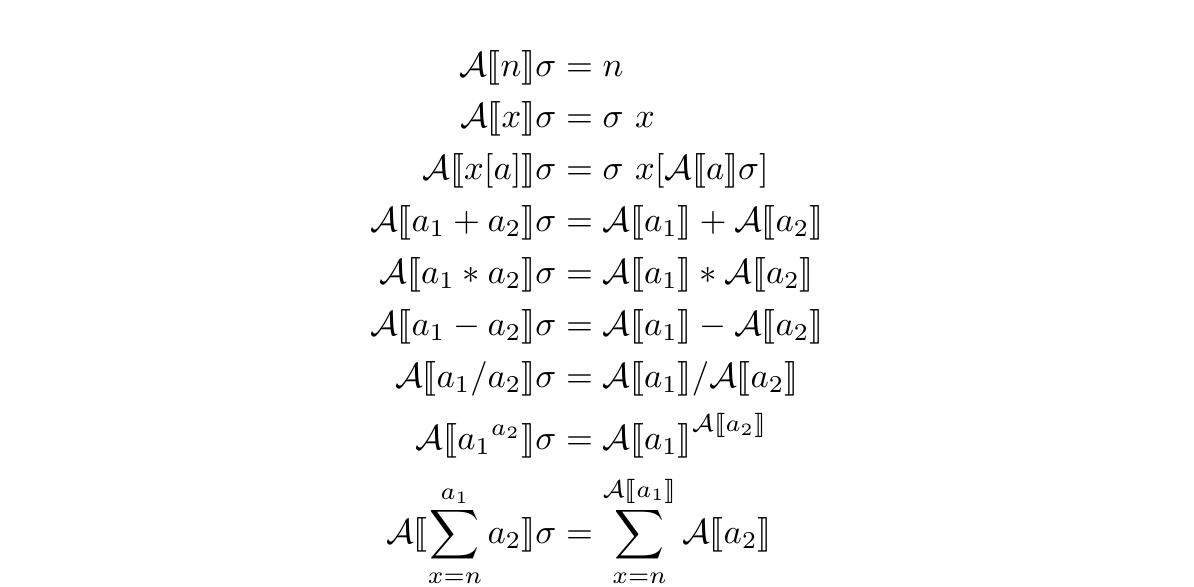}
  \caption{Semantics of arithmetic expressions.}
  \label{fig:arith}
\end{figure}

Similarly, we define a semantic function $\mathcal{B}$ that, given a state, will convert a boolean expression to truth values.
$$\mathcal{B}: bexp \rightarrow state \rightarrow bool$$
In figure~\ref{fig:bool} we define $\mathcal{B}$ using the previous definition of $\mathcal{A}$.
\begin{figure}[htbp]
  \centering
  \includegraphics[width=0.8\linewidth]{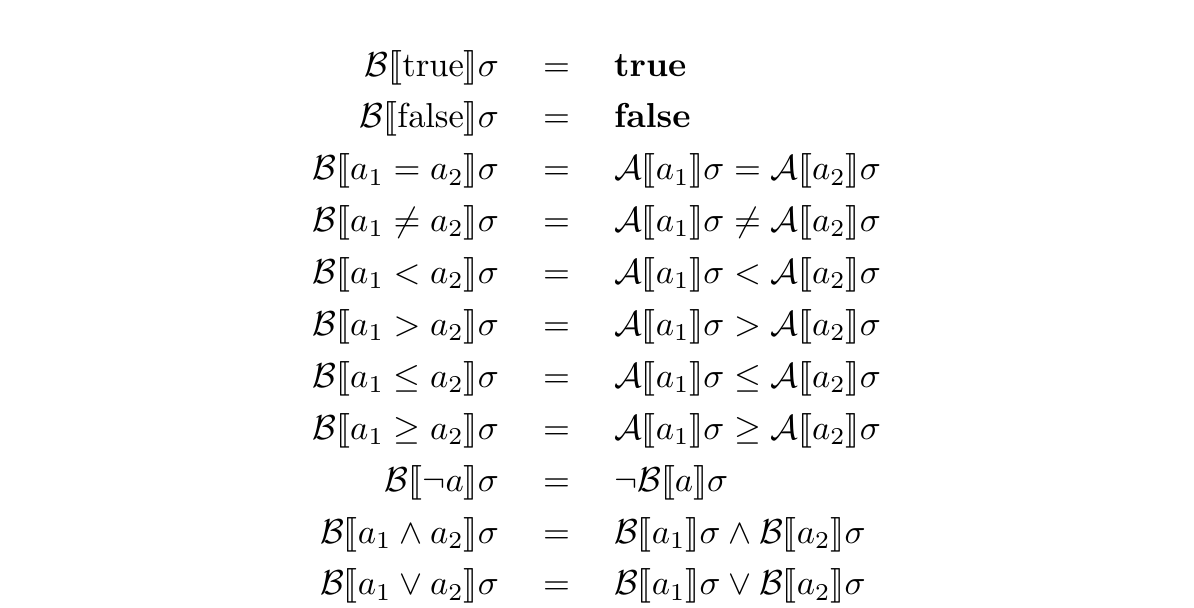}
  \caption{Semantics of boolean expressions.}
  \label{fig:bool}
\end{figure}
%

%%%%% 4.2.2 - Cost of expressions %%%%%
\subsection{Cost of expressions}
%%%%%%%%%%%%%%%%%%%%%%%%%%%%%%%%%%%%%%%

Our semantics will not only evaluate the meaning of a program but also compute the exact cost of executing it. To achieve this, we start by defining semantics for the cost of evaluating arithmetic and boolean expressions, as shown in Figure~\ref{fig:cost}.
\begin{figure}[htbp]
  \centering
  \includegraphics[width=0.8\linewidth]{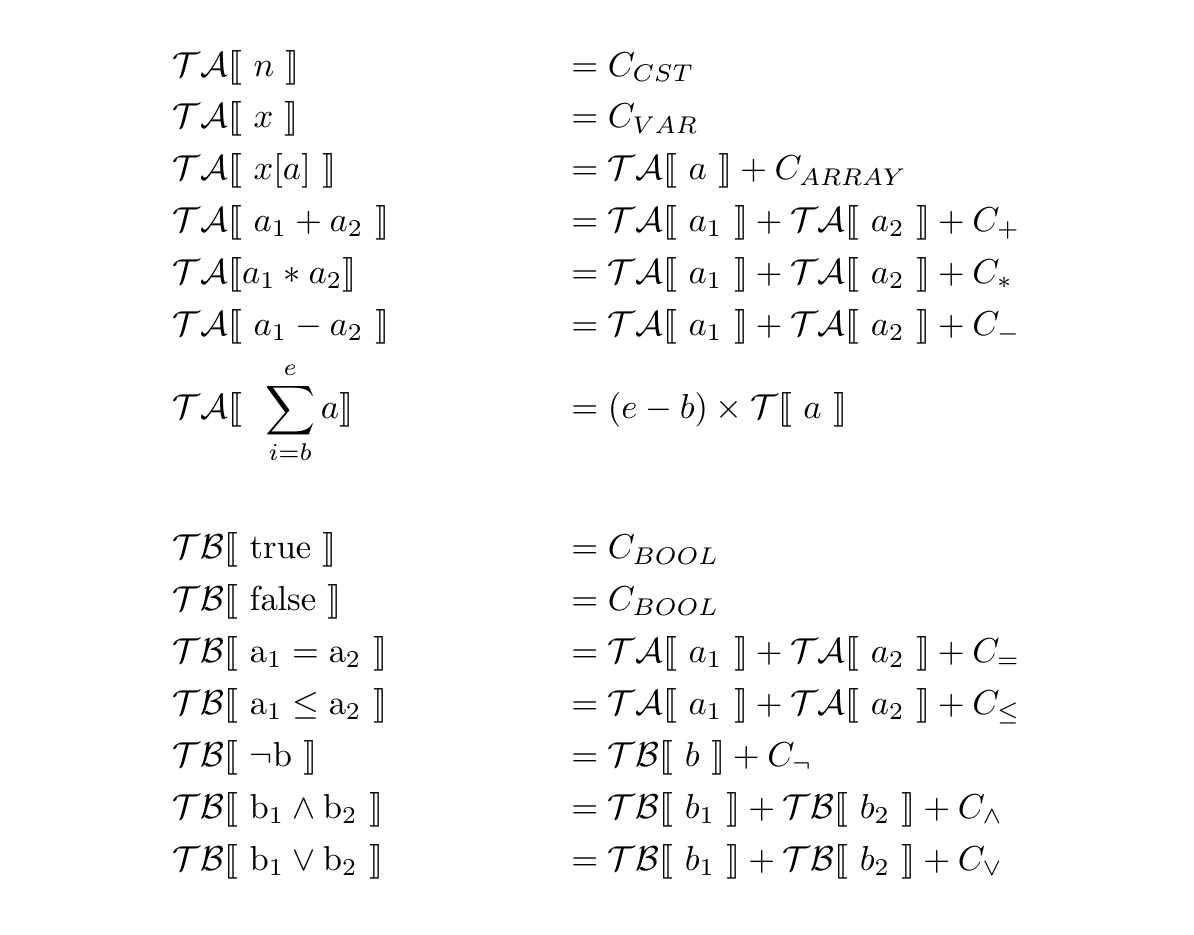}
  \caption{Cost of Arithmetic and Boolean Expressions.}
  \label{fig:cost}
\end{figure}
To evaluate the cost of an expression, we must establish the cost of atomic operations in our language, such as reading from memory or performing basic arithmetic (addition, multiplication, etc) and logic operations (disjunction, negation, etc). 
For example, $C_{CST}$ corresponds to the cost of evaluating a constant, and $C_{VAR}$ is the cost of evaluating a variable. The cost of evaluating a multiplication $\mathcal{T}\mathcal{A}\llbracket a_1 * a_2\rrbracket$ is defined as a sum of the cost of evaluating each of the arithmetic expressions, $a_1$ and $a_2$, plus the cost of the multiplication operation $C_{*}$. 
The cost of evaluating a sum $\sum_{i=b}^{e}a$ is the cost of evaluating $a$ multiplied by the number of times we evaluate it $e - b$.
These rules are simultaneously used by our operational semantics when executing our program, and by our axiomatic semantics, when proving time restrictions statically using our VCG.
For simplicity in the rest of the document, we will consider all the atomic costs as $1$, except in logic definitions and soundness proofs.
%%%%% 4.2.3 - Free Variables and Substitution %%%%%
\subsection{Free Variables and Substitution}
%%%%%%%%%%%%%%%%%%%%%%%%%%%%%%%%%%%%%%%%%%%%%%%%%%%

Before we can define the semantics of a statement we need to first look at two important definitions: Free Variables, and Substitution.

\begin{definition}[Free Variables]
The Free Variables of an arithmetic expression can be defined as the set of variables occurring in an expression that are not bounded by any variable binding operator, such as $\sum$.\\
If we define this as a function $FV: a \rightarrow \{x\}$ we get the definition in figure~\ref{fig:free}.
\begin{figure}[htbp]
  \centering
  \includegraphics[width=0.8\linewidth]{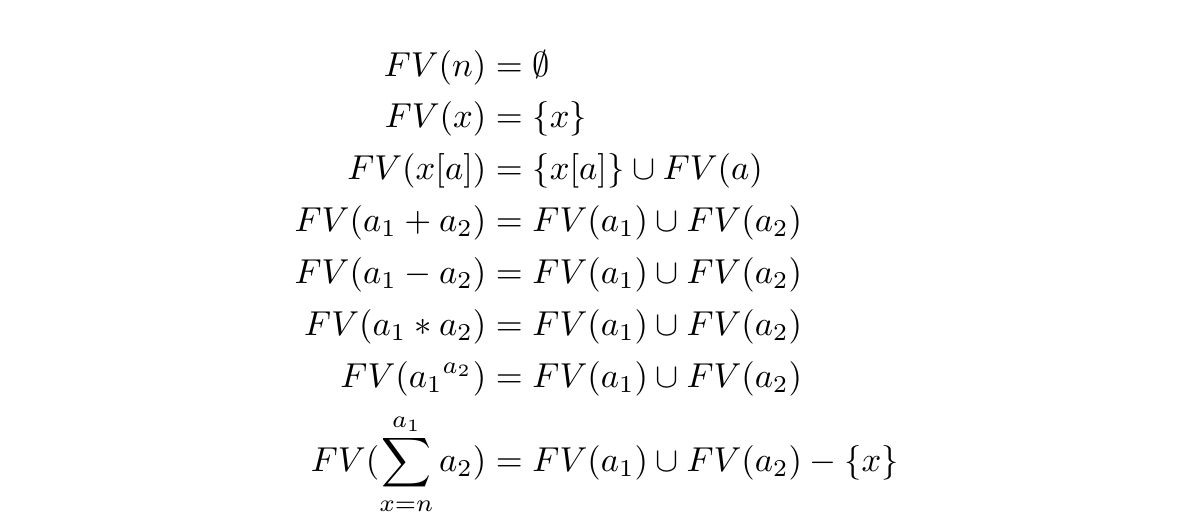}
  \caption{Free Variables of Arithmetic Expressions.}
  \label{fig:free}
\end{figure}
\end{definition}

For example, the free variables of $\sum_{x=0}^{10}x + 2^y + z$ are $\{y,z\}$.

\begin{definition}[Substitution]
A substitution consists of replacing every occurrence of a variable ($x_1$) in an arithmetic expression ($a$) with another arithmetic expression ($a_0$). This is written as $a[a_0/x_1]$ and the substitutions rules are as described in figure~\ref{fig:subst}.
\begin{figure}[htbp]
  \centering
  \includegraphics[width=0.8\linewidth]{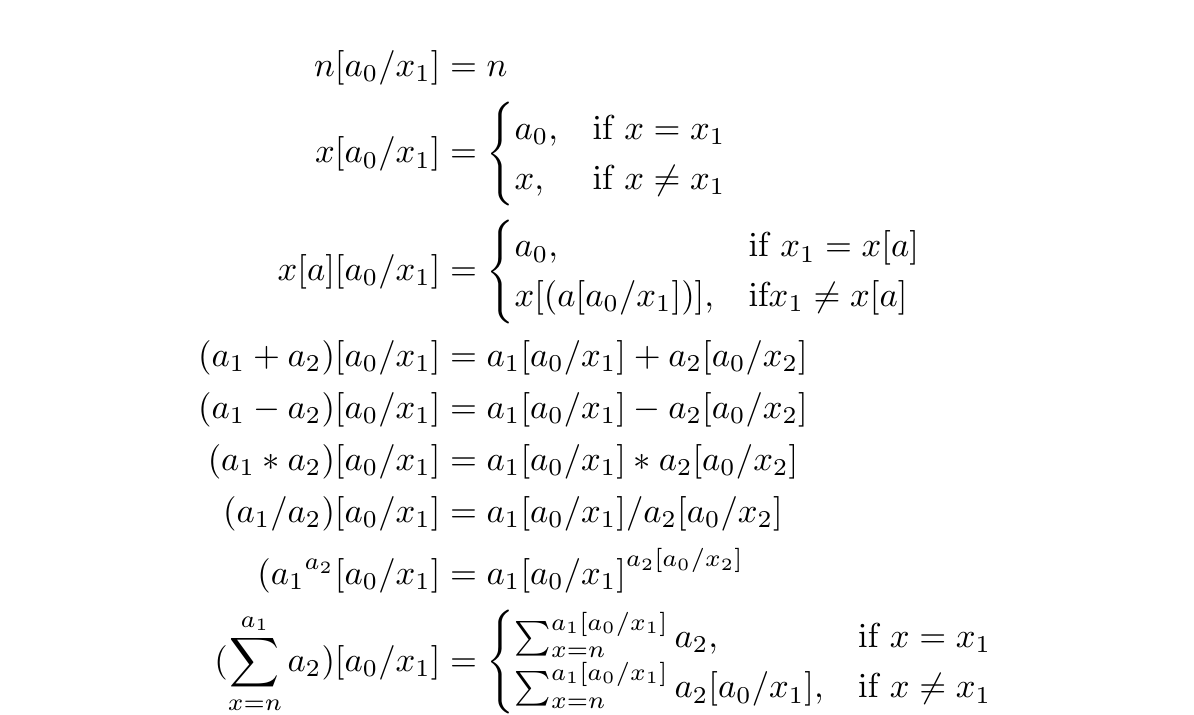}
  \caption{Substitution algorithm for arithmetic expressions.}
  \label{fig:subst}
\end{figure}
\end{definition}

As an example, let us look at the following substitution 
$$(x+4y+3)[z+4/y] = x + 4(z+4) + 3$$

Substitutions might also be applied to states. For example, $\sigma[n/x]$ represents a state that is identical to $\sigma$, with the exception that $x$ takes the value of $n$. Note that, since a state is a mapping from variable to an integer value, $n$ has to always be an integer and never an expression. 

%%%%% 4.2.4 - Evaluating statements %%%%%
\subsection{Evaluating statements}
%%%%%%%%%%%%%%%%%%%%%%%%%%%%%%%%%%%%%%%

In order to prove assertions on the execution time of a program, we need to define a cost-aware semantics. We define a natural operational semantics, where transitions are of the form $$\langle S, \sigma \rangle \rightarrow^t \sigma'$$
meaning that after executing statement $S$ from state $\sigma$ the final state is $\sigma'$ and the execution time was $t$. 

The cost-instrumented operational semantics is defined in Figure~\ref{fig:operational}. 
\begin{figure}[htbp]
  \centering
  \includegraphics[width=0.8\linewidth]{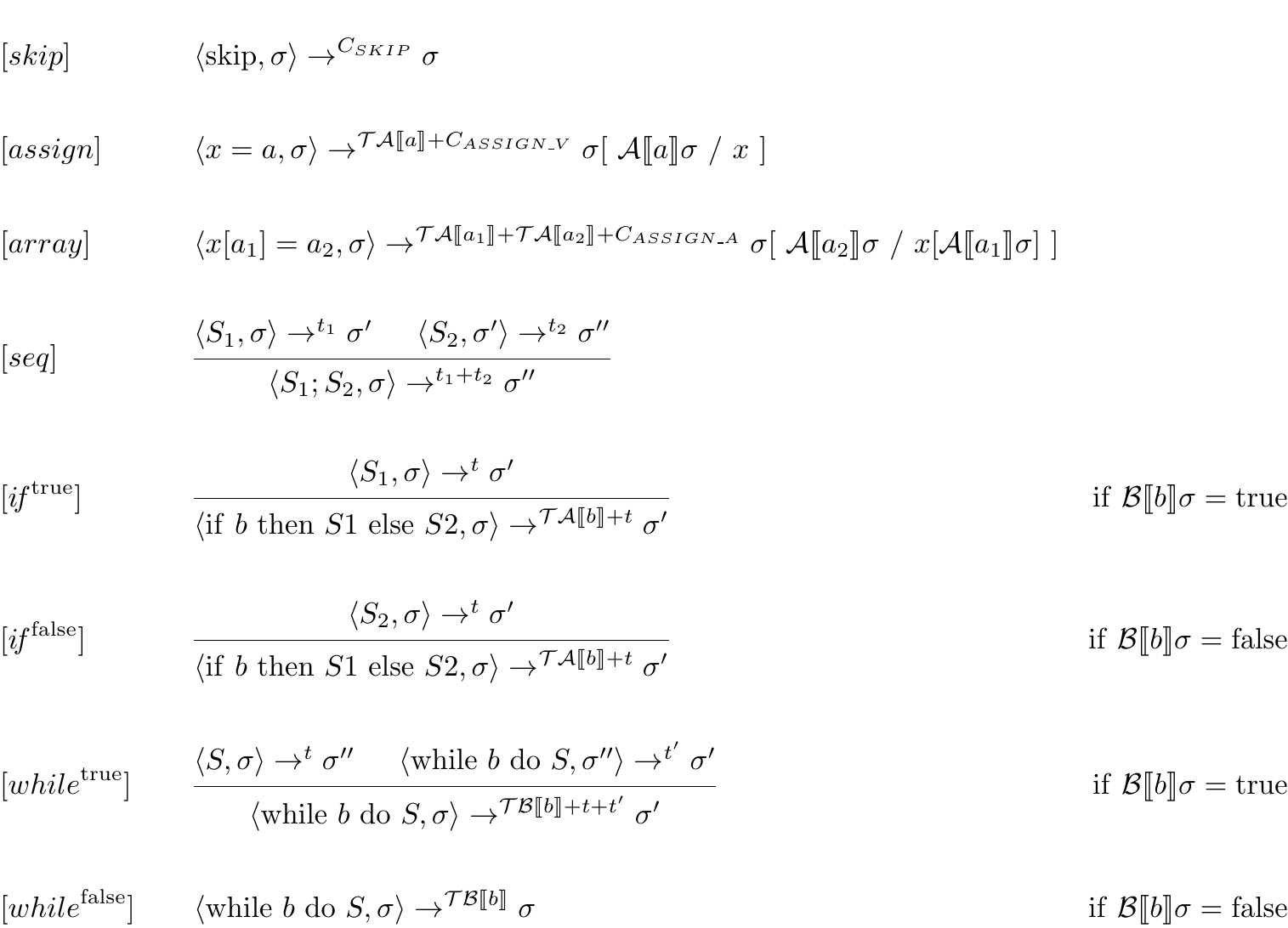}
  \caption{Operational Semantics.}
  \label{fig:operational}
\end{figure}

The skip axiom $[skip]$ says that \textit{skip} does not change the state of the program, and we associate a constant cost for its execution of $C_{SKIP}$.

The assignment axiom $[assign]$ says that executing the assignment $x = a$ in state $\sigma$ will lead to a state $\sigma[\mathcal{A}\llbracket a \rrbracket \sigma / x]$, which means state $\sigma$ where x takes the value of $\mathcal{A}\llbracket a \rrbracket \sigma$. The cost of this expression is defined as the cost of evaluating $a$, $\mathcal{T}\mathcal{A} \llbracket a \rrbracket$, plus the constant cost of an assignment, $C_{ASSIGN\_V}$.

Similarly, the array assignment axiom $[array]$ says that executing $x[a_1] = a_2$ from state $\sigma$ will lead to state $\sigma[\mathcal{A} \llbracket a_2 \rrbracket \sigma / x[\mathcal{A} \llbracket a_2 \rrbracket s\sigma]$, which is a similar state to $\sigma$, except $x[\mathcal{A} \llbracket a_1 \rrbracket \sigma]$ takes the value of $\mathcal{A} \llbracket a_2 \rrbracket \sigma$. This execution cost will be the cost of evaluating $a_1$, $\mathcal{T}\mathcal{A} \llbracket a_1 \rrbracket$, plus the cost of evaluating $a_2$, $\mathcal{T}\mathcal{A} \llbracket a_2 \rrbracket$, plus the constant cost of assigning a value to a position in an array, $C_{ASSIGN\_A}$.

The sequence rule $[seq]$ says that if we want to execute a sequence $S_1;S_2$ from state $\sigma$ we will first execute statement $S_1$ from state $\sigma$, this execution will lead to a certain state $\sigma'$ in $t_1$ time. If we execute $S_2$ from this state $\sigma'$ we will reach a final state $\sigma''$ in $t_2$ time. Therefore executing $S_1;S_2$ from state $\sigma$ will lead to state $\sigma''$ in $t_1+t_2$ time.

We have two conditional rules, $[if^{true}]$ and $[if^{false}]$. In order to decide which of the rules to apply, we must first evaluate $\mathcal{B}\llbracket b \rrbracket \sigma$. If this evaluates to true we apply rule $[if^{true}]$, which means we simply execute statement $S_1$, otherwise we apple rule $[if^{false}]$, which means we execute statement $S_2$. For both rules, the cost of executing the if statement is the cost of executing $S_1$ when true or $S_2$ when false, plus the cost of evaluating $b$, $\mathcal{T}\mathcal{B}\llbracket b \rrbracket$.

We have one rule and one axiom for while, $[while^{true}]$ and $[while^{false}]$. 

If $\mathcal{B} \llbracket b \rrbracket \sigma$ is false, we apply the axiom $[while^{false}]$, that says we will remain in the same state $\sigma$ and the cost is simply the cost of the evaluation of b, $\mathcal{T}\mathcal{B} \llbracket b \rrbracket$. 

If $\mathcal{B} \llbracket b \rrbracket \sigma$  is true, we apply rule $[while^{true}]$, which means we will execute the loop body, $S$, once from state $\sigma$ and this will lead to a state $\sigma''$. Finally, we execute the while loop again, but this time from state $\sigma''$. The cost of the while loop, in this case, is the cost of evaluating b, plus the cost of executing the body, plus the cost of executing the while loop from state $\sigma''$.

%%%%% 4.2.5 - Example %%%%%
\subsection*{Example}
%%%%%%%%%%%%%%%%%%%%%%%%%%%

Let us consider the following program that swaps the values of x and y:
\begin{lstlisting}[language=ml]
z = x; x = y; y = z
\end{lstlisting}
Let the initial state $\sigma_0$ be a state such that $\sigma_0\ x = 3$, $\sigma_0\ y = 10$ and $\sigma_0\ z = 0$.

Then the derivation tree of this program will look like

\begin{prooftree}
    \infer0{\langle z=x, \sigma_0 \rangle \rightarrow^{t_1} \sigma_1}
    \infer0{\langle x=y, \sigma_1 \rangle \rightarrow^{t_2} \sigma_2}
    \infer2{\langle z=x; x=y, \sigma \rangle \rightarrow^{t_1 + t_2} \sigma_2}
    \infer0{\langle y=z , \sigma_2 \rangle \rightarrow^{t_3} \sigma_3}
    \infer2{\langle z=x; x=y; y=z , \sigma_0 \rangle \rightarrow^{t_1 + t_2 + t_3} \sigma_3}  
\end{prooftree}

Where $\sigma_1 = \sigma_0[3 / z]$, $\sigma_2 = \sigma_1[10/x]$ and $\sigma_3 = \sigma_2[3/y]$. 

From the assign rule we get that $t_1 = \mathcal{TA} \llbracket x \rrbracket + C_{ASSIGN\_V} = C_{VAR} + C_{ASSIGN\_V}$, $t_2$ and $t_3$ will be the same. Therefore $t_1 + t_2 + t_3 = 3 \times (C_{VAR} + C_{ASSIGN\_V})$.

%%%%% 3 - Axiomatic Semantics %%%%%
\section{Axiomatic Semantics}\label{chap4:sec:axiomatic}
%%%%%%%%%%%%%%%%%%%%%%%%%%%%%%%%%%%
We will now define a logic with triples in the form $\{ P \} S \{Q | t\}$. This triple can be read as executing $S$ from a state $\sigma$ that validates the precondition $P$ leads to a state that validates postcondition $Q$, and this execution costs at most $t$ to complete. We will call these triples {\em partial correctness assertions}.

Before defining our assertion language, we must distinguish between program variables and logic variables. 
Let us imagine the following triple $\{x=n\} y=x+1\{y>n\}$. In this case, $x$ and $y$ are program variables since they are both present in the statement inside the triple. In our pre and postconditions, we reference variable $n$, which does not appear on the program. This is what we call a \textit{logic variable}.

\paragraph{Assertions}
Our language supports annotations with preconditions and postconditions in order to prove correctness, termination, and time restrictions on our program. Our assertion language will be an extension of the boolean expressions extended with quantifiers over integer variables and implications. 
\begin{align*}
    P  ::&=  true\ |\ false\ \\
    &|\ a_1 = a_2\ 
    |\ a_1 \ne a_2\ 
    |\ a_1 < a_2\ 
    |\ a_1 > a_2\ 
    |\ a_2 \le a_2\ 
    |\ a_1 \ge a_2\\
    &|\ \neg P\ 
    |\ P \land Q\ 
    |\ P \lor Q\ 
    |\ P \rightarrow Q\\
    &|\ \exists x.\ P\ 
    |\ \forall x.\ P
\end{align*}

Note that $a_1$ and $a_2$ are arithmetic expressions extended with logic variables.
As assertions may include logic variables, the semantics require an interpretation function to provide the value of a variable. Given an interpretation, it is convenient to define the states which satisfy an assertion. We will use the notation  $\sigma \models^I P$ to denote that state $\sigma$ satisfies $P$ in interpretation $I$, or equivalently that assertion $P$ is true at state $\sigma$, in interpretation $I$. The definition of $\sigma \models^I P$ is the usual for a first-order language.

Now, note that we are not interested in the particular values associated with variables in an interpretation $I$. We are interested in whether or not an assertion is true at all states for all interpretations. This motivates the following definition. 
\begin{definition}[Validity]
$\models \{ P \} S \{ Q | t \}$
if and only if, for every state $\sigma$ and interpretation $I$ such that $\sigma \models^I P$ and
$\langle S, \sigma \rangle \rightarrow^{t'} \sigma'$ we have that $\sigma' \models^I Q$ and $\mathcal{A}\llbracket t \rrbracket\sigma \ge t'.$
\end{definition}
If $\models \{ P \} S \{ Q | t \}$ we say that the partial correctness assertion $\{ P \} S \{ Q | t \}$ is \emph{valid}.

We now define a set of proof rules that generate valid partial correctness assertions. The rules of our logic are given in figure~\ref{fig:hoare}.
\begin{figure}[htbp]
  \centering
  \includegraphics[width=0.85\linewidth]{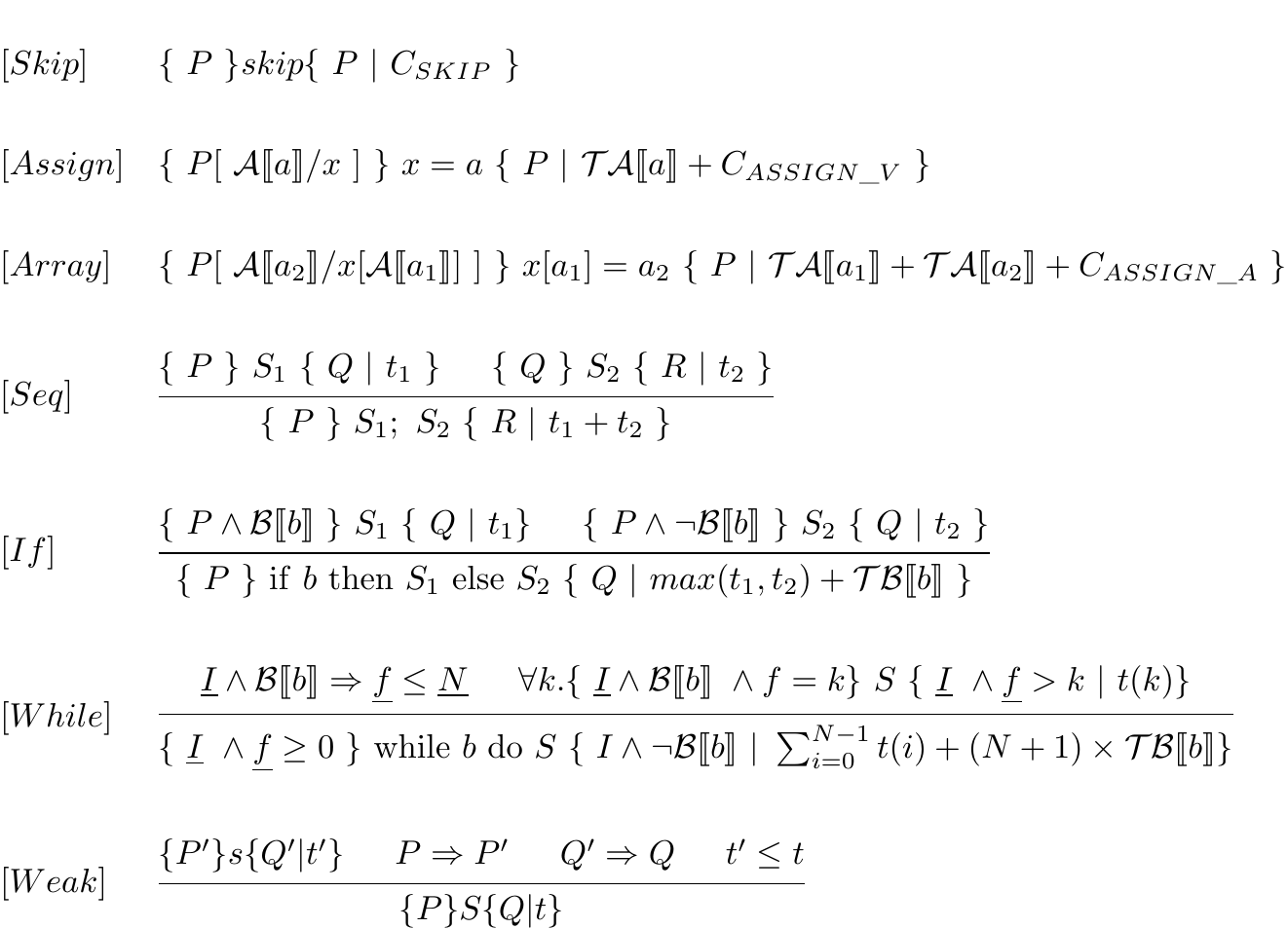}
  \caption{Proof Rules for the IMP Language.}
  \label{fig:hoare}
\end{figure}
%

% SKIP
The \textit{skip axiom} says that if P is true before executing \textit{skip}, then it is also true after its execution. The upper bound for this execution is the constant $C_{SKIP}$.

% ASSIGN
The \textit{assign axiom} says that $P$ will be true after executing $x = a$ if $P[\mathcal{A}\llbracket a \rrbracket /x]$ is true before its execution. The upper bound of this execution is defined as the sum of evaluating $a$ and assigning a simple variable $\mathcal{T}\mathcal{A}\llbracket a \rrbracket + C_{ASSIGN\_V}$.

% ARRAY
The \textit{array axiom} is very similar to the assignment axiom, except in the upper bound we also need to consider the time to evaluate $a_1$.

% SEQ
The \textit{seq rule} says if $P$ is true before executing $S_1;S_2$ then $R$ will be true after the execution, as long as we can prove that 
\begin{itemize}
    \item If $P$ is true before executing $S_1$ then Q is true after and $t_1$ is an upper bound on this execution
    \item If $Q$ is true before executing $S_2$ then $R$ is true after and $t_2$ is an upper bound on this execution
\end{itemize}
The upper bound for the sequence is the sum of both upper bounds, $t_1 + t_2$.

% IF
The \textit{if rule} says if $P$ is true before executing if $b$ then $S_1$ else $S_2$, then Q will be true after the execution, as long as we can prove that
\begin{itemize}
    \item If $P$ and $\mathcal{B}\llbracket b \rrbracket$ are both true before executing $S_1$, then Q will be true after executing it. This execution is upper bounded by $t_1$
    \item If $P$ and $\neg \mathcal{B} \llbracket b \rrbracket$ are both true before executing $S_2$, then $Q$ will be true after executing it. This execution is upper bounded by $t_2$
\end{itemize}

% WHILE
In the \textit{while rule}, $f$, $I$, and $N$ are values provided by an oracle: in an interactive proof system, these can be provided by the user, and in a non-interactive setting, they can be annotated into the program. $I$ is the loop invariant, which must remain true before and after every iteration of the loop. $f$ is a termination function that must start as a positive value, increase with every iteration, but remain smaller than $N$. $N$ is therefore the maximum number of iterations. $t(k)$ is a function describing the cost of the body of the while at iteration $k$.
If the {\em while} loop runs at most $N$ times and, for each $k$ iteration, the cost of the loop body is given by $t(k)$ we have the following upper bound for the while statement: the sum of the cost of all the iterations $\sum_{i=0}^{N-1}t(i)$; plus the sum of evaluating the loop condition, $b$, each time we enter the while body (at most N), plus one evaluation of $b$ when the condition fails, and the loop terminates. This leads to the term $\sum_{i=0}^{N-1}t(i) + (N+1) \times \mathcal{T}\mathcal{B}\llbracket b \rrbracket$ used in the rule for the {\em while} loop.

% WEAK
Besides each rule and axiom for every statement in our language, we need an extra rule that we call \textit{weak rule}.
This rule says
\begin{itemize}
    \item If a weaker precondition is sufficient, then so is a stronger one. If we know $\models \{P\}S\{Q | t\}$ and $P' \rightarrow P$, then $\models \{P'\}S\{Q | t\}$,
    \item If a stronger postcondition is provable, then so is a weaker one. If we know $\models \{P\}S\{Q | t\}$ and $Q \rightarrow Q'$, then $\models \{P\}S\{Q' | t\}$
    \item An upper bound can always be replaced with a bigger one. If we know $\models \{P\}S\{Q|t\}$ and $t' > t$, then $\models \{P\}S\{Q|t'\}$
\end{itemize}

Proof rules should preserve validity in the sense that if the premise is valid, then so is the conclusion. 
When this holds for every rule, we say that the proof system is {\em sound}. For the proposed Hoare logic, it is generally easy to show by induction that every assertion $\{ P \} S \{ Q | t \}$ which is the conclusion of a derivation in the proof system, is a valid assertion. 

The proof of soundness depends on an important property of substitution.

\begin{lemma}
\label{lemma:subst}
Let $P$ be an assertion, $x$ a variable, and $a$ an arithmetic expression. Then for every state $\sigma$
 $$ \sigma \models P[a / x] \quad \textit{iff} \quad \ \sigma[\mathcal{A}\llbracket a \rrbracket\sigma / x] \models P$$
\end{lemma}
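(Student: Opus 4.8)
The plan is to prove the statement by structural induction on the assertion $P$, but this requires as a prerequisite an analogous substitution property at the level of arithmetic expressions, which I would establish first. Concretely, I would first show by induction on the structure of an arithmetic expression $a'$ that
$$\mathcal{A}\llbracket a'[a/x] \rrbracket \sigma \;=\; \mathcal{A}\llbracket a' \rrbracket\,(\sigma[\mathcal{A}\llbracket a \rrbracket\sigma / x]).$$
The base cases $a' \equiv n$ and $a' \equiv y$ follow directly from the definitions in Figures~\ref{fig:subst} and~\ref{fig:arith}, splitting on whether $y \equiv x$; the arithmetic-operator cases ($a_1 + a_2$, $a_1 * a_2$, and so on) are immediate from the induction hypotheses, since substitution and $\mathcal{A}$ both commute with these constructors; the array case $x[a_1]$ uses the hypothesis on $a_1$ together with the reading of a state on array cells; and the summation case $\sum_{y=n}^{a_1} a_2$ is the delicate one, because $y$ is bound. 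Here I would invoke the side condition that the bound variable does not clash with $x$ nor occur free in $a$ (assuming the standard renaming convention), which lets the outer update on $x$ and the iteration over $y$ be permuted, so that the induction hypotheses on $a_1$ and $a_2$ close the case.

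With the arithmetic lemma in hand, the main statement follows by induction on the structure of $P$, after extending substitution homomorphically from arithmetic expressions to boolean expressions and to assertions. The atomic cases $true$ and $false$ are trivial, and each comparison case (e.g.\ $a_1 = a_2$, $a_1 < a_2$, $a_1 \le a_2$) reduces directly to the arithmetic substitution lemma applied to $a_1$ and $a_2$ together with the definition of $\mathcal{B}$. The propositional cases $\neg P$, $P \land Q$, $P \lor Q$, $P \rightarrow Q$ are routine: substitution distributes over the connective and so does the satisfaction relation, so the induction hypotheses combine. The quantifier cases $\exists y.\,P$ and $\forall y.\,P$ again rely on the renaming convention ($y \not\equiv x$ and $y \notin FV(a)$); then $(\exists y.\,P)[a/x] \equiv \exists y.\,(P[a/x])$, and one checks that ranging $y$ over all integer values commutes with the update on $x$, because $y \notin FV(a)$ guarantees $\mathcal{A}\llbracket a \rrbracket\sigma = \mathcal{A}\llbracket a \rrbracket(\sigma[v/y])$ for every integer $v$; the induction hypothesis for $P$ then finishes the case. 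The reasoning does not depend on the particular interpretation of logic variables, so it yields the stated equivalence with respect to $\models$ in the presence of an interpretation $I$ as well.

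The step I expect to be the main obstacle is the treatment of the binding constructs — the $\sum$ in arithmetic expressions and the quantifiers in assertions — where naive substitution risks capturing variables. Making this rigorous forces one either to adopt a Barendregt-style variable convention that bound variables are chosen fresh with respect to $x$ and to the free variables of $a$, or else to define capture-avoiding substitution explicitly and thread the freshness side conditions through the induction. Either way, it is the bookkeeping around binders, rather than any conceptual difficulty, where the real work lies; everything else is a mechanical push of substitution through the term and assertion constructors.
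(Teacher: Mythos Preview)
Your proposal is correct and follows the same approach as the paper, which simply states that the proof is by structural induction on $P$ without further detail. Your additional auxiliary lemma on arithmetic expressions and careful handling of binders are the standard way to flesh out that one-line sketch.
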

\begin{proof}
The proof follows by structural induction on $P$.
\end{proof}

\begin{theorem}[Soundness]
\label{theorem:soundness}
Let $\{ P \} S \{ Q | t \}$ be a partial correctness assertion. Then
$$\vdash \{ P \} S \{ Q | t \} \quad \rightarrow \quad \models \{ P \} S \{ Q | t \}$$
\end{theorem}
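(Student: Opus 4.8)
The plan is to proceed by induction on the structure of the derivation $\vdash \{P\}S\{Q\mid t\}$, with a case for each proof rule in Figure~\ref{fig:hoare}. Fix a state $\sigma$ and interpretation $I$ with $\sigma \models^I P$, and suppose $\langle S,\sigma\rangle \rightarrow^{t'} \sigma'$; I must show $\sigma' \models^I Q$ and $\mathcal{A}\llbracket t\rrbracket\sigma \ge t'$. The base cases are the axioms. For \emph{skip}, the operational rule forces $\sigma' = \sigma$ and $t' = C_{SKIP}$, and $t = C_{SKIP}$, so both conditions are immediate. For \emph{assign}, I would invoke Lemma~\ref{lemma:subst}: from $\sigma \models^I P[\mathcal{A}\llbracket a\rrbracket/x]$ the lemma gives $\sigma[\mathcal{A}\llbracket a\rrbracket\sigma/x] \models^I P$, which is exactly $\sigma' \models^I P$ by the $[assign]$ operational rule; for the cost, $t' = \mathcal{T}\mathcal{A}\llbracket a\rrbracket + C_{ASSIGN\_V}$ matches $t$ exactly (here one should note the operational cost function and the syntactic cost expression in the triple agree, so $\mathcal{A}\llbracket t\rrbracket\sigma = t'$). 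The \emph{array} axiom is analogous, using a version of the substitution lemma for array cells and tracking the extra $\mathcal{T}\mathcal{A}\llbracket a_1\rrbracket$ term.

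For the inductive cases I would chain the induction hypotheses along the structure of the operational derivation. For \emph{seq}, the operational rule $[seq]$ decomposes $\langle S_1;S_2,\sigma\rangle \rightarrow^{t'_1+t'_2}\sigma''$ into $\langle S_1,\sigma\rangle\rightarrow^{t'_1}\sigma'$ and $\langle S_2,\sigma'\rangle\rightarrow^{t'_2}\sigma''$; apply the IH for $S_1$ to get $\sigma'\models^I Q$ and $\mathcal{A}\llbracket t_1\rrbracket\sigma \ge t'_1$, then the IH for $S_2$ to get $\sigma''\models^I R$ and $\mathcal{A}\llbracket t_2\rrbracket\sigma' \ge t'_2$; one then needs that $\mathcal{A}\llbracket t_1+t_2\rrbracket\sigma = \mathcal{A}\llbracket t_1\rrbracket\sigma + \mathcal{A}\llbracket t_2\rrbracket\sigma$ and — a subtle point — that $t_2$ evaluated at $\sigma$ agrees with $t_2$ evaluated at $\sigma'$, which holds provided the cost expressions depend only on logic variables, not on the mutated program state (this is a side condition one should make explicit). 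For the \emph{if} rule, split on $\mathcal{B}\llbracket b\rrbracket\sigma$: in the true case the operational rule applies $[if^{true}]$, so $\langle S_1,\sigma\rangle\rightarrow^{t''}\sigma'$ with $t' = t'' + \mathcal{T}\mathcal{B}\llbracket b\rrbracket$; since $\sigma \models^I P \land \mathcal{B}\llbracket b\rrbracket$, the IH gives the postcondition and $\mathcal{A}\llbracket t_1\rrbracket\sigma \ge t''$, and the conclusion's bound dominates; the false case is symmetric, and the \emph{weak} rule case is a direct application of the IH together with the semantics of implication and the monotonicity of $\ge$.

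The main obstacle is the \emph{while} rule. Here I would first establish, by an inner induction, that every terminating execution $\langle \textbf{while } b \textbf{ do } S,\sigma\rangle \rightarrow^{t'}\sigma'$ unfolds into some finite number $m$ of iterations; using the termination function $f$ (positive, strictly increasing, bounded by $N$) one argues $m \le N$. Then, by induction on $m$ and using the invariant $I$ together with the body triple's IH at each iteration $k$, one shows the loop preserves $I$ (hence $\sigma' \models^I I \land \neg\mathcal{B}\llbracket b\rrbracket$, giving $Q$) and that the accumulated cost is bounded by $\sum_{i=0}^{m-1}\mathcal{A}\llbracket t(i)\rrbracket(\cdot) + (m+1)\mathcal{T}\mathcal{B}\llbracket b\rrbracket$. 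The delicate steps are: matching the operational unfolding's per-iteration cost against the symbolic $t(k)$ — one needs that at the $k$-th iteration the state satisfies whatever condition makes $t(k)$ the right bound, which is where the invariant and the role of $f$ as an iteration counter must be tied together carefully; bounding $m+1 \le N+1$; and again ensuring the cost terms $t(i)$ are insensitive to the intermediate states (or else are evaluated at the correct states), so that $\sum_{i=0}^{m-1}\mathcal{A}\llbracket t(i)\rrbracket\sigma_i \le \sum_{i=0}^{N-1}\mathcal{A}\llbracket t(i)\rrbracket\sigma$ as required by the rule. All other arithmetic — monotonicity of $\sum$, $(m+1)\le(N+1)$ — is routine.
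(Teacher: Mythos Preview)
Your proposal is correct and follows essentially the same approach as the paper: induction on the derivation, case-splitting on the last rule applied, using Lemma~\ref{lemma:subst} for the assignment axioms, and chaining the induction hypotheses through the operational semantics for the compound statements. In fact you are more careful than the paper in two respects: you explicitly flag the state-independence side condition needed for the cost bound in the \emph{seq} case (the paper simply asserts ``$t_1$ and $t_2$ are both constant values''), and you outline an inner induction on the number of iterations for \emph{while}, whereas the paper's treatment of that case is rather informal; you also remember to treat the \emph{weak} rule, which the paper omits.
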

\begin{proof}
The proof follows by induction on the length of the derivation of $\{ P \} S \{ Q | t\}$. 

{\em Case Skip:}
Consider a state $\sigma$, where $\sigma \models P$. According to our semantic, after executing \textit{skip} we are still in state $\sigma$ and the statement \textit{skip} takes $C_{SKIP}$ to execute, $\langle \sigma, \mathrm{skip} \rangle \rightarrow^{C_{SKIP}} \sigma$.
Therefore, the rule for \textit{skip} is sound.
$$\models \{P\} skip \{ P | C_{SKIP}\}$$

{\em Case Assign:}
Consider a state $\sigma$ where $\sigma \models P[\mathcal{A}\llbracket a \rrbracket / x]$. 
We have $ \langle \sigma, x = a \rangle \rightarrow^{\mathcal{T}\mathcal{A} \llbracket a \rrbracket + C_{ASSIGN\_V}} \sigma[\mathcal{A}\llbracket a \rrbracket\sigma\ /\ x]$, that is, after the statement is executed we are in state $\sigma[\mathcal{A}\llbracket a \rrbracket\sigma\ /\ x]$ and it costs $\mathcal{T}\mathcal{A} \llbracket a \rrbracket + C_{ASSIGN\_V}$. By the substitution lemma~\ref{lemma:subst} we have that $\sigma[\mathcal{A}\llbracket a \rrbracket\sigma\ /\ x] \models P$.
Therefore, the rule for assignment is sound.
$$ \models \{\ P[\ \mathcal{A} \llbracket a \rrbracket / x\ ]\ \}\ x = a\ \{\ P\ |\ \mathcal{T}\mathcal{A}\llbracket a \rrbracket + C_{ASSIGN\_V}\ \}$$

{\em Case Array:}
Consider a state $\sigma$ where $\sigma \models P[\mathcal{A}\llbracket a_2 \rrbracket / x[\mathcal{A} \llbracket a_1 \rrbracket]]$. 
We have 
$$ \langle \sigma, x[a_1] = a_2 \rangle \rightarrow^{\mathcal{T}\mathcal{A} \llbracket a \rrbracket + C_{ASSIGN\_A}} \sigma[\mathcal{A}\llbracket a_2 \rrbracket\sigma/x[\mathcal{A} \llbracket a_1 \rrbracket\sigma]]$$
That is, after the statement is executed we are in state 
$$\sigma[\mathcal{A}\llbracket a_2 \rrbracket\sigma/x[\mathcal{A} \llbracket a_1 \rrbracket\sigma]]$$ 
and the cost is $\mathcal{T}\mathcal{A} \llbracket a \rrbracket + C_{ASSIGN\_A}$.
By lemma~\ref{lemma:subst} we have that  $\sigma[\mathcal{A}\llbracket a_2 \rrbracket\sigma/x[\mathcal{A} \llbracket a_1 \rrbracket\sigma]] \models P$.
Therefore, the rule for assignment to an array is sound.

$$\models \{\ P[\ \mathcal{A} \llbracket a_2 \rrbracket / x[\mathcal{A} \llbracket a_1 \rrbracket]\ ]\ \}\ x[a_1] = a_2\ \{\ P\ |\ \mathcal{T}\mathcal{A}\llbracket a_1 \rrbracket + \mathcal{T}\mathcal{A}\llbracket a_2 \rrbracket + C_{ASSIGN\_A}\ \}$$

{\em Case Sequence:}
Assume $\models \{P\}S_1\{Q|t_1\}$ and $\models \{Q\}S_2\{R|t_2\}$. Lets consider a state $\sigma$ that validates P, $\sigma \models P$.
From our semantics we have that, there is a state $\sigma_1$ such that $\langle S_1, \sigma \rangle \rightarrow^{t} \sigma_1$. Since we have that $\models \{P\}S_1\{Q|t_1\}$, then $\sigma_1 \models Q$, and $\mathcal{A} \llbracket t_1 \rrbracket \sigma \ge t$. 
We also have that $\langle S_2, \sigma_1 \rangle \rightarrow^{t'} \sigma_2$, and since we assume $\models \{Q\}S_2\{R|t_2\}$, then $\sigma_2 \models R$, and $\mathcal{A} \llbracket t_2 \rrbracket \sigma_1 \ge t'$. 
Since $t_1$ and $t_2$ are both constant values, $\mathcal{A} \llbracket t_2 \rrbracket \sigma_1 = \mathcal{A} \llbracket t_2 \rrbracket \sigma$, which means $\mathcal{A} \llbracket t_1 + t_2 \rrbracket \sigma \ge t + t'$.
Therefore the proof for \textit{sequence} is sound.
$$\models \{P\}S_1;S_2\{Q| t_1 + t_2 \}$$

{\em Case If:}
Assume $\models \{ P \land b \} S_1 \{ Q | t_1\}$ and $\models \{P \land \neg b\} S_2 \{Q | t_2\}$.
Suppose $\sigma \models P$. 

If $\sigma \models b$, then $\sigma \models P \land b$ so, assuming $\langle S_1, \sigma \rangle \rightarrow^{t} \sigma_1$, we have that $\sigma_1 \models Q$, and $\mathcal{A} \llbracket t_1 \rrbracket \sigma \ge t$.

If $\sigma \models \neg b$, then $\sigma \models P \land \neg b$ so, assuming $\langle S_2, \sigma \rangle \rightarrow^{t'} \sigma_2$, we have that $\sigma_2 \models Q$, and $\mathcal{A} \llbracket t_2 \rrbracket \sigma \ge t'$.

Since $\mathcal{A} \llbracket \mathrm{max}(t_1,t_2) + \mathcal{T}\mathcal{B}\llbracket b \rrbracket \rrbracket \sigma \ge \mathcal{T}\mathcal{B}\llbracket b \rrbracket + t$ and $\mathcal{A} ~\llbracket \mathrm{max}(t_1,t_2) + \mathcal{T}\mathcal{B}\llbracket b \rrbracket \rrbracket \sigma \ge \mathcal{T}\mathcal{B}\llbracket b \rrbracket + t'$. The rule for \textit{if} is sound.

$$\models \{\ P\ \}\ \mathrm{if}\ b\ \mathrm{then}\ S_1\ \mathrm{else}\ S_2\ \{\ Q\ |\ max(t_1,t_2) + \mathcal{T} \mathcal{B} \llbracket b \rrbracket\ \}$$
 
{\em Case While:}
Assume $I \land \mathcal{B} \llbracket b \rrbracket \Rightarrow f \le N$ and $\models \{\ I \land \mathcal{B} \llbracket b \rrbracket\ \land f=k\}\ S\ \{\ I\ \land f > k\ |\ t(k)\}$. 
Considering a state $\sigma$ such that $\sigma \models I \land f \ge 0$ and $\langle \mathrm{while}\ b\ \mathrm{do}\ S, \sigma\rangle \rightarrow^{t} \sigma_1$.

If $\sigma \models \neg \mathcal{B} \llbracket b \rrbracket$ then $\sigma_1 = \sigma$, therefore $\sigma_1 \models \neg \mathcal{B} \llbracket b \rrbracket \land I$ and the cost is $t = \mathcal{T}\mathcal{B}\llbracket b \rrbracket$.

If $\sigma \models \mathcal{B} \llbracket b \rrbracket$, we have that $\sigma \models \mathcal{B} \llbracket b \rrbracket \land I$.
Considering a state $\sigma_2$ such that $\langle S, \sigma \rangle \rightarrow^{t'} \sigma_2$ and $\langle \mathrm{while}\ e\ \mathrm{do}\ S, \sigma_2\rangle \rightarrow^{t''} \sigma_1$. Applying our initial assumption we get $\sigma_2 \models I \land f > k$, and $\mathcal{A} \llbracket t(k) \rrbracket \sigma \ge t'$.
Finally by applying the induction hypothesis we have that $\sigma_1 \models \neg \mathcal{B} \llbracket b \rrbracket \land I$.

Given the function f provided by the user and taking into account our assumptions, we know that the program will eventually stop and at most, it will iterate $N+1$ times.
For each iteration the cost is $\mathcal{T} \mathcal{B} \llbracket b \rrbracket + t'$ and $\mathcal{T} \mathcal{B} \llbracket b \rrbracket$ for the last time, when $b$ is false and the while terminates.
Therefore the cost of this program is always smaller or equal to 
$$\mathcal{T} \mathcal{B} \llbracket b \rrbracket + \sum_{i=0}^{N} (\mathcal{T} \mathcal{B} \llbracket b \rrbracket + t')$$
Since $t(k) \ge t'$, our upper bound is
$$\sum_{i=0}^{N+1} \mathcal{T} \mathcal{B} \llbracket b \rrbracket + \sum_{i=0}^{N} t(k)$$

Therefore the rule for \textit{while} is sound.

$$\models \{\ I\ \land f \ge 0\ \}\ \mathrm{while}\ b\ \mathrm{do}\ S\ \{\ I \land \neg \mathcal{B} \llbracket b \rrbracket\ |\ \sum_{i=0}^{N} t(i) + \sum_{i=0}^{N+1} \mathcal{T}\mathcal{B}\llbracket b \rrbracket\}$$
\end{proof} 

%%%%% 4.2.5 - Example %%%%%
\subsection*{Example: Division Algorithm}
%%%%%%%%%%%%%%%%%%%%%%%%%%%

To illustrate our logic, let us apply our rules to the division algorithm, as presented in figure~\ref{fig:division}.

Since our example contains a while loop we will have to define, the invariant $I$ as $x = q \times y + r \land y \ge 0 \land r \ge 0$, the variant $f$ as $x-r$, the maximum number of iterations $N$ as $x$, and the function of cost $t(k)$ as $fun\ k \rightarrow 10$.

Let us consider $S_w \equiv r=r-y;q=q+1$.

By the \textit{assign rule} we know
\begin{equation}
    \vdash \{ (I \land f > k)[q+1/q] \}q=q+1\{ I \land f > k | \mathcal{TA} \llbracket q+1 \rrbracket + 1 \} \label{chap4:eq-1}
\end{equation}

Where
$$\mathcal{TA} \llbracket q+1 \rrbracket = \mathcal{TA} \llbracket q \rrbracket + \mathcal{TA} \llbracket 1 \rrbracket + 1 = 3$$
$$(I \land f > k)[q+1/q] \equiv x = (q+1)\times y+r \land y \ge 0 \land r \ge 0 \land x-r>k$$

Again by the \textit{assign rule} we know
\begin{equation}
  \vdash \{ (I \land f > k)[q+1/q][r-y/r] \}r=r-y\{ (I \land f > k)[q+1/q] | \mathcal{TA} \llbracket r-y \rrbracket + 1 \}  \label{chap4:eq-2}
\end{equation}

Where
\begin{align*}
&\mathcal{TA} \llbracket r-y \rrbracket = \mathcal{TA} \llbracket r \rrbracket + \mathcal{TA} \llbracket y \rrbracket + 1 = 3\\
&(I \land f > k)[q+1/q][r-y/r] \equiv x = (q+1) \times y+r-y \land y \ge 0 \land r-y \ge 0 \land x-r + y>k 
\end{align*}

Since \ref{chap4:eq-1}, and \ref{chap4:eq-2}, by the \textit{seq rule}, we have
\begin{equation}
\vdash \{ (I \land f > k)[q+1/q][r-y/r] \}S_w\{ I \land f > k | 8 \}  
\end{equation}
\begin{align*}
I \land y \le r \land f=k \quad \equiv& \quad x = q \times y + r \land y \ge 0 \land r \ge 0 \land y \ge r \land x-r=k\\
&\rightarrow \quad x = (q+1) \times y + r - y \land y \ge 0 \land r \ge y \land x-r+y \ge k\\
&\equiv \quad (I \land f>k)[q+1/q][r-y/r]
\end{align*}

Given this result, and since $10 \ge 8$, by the \textit{weak rule} we get
\begin{equation}
    \vdash \{I \land y \le r \land f = k \} r=r-y;q=q+1 \{I \land f>k | 10 \}
\end{equation}

It also apparent that 
$$I \land y \le r \rightarrow r \ge 0 \rightarrow x- r \le x$$

We are now ready to apply the \textit{while rule}, and we get
$$\vdash \{I \land f \ge 0 \} S \{ I \land \neg (y \le r) | \sum_{i=0}^{x-1}10 + (x+1) \times \mathcal{TB} \llbracket y \le r \rrbracket \}$$

Where
$$\sum_{i=0}^{x-1}10 + \mathcal{TB} \llbracket y \le r \rrbracket = 10 \times x + (x+1) \times 3 = 13x + 3$$

Since $P \rightarrow I \land f \ge 0$, $I \land \neg (y \le r)$, and $20x+5 \ge 13x +3$, by the \textit{weak rule} we get
$$\vdash \{P\} S \{Q | T \}$$

%%%%% 4 - Verification Conditions Generation %%%%%
\section{Verification Conditions Generation}\label{chap4:sec:vcg}
%%%%%%%%%%%%%%%%%%%%%%%%%%%%%%%%%%%%%%%%%%%%%%%%%%
To implement a verification system based on our logic, we define a VCG based on the weakest-precondition algorithm.

 Given a Hoare triple $\{P\} S \{ Q | t\}$, we start by identifying the weakest precondition of $S$ given $Q$ as the postcondition. In figure~\ref{fig:wp} we define the wpc function which receives a statement and a postcondition and returns a tuple $(wp, t_S)$, where $wp$ is the weakest precondition of the program and $t_S$ is an upper bound on the program's cost. The weakest precondition is calculated  by a standard algorithm  such as the one presented in~\cite{almeida2011}. Let us focus on the second value of our tuple, which estimates an upper bound for the program. This upper bound will be equivalent to the ones presented in figure~\ref{fig:arith}. The upper-bound for \textit{skip}, \textit{assignment}, and \textit{array} are very straightforward since they are the same as the exact cost calculated by our operational semantics in section~\ref{chap4:sec:formal-operational}. To calculate the upper bound of a sequence $S_1; S_2$ we need to both calculate the upper bound of $S1$ given by $t_1$ and the upper bound of $S_2$, given by $t_2$. The upper bound for the sequence is then defined as the sum of both upper bounds $t_1 + t_2$.
 In the case of \textit{if}, we calculate the upper bound of each conditional statement $S1$ and $S_2$. The upper bound for if is defined as the max between both upper bounds $max(t_1, t_2)$ plus the cost of evaluating $b$, $\mathcal{TB}\llbracket b \rrbracket$.
 Finally, looking at the while, the upper bound is defined as the sum of evaluating the while condition N+1 times and the sum of the cost of each execution of the \textit{while} body, given by t(k).
\begin{figure}[htbp]
  \centering
  \includegraphics[width=0.8\linewidth]{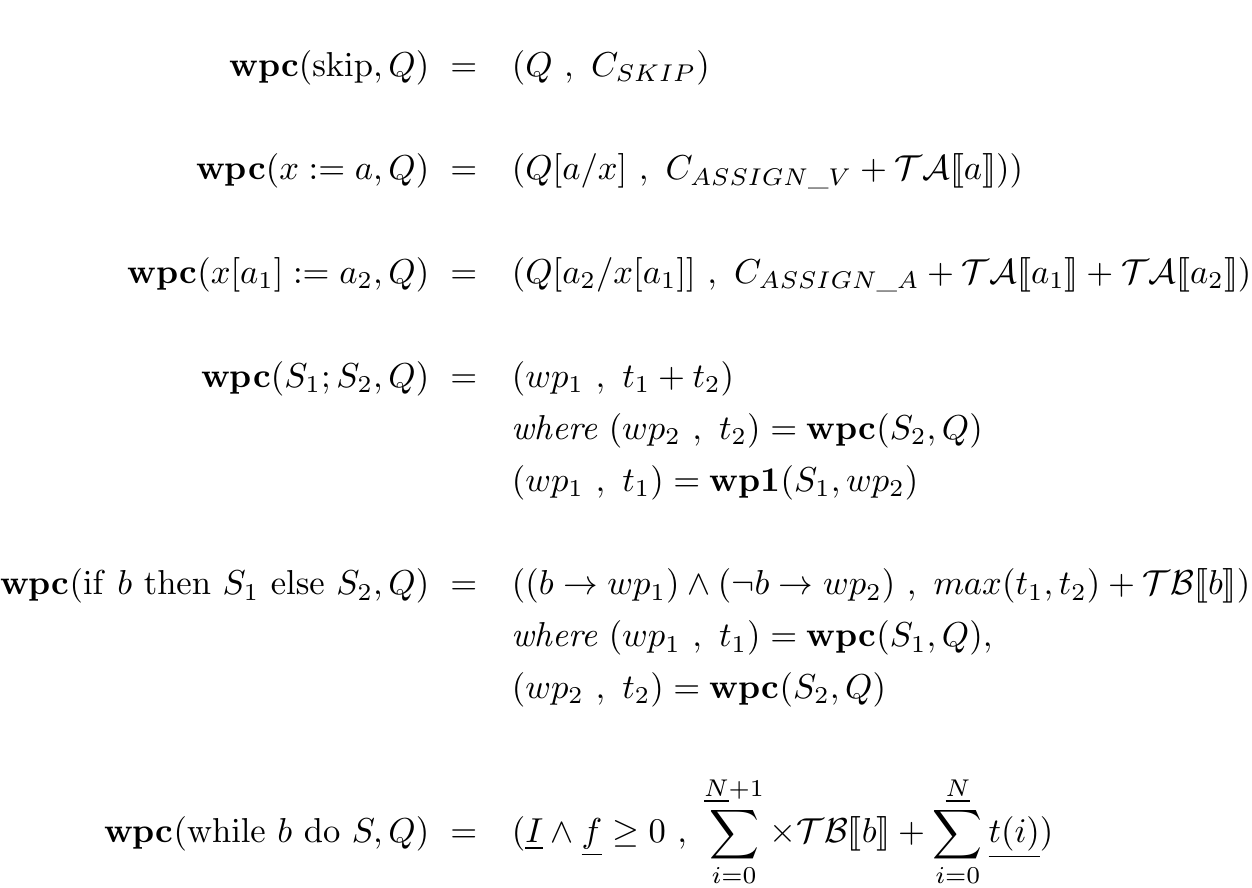}
  \caption{ Weakest Precondition Algorithm.}
  \label{fig:wp}
\end{figure}

The program is correct if $P$ implies the weakest precondition. If $t \ge t_S$, we also guarantee that $t$ is, in fact, an upper bound on the execution of $S$.
Even though these conditions are enough to prove both correction and cost upper-bound of our program, we still need extra VCs to handle while loops. 
The VC function in figure~\ref{fig:vc} receives a program and a postcondition. It returns a set of purely mathematical statements (the verification conditions) needed to handle loops and guarantee termination. Let us take a more in-depth look to the \textit{while} case, $VC($ while $b$ do $S, Q)$. To prove the while statement, we need the oracle to provide some extra information. All the values provided need to be demonstrated. Firstly, as seen previously, the weakest precondition of a while is the invariant being true and the variant being positive. The loop invariant is assured by the following VC
$$\forall k. I \rightarrow wp_S$$
We also need to guarantee the termination of the loop. For this we prove that the variant ($f$) always increases $\forall k, I \land b \land f=k \rightarrow wp_S$ and that the variant is always limited by $N$, $I \land \mathcal{B}\llbracket b \rrbracket \rightarrow f \le N$. Lastly, we prove the postcondition $I \land \neg b \rightarrow Q$, and call $VC$ recursively for the body.
% seq and If
The $VC$ function applied to \textit{sequence} and \textit{if}, is just a recursive call of the function to their sub-statements.
% skip, assign, and array
Finally for \textit{skip}, \textit{assign}, and \textit{array} there are no extra VC.
The function $VCG$ is the glue that puts all these VCs together. The first condition $P \rightarrow wp$ implies the correctness of our program. Secondly, we call $VC(S, Q)$ to potentially deal with loop invariants, termination, and cost. The last condition states that $t \ge t_S$ proves if the user-provided upper bound is indeed valid.
% What do we do with the VCs after
These verification conditions are then passed to an interactive prover, such as EasyCrypt \cite{barbosa21}, which attempts to prove them automatically. If it fails, some advice is needed from the user.
\begin{figure}[htbp]
  \centering
  \includegraphics[width=0.8\linewidth]{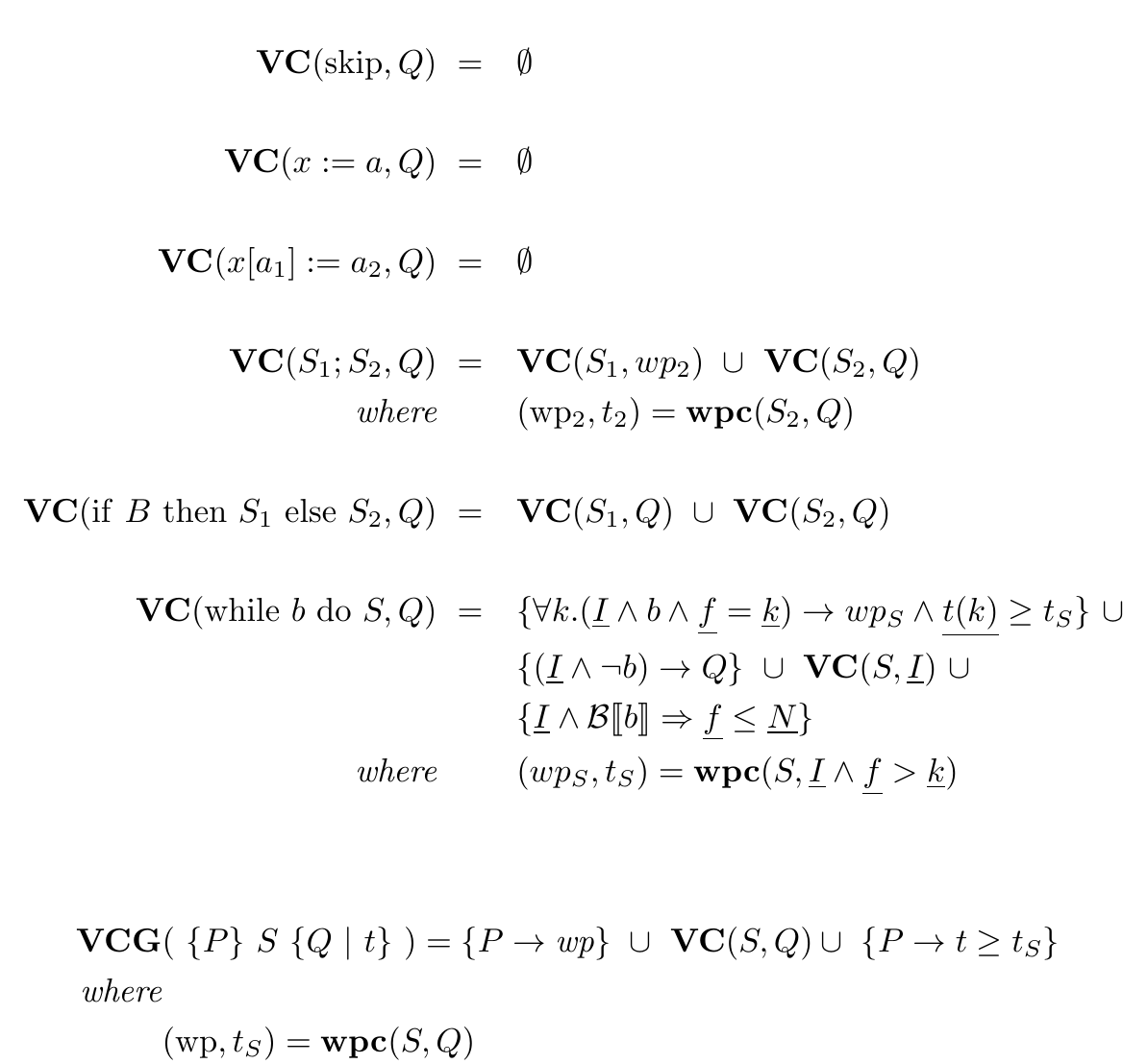}
  \caption{VC and VCG functions.}
  \label{fig:vc}
\end{figure} 

We need to ensure that this algorithm is actually sound with our Hoare Logic. For this, we need to prove theorem~\ref{theor:soundness-vcg} that states that the VCG algorithm is sound if the VC generated implies of the Hoare triple we wish to prove, $\models VCG(\{P\}Q\{R\})\ \rightarrow\ \vdash \{P\}Q\{R\}$. This assures that by proving our VCs, we are actually proving our triple. If we also want to show that if a triple is valid, then we can also validate every VC generated by the VCG algorithm, then we are looking at the completeness theorem~\ref{theor:completeness-vcg}, $\vdash \{P\}Q\{R\}\ \rightarrow\ \models VCG(\{P\}Q\{R\})$. 
To prove both soundness and completeness, we then need to prove
$$\models VCG(\{P\}Q\{R\})\ \textit{iff}\ \vdash \{P\}Q\{R\}$$
\begin{proof}
We prove $\Rightarrow$ by induction on the structure of Q and $\Leftarrow$ by induction in the derivation of $\vdash \{P\}Q\{R\}$.

\textit{Case Skip:}
$$VCG(\{P\} skip \{Q|T\}) = \{P \rightarrow Q \} \cup \{T \ge C_{SKIP}\}$$

By the \textit{skip axio}m, we know $\vdash \{Q\} skip \{Q|C_{SKIP}\}$. \\
By the weak rule, knowing $\vdash \{Q\} skip \{Q|C_{SKIP}\}$, $P \rightarrow Q $, and $T \ge C_{SKIP}$, we have
$$\vdash \{P\} skip \{Q | T \}$$

\textit{Case Assign:}

$$VCG(\{P\} x=a \{Q|T\}) = \{P \rightarrow Q[\mathcal{A} \llbracket a \rrbracket / x] \} \cup \{T \ge C_{ASSIGN\_V}\}$$

By the \textit{assign axiom} $\vdash \{Q[\mathcal{A} \llbracket a \rrbracket / x]\} x=a \{Q|C_{ASSIGN\_V}\}$.
By the weak rule, knowing $\vdash \{Q[\mathcal{A} \llbracket a \rrbracket / x]\} x=a \{Q|C_{ASSIGN\_V}\}$, $P \rightarrow Q[\mathcal{A} \llbracket a \rrbracket / x]$, and $T \ge C_{ASSIGN\_V}$, we have
$$\vdash \{ P \} x=a \{ Q | T \}$$

\textit{Case Array:}

$$VCG(\{P\} x[a_1]=a_2 \{Q|T\}) = \{P \rightarrow Q[\mathcal{A} \llbracket a_2 \rrbracket / x[\mathcal{A} \llbracket a_1 \rrbracket]] \} \cup \{T \ge C_{ASSIGN\_A}\}$$

By the \textit{array axiom} $\vdash \{Q[\mathcal{A} \llbracket a_2 \rrbracket / x[\mathcal{A} \llbracket a_1 \rrbracket]]\} x[a_1]=a_2 \{Q|C_{ASSIGN\_A}\}$.
By the weak rule, knowing $\{Q[\mathcal{A} \llbracket a_2 \rrbracket / x[\mathcal{A} \llbracket a_1 \rrbracket]]\} x[a_1]=a_2 \{Q|C_{ASSIGN\_A}\}$, $P \rightarrow Q[\mathcal{A} \llbracket a_2 \rrbracket / x[\mathcal{A} \llbracket a_1 \rrbracket]]$, and $T \ge C_{ASSIGN\_A}$, we have
$$\vdash \{ P \} x[a_1]=a_2 \{ Q | T \}$$

\textit{Case seq:}

Induction Hypothesis: 
$$ \models VCG(\{P\} S_1 \{R | T_1 \} \rightarrow \vdash  \{P\} S_1 \{R|T_1\})$$
$$ \models VCG(\{P\} S_2 \{R | T_2 \} \rightarrow \vdash \{P\} S_2 \{R|T_2\})$$

Let us consider:
\begin{itemize}
    \item $wp_2, t_2 = wpc(S_2,Q)$
    \item $wp_1, t_1 = wpc(S_1, wp_2)$
    \item $wpc(S1;S2 , Q) = (wp_1, t_1 + t_2)$
\end{itemize}

$$\models VCG(\{P\} S_1; S_2 \{R | T \} = \{P \rightarrow wp_1\} \cup \{T \ge t_1 + t_2\} \cup VC(S1;S2, R)$$
Where $VC(S_1;S_2, R) = VC(S_1, wp_2) \cup VC(S_2, R)$

Assuming $\models VCG(\{P\} S_1; S_2 \{R | T\}$

\begin{itemize}
    \item since we know $P \rightarrow wp_1$, $t_1 \ge t_1$, and $\models VC(S_1, wp_2)$, then $$\models VCG(\{P\} S_1 \{wp_2 | t_1 \}$$
    \item since we know $wp_2 \rightarrow wp_2$, $t_2 \ge t_2$, and $\models VC(S_2, R)$, then $$\models VCG(\{wp_2\} S_2 \{R | t_2 \}$$
\end{itemize}

By our Induction Hypothesis, we have $\vdash \{P\} S_1 \{wp_2|t_1\})$, and $\vdash \{wp_2\} S_2 \{R|t_2\})$.
By the seq rule
$$\vdash \{P\} S_1; S_2 ~\{R | t_1 + t_2\}$$
Since $T \ge t_1 + t_2$, by the weak rule
$$\vdash \{P\} S_1;S_2 \{R | T\}$$

\textit{Case if:}

Induction Hypothesis:
$$ \models VCG(\{P\} S_1 \{Q | t_1 \} \rightarrow \vdash \{P\} S_1 \{Q|t_1\})$$
$$ \models VCG(\{P\} S_2 \{Q | t_2 \} \rightarrow \vdash \{P\} S_2 \{Q|t_2\})$$

Let us consider
\begin{itemize}
    \item $wp_1, t_1 = wpc(S_1,Q)$
    \item $wp_2, t_2 = wpc(S_2,Q)$
    \item $wpc($if $b$ then $S_1$ else $S_2, Q) = (\mathcal{B} \llbracket b \rrbracket \rightarrow wp_1 \land \neg \mathcal{B} \llbracket b \rrbracket \rightarrow wp_2, max(t_1,t_2)+\mathcal{TB}\llbracket b \rrbracket)$
\end{itemize}

\begin{align*}
VCG(\{P\}\mathrm{if}\ b\ \mathrm{then}\ S_1\ \mathrm{else} S_2\{Q|T\}) = 
&\{P \rightarrow (b \rightarrow wp_1 \land \neg b \rightarrow wp_2)\}\ \cup \\
&\{T \ge max(t_1,t_2) + \mathcal{TB}\llbracket b \rrbracket\}\ \cup \\
&VC(\mathrm{if}\ b\ \mathrm{then}\ S_1\ \mathrm{else}\ S_2, Q)
\end{align*}

Where $VC(\mathrm{if}\ b\ \mathrm{then}\ S_1\ \mathrm{else} S_2, Q) = VC(S_1, Q) \cup VC(S_2,Q)$

Assuming $\models VCG(\{P\}\mathrm{if}\ b\ \mathrm{then}\ S_1\ \mathrm{else}\ S_2\{Q|T\})$

\begin{itemize}
    \item Since $P \land \mathcal{B} \llbracket b \rrbracket \rightarrow wp_1$, $t_1 \ge t_1$, and $VC(S_1, Q)$,  $$\models VCG(\{P \land \mathcal{B} \llbracket b \rrbracket\}S_1\{Q|t_1\})$$
    \item Since $P \land \neg \mathcal{B} \llbracket b \rrbracket \rightarrow wp_2$, $t_2 \ge t_2$, and $VC(S_2, Q)$,  $$\models VCG(\{P \land \neg \mathcal{B} \llbracket b \rrbracket\}S_2\{Q|t_2\})$$
\end{itemize}

From our Induction Hypothesis, we have $\vdash \{P \land \mathcal{B} \llbracket b \rrbracket\} S_1 \{Q|t_1\}$, and $\vdash \{P \land \neg \mathcal{B} \llbracket b \rrbracket\} S_2 \{Q|t_2\}$

By the if the rule, we get 
$$\{P\}\mathrm{if}\ b\ \mathrm{then}\ S_1\ \mathrm{else}\ S_2 \{Q|max(t_1,t_2) + \mathcal{TB} \llbracket b \rrbracket\}$$

Since $T \ge max(t_1,t_2) + \mathcal{TB} \llbracket b \rrbracket$, by the weak rule
$$\vdash \{P\}\mathrm{if}\ b\ \mathrm{then}\ S_1\ \mathrm{else}\ S_2\{Q|T\}$$

\textit{Case while:}

Induction Hypothesis:
$$\models VCG(\{P\}S\{Q|T\}) \rightarrow\ \vdash \{P\}S\{Q|T\}$$

Let us consider:
\begin{itemize}
    \item $wpc(\mathrm{while}\ b\ \mathrm{do}\ S, Q) = (I \land f \ge 0, \sum_{i=0}^{N-1}t(i) + (N+1) \times \mathcal{TB} \llbracket b \rrbracket)$
    \item $wp_S, t_S = wpc(S, I \land f > k)$
\end{itemize}
    \begin{align*}
        VC(\mathrm{while}\ b\ \mathrm{do}\ S, Q) = 
        &\{ (I \land \mathcal{B} \llbracket b \rrbracket \land f=k) \rightarrow wp_S \land t(k) \ge t_S \}\ \cup \\
        \{(I \land \neg b) \rightarrow Q\}\ \cup \\
        &\{(I \land B \llbracket b \rrbracket) \rightarrow f \le N\}\ \cup \\
        &VC(S, I \land f>k)
    \end{align*}

\begin{align*}
    VCG(\{P\} \mathrm{while}\ b\ \mathrm{do}\ S \{Q | T \}) = 
    &\{P \rightarrow (I \land f \ge 0)\}\ \cup\\
    &\{P \rightarrow T \ge \sum_{i=0}^{N-1} t(i) + (N + 1) \times \mathcal{TB} \llbracket b \rrbracket \}\ \cup\\
    &VC(\mathrm{while}\ b\ \mathrm{do}\ S, Q)
\end{align*}

Assuming $\models VCG(\{P\} \mathrm{while}\ b\ \mathrm{do}\ S \{Q | T \})$ 

Since $(I \land \mathcal{B} \llbracket b \rrbracket \land f=k) \rightarrow wp_S \land t(k) \ge t_S$, and $VC(S, I \land f>k)$ we have

$$ \models VCG(\{I \land \mathcal{B} \llbracket b \rrbracket \land f=k\}S\{I \land f > k | t(k)\})$$

By our Induction Hypothesis
$$\vdash \{I \land \mathcal{B} \llbracket b \rrbracket \land f=k\}S\{I \land f > k | t(k)\}$$

Since $I \land \mathcal{B} \llbracket b \rrbracket \rightarrow f \le N$, and $ \vdash \{I \land \mathcal{B} \llbracket b \rrbracket \land f=k\}S\{I \land f > k | t(k)\}$, by the \textit{while rule}

$$\vdash \{I \land f \ge 0\} \mathrm{while}\ b\ \mathrm{do}\ S \{I \land \mathcal{B} \llbracket b \rrbracket |\sum_{i=0}^{N-1}t(i) + (N+1) \times \mathcal{TB} \llbracket b \rrbracket\}$$

Since $P \rightarrow (I \land f \ge 0)$, $(I \land \neg \mathcal{B} \llbracket b \rrbracket) \rightarrow Q$, and $P \rightarrow T \ge \sum_{i=0}^{N-1}t(i) + (N+1) \times \mathcal{TB} \llbracket b \rrbracket$, by the \textit{weak rule}

$$\vdash \{P\}\mathrm{while}\ b\ \mathrm{do}\ S\{Q|T\}$$
\end{proof}

%%%%%%% 5 - Example %%%%%%%
\subsection*{Example}\label{chap4:subsec:vcg-example}
%%%%%%%%%%%%%%%%%%%%%%%%%%%%%%%%%%%%%%%%%

Let us apply the VCG algorithm to the division example~\ref{fig:division}. We will use the same notation as in section~\ref{chap4:sec:axiomatic} and refer to our precondition as $P$, our program as $S$, our postcondition as $Q$ and our cost upperbound as $T$. The algorithm starts with a call to the $VCG$ function.

\begin{equation}
VCG(\{P\}S\{Q|T\}) = \{P \rightarrow wp)\ \cup \\
\{P \rightarrow T \ge t\}\ \cup \\
VC(S,Q) \nonumber
\end{equation}

Where
$$wp, t = wpc(S,Q) = (I \land x-r \ge 0, \sum_{i=0}^{x}10 + (x+1)\times 3)$$

Then the $VCG$ function calls the VC function for $S$. Since only \textit{while} loops generate extra VCs, we will omit other calls to the $VC$ function for simplicity.
\begin{align}
VC(S,Q) = &\{(I \land y \le r \land x-r=k) \rightarrow wp_S\}\ \cup \\
&\{(I \land \neg (y \le r)) \rightarrow Q\}\ \cup \\
&\{(I \land y \le r) \rightarrow x-r \le x\}\ \cup \\
&\{(I \land y \le r \land x-r=k) \rightarrow t(k) \ge t_S\}\ \cup \\
&VC(r=r-y;q=q+1,I \land x-r>k) \nonumber
\end{align}

Where 
\begin{align*}
wp_S, t_S &= wpc(r=r-y;q=q+1, I \land x-r>k) \\
&= (wp_1, t_1 + t_2) \\\\
wp_1, t_1 &= wpc(r=r-y, wp_2) \\
&= (wp_2[r-y/r], \mathcal{TA} \llbracket r-y \rrbracket + 1) \\\\
wp_2, t_2 &= wpc(q=q+1, I \land x-r>k) \\
&= ((I \land x-r>k)[q+1/q],\mathcal{TA} \llbracket q + 1 \rrbracket + 1)
\end{align*}

To prove our triple, we now simply need to prove all of the VCs generated by the algorithm (4.3 to 4.8), this can easily be done for all the conditions manually, or with the assistance of a theorem prover. 

As would be expected, proving a Hoare triple by applying the VCG algorithm is simpler and more mechanic than proving it directly by applying our rules and deriving the inference tree.

\section{Amortized Costs}\label{sec:amortized}
In chapter~\ref{sec:logic} we introduced our language, a cost-aware operational semantics, an axiomatic semantics to verify cost upper-bounds, and a VCG algorithm to apply our logic. In this chapter, we present a variation to our logic by considering amortized analysis to refine the estimation of costs for the \textit{while} loop. We start by giving some background on amortization before presenting our updated definitions and some examples. 

%%%%%%%%%%%% 1 - Background %%%%%%%%%%%%%
\section{Background}
%%%%%%%%%%%%%%%%%%%%%%%%%%%%%%%%%%%%%%%%%
\textit{Amortized analysis} is a method defined by Tarjan for analyzing the complexity of a sequence of operations~\cite{tarjan85}.
Instead of reasoning about the worst-case cost of individual operations, amortized analysis concerns the worst-case cost of a sequence of operations. The advantage of this method is that some operations can be more expensive than others. Thus, distributing the cost of expensive operations over the cheaper ones can produce better bounds than analyzing the worst case for individual operations.

Let $a_i$ and $t_i$ be, respectively, the amortized cost and the actual cost of each \textit{i} operation. In order to obtain an amortized analysis, it is necessary to define an \emph{amortized cost} such that
 $$ \sum_{i=1}^{n} a_i \geq \sum_{i=1}^{n} t_i $$
i.e.\@, for a sequence of $n$ operations, the  total amortized cost is an upper bound on the total actual cost. 

Thus for each intermediate step, the accumulated amortized cost is an upper bound on the accumulated actual cost. This allows the use of operations with an actual cost that exceeds their amortized cost. Conversely, other operations have a lower actual cost than their amortized cost. Expensive operations can only occur when the difference between the accumulated amortized cost and the accumulated actual cost is enough to cover the extra cost. 

There are three methods for amortized analysis: the \emph{aggregate method}, the \emph{accounting method}, and the \emph{potential method}. 
Let us analyze each of the methods with the same example.

\paragraph{Dynamic Array}
Let us consider a simple dynamic array algorithm. We will perform n insertions on the array. Every time the array is full, we create a new array with double the size, and all the elements must be copied to the new array. 
% add images to exemplify 
Consider a worst-case analysis of this algorithm for n insertions. When inserting element $i$, we might need to resize, so one insertion might lead to a resize, which would copy $i-1$ elements to the new array, plus the cost of inserting $i$, which means the worst-case scenario cost of one insertion is $i$. The cost of inserting $n$ elements would be $\sum_{i=1}^{n}i$, which is $\mathcal{O}(n^2).$

\paragraph{Aggregate Method}
The aggregate method considers the worst-case execution time $T(n)$ to run a sequence of $n$ operations. The amortized cost for each operation is $T(n)/n$. 
% Dynamic array
Applying the aggregate method to our dynamic array example gives us an amortized cost of $\frac{\mathcal{O}(n^2)}{n} = \mathcal{O}(n)$ per insertion. 

\paragraph{Accounting Method}
The Accounting Method, sometimes also called the taxation method, is a method where we tax cheaper operations, so we always have enough saved up to cover more expensive operations without ever going out of credit. 
% Dynamic array
To apply the accounting method to the dynamic array, we need to define what are the cheap operations we need to tax. Let us consider we have already inserted $m$ elements. Inserting an element in an array is $T(1)$ if the array is not full. If the array is full, we create a new array with size $2m$ and copy all $m$ elements to this new array. In this case, the cost of inserting an element is $T(m)$ to copy all the elements, plus $T(1)$ to insert the new element. If we consider the cost of inserting as $T(3)$, $1$ being the actual insertion cost we will never run out of credit. For every array state, all elements after position $m/2$, have never been copied, so they still have 2 extra credits. If we need to resize again, 1 of these credits will be used to copy the element, the remainder $m/2$ credits will be used to pay for copying elements before position $m/2$ that might have run out of credit. Since we always double the array size, this will always be enough. Therefore the amortized cost of insertion is $3$, i.e. $\mathcal{O}(1)$.

\paragraph{Potential method} 
The Potential Method considers a function $\Phi$, which maps a data structure's state $d_i$ to a real number. While with the accounting method, we would tax operations, with the potential method, credit is associated with the state of the data structure. Let $d_i$ represent the state of the data structure after $i$ operations, and $\Phi_i$ represent its potential. A valid potential function guarantees two properties: the initial potential is 0, $\Phi_0 = 0$; the potential always remains positive, $\forall i. \Phi_i \ge 0$.
The amortized cost of an operation $a_i$ is defined as its actual cost $t_i$, plus the change in potential between $d_{i-1}$ and $d_i$:
$$ a_i = t_i + \Phi_i - \Phi_{i-1} $$ 
From this, we get for \textit{j} operations:
\begin{equation}
\sum_{i=1}^{j} t_i = \sum_{i=1}^{j}(a_i + \Phi(d_{i-1}) - \Phi(d_i)) \\
\sum_{i=1}^{j} t_i = \sum_{i=1}^{j}a_i + \sum_{i=1}^{j}(\Phi(d_{i-1}) - \Phi(d_i))
 \end{equation}
The sequence of potential function values forms a telescoping series; thus, all terms except the initial and final values cancel. 
$$ \sum_{i=1}^{j}t_i = \sum_{i=1}^{j}a_i + \Phi(d_0) - \Phi(d_j)$$
Since $\Phi_0 = 0$ and  $\Phi_j \ge 0$  then 
$$\sum_{i=1}^{j} a_i \ge \sum_{i=1}^{j} t_i$$
% Dynamic array
Let us apply the potential method to the dynamic array problem. We start by defining our potential function $\phi$ as 2 times the number of elements after position $m/2$ for an array of size m, $\Phi_n = 2 \times (n - m/2) = 2n-m$. If the array is not full, the real cost ($t_i$) of inserting $i$ in the array is 1, and the change in potential $\phi_i - \phi_{i-1}$ is $2$, which means the amortized cost of this insertion is $3$.  If the array is full the real cost ($t_i$) is the cost of copying $m$ elements plus the cost of one insertion, $t_i = m + 1$. The change in potential is $2 - m$, then the amortized cost $a_i$ will be $3$, i.e. $\mathcal{O}(1)$. This method, like the accounting method, gives us a better amortized cost than the aggregated method.

%Remove?
The choice of method depends on how convenient each method is to the situation. The proof rules we will show in the next section use the potential method.
As we have seen in chapter~\ref{sec:related}, amortized analysis based on the potential method was already used in previous work, which derives upper bounds on the use of computational resources \cite{carbonneaux2015,hoffmann2006,simoes2012,vasconcelos2015}. Here we use it to prove tighter bounds when the composition of worst-case execution times is overly pessimistic.

%%%%%%%%%%%% 2 - Proof Rules %%%%%%%%%%%%
\section{Proof Rules with Amortized Costs}\label{chap5:sec:axiomatic}
%%%%%%%%%%%%%%%%%%%%%%%%%%%%%%%%%%%%%%%%%
We now present our modified logic for amortized analysis. 
To this end, we modify the {\em while} rule (Figure~\ref{fig:hoare-amortized-while}) to allow deriving more precise bounds. Similarly to the \textit{while rule} presented in figure~\ref{fig:hoare}, we still get the variant $f$, the invariant $I$, and the maximum number of iteration $N$ from the oracle.
 But now, this new rule requires additional information from the oracle: an amortized cost for each iteration $a$ and a potential function $\phi$.
\begin{figure}[htbp]
  \centering
  \includegraphics[width=0.85\linewidth]{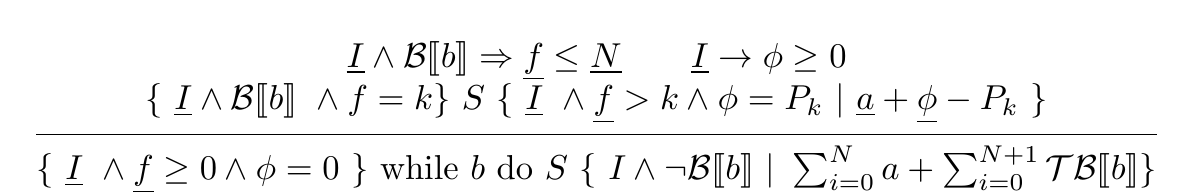}
  \caption{ Hoare rule for \textit{while} statement with amortized costs.}
  \label{fig:hoare-amortized-while}
\end{figure}
Regarding these new values, we add the following restrictions:
\begin{itemize}
    \item The potential function $\phi$ must be zero before we start the \textit{while}, $\phi = 0$
    \item The potential function $\phi$ must remain positive after every iteration of the \textit{while} $I \rightarrow \phi \ge 0$
    \item Knowing that $\langle S, \sigma \rangle \rightarrow^t \sigma_1$, we must have that $\forall k. \mathcal{A} \llbracket a + \phi - P_k \rrbracket \sigma \ge t$
\end{itemize}
Note that we have added a logic variable $P_k$. This variable is used in the time assertion $a + \phi - P_k$, and allows us to refer to the value of variable $\phi$ in two different states, before and after executing statement $S$.

\subsection*{Soundness}

Assume $I \land \mathcal{B} \llbracket b \rrbracket \Rightarrow f \le N$, $I \Rightarrow \phi \ge 0$, and 
$$\models \{\ I \land \mathcal{B} \llbracket b \rrbracket\ \land f=k\}\ S\ \{\ I\ \land \phi = P_{k} \land f > k\ |\ a + \phi - P_{k}\ \}$$

Considering a state $\sigma$ such that $\sigma \models I\ \land\ f \ge 0\ \land\ \phi = 0$ and $\langle \mathrm{while}\ b\ \mathrm{do}\ S, \sigma\rangle \rightarrow^{t_w} \sigma_1$.

%% Proving Termination
Given our assumption, we know that $\sigma \models f \ge 0$ and that every time we enter the \textit{while} body, f increases. We also know that as long as $I \land B[b]$ are true, $f \le N$. Therefore  we know that the program will eventually stop and at most iterate $N+1$ times.

If $\sigma \models \neg \mathcal{B} \llbracket b \rrbracket$ then $\sigma_1 = \sigma$, therefore $\sigma_1 \models \neg \mathcal{B} \llbracket b \rrbracket \land I$. In this case $t_w = \mathcal{T}\mathcal{B}\llbracket b \rrbracket$. Since $N \ge 0$, then $\mathcal{A} \llbracket \sum_{i=0}^{N}a + \sum_{i=0}^{N+1} \mathcal{TB} \llbracket b \rrbracket \rrbracket \sigma \ge t_w$.

If $\sigma \models \mathcal{B} \llbracket b \rrbracket$, we have that $\sigma \models \mathcal{B} \llbracket b \rrbracket \land I$.
Considering a state $\sigma_2$ such that $\langle S, \sigma \rangle \rightarrow^{t_1} \sigma_2$ and $\langle \mathrm{while}\ b\ \mathrm{do}\ S, \sigma_2\rangle \rightarrow^{t_2} \sigma_1$. Applying our initial assumption we get $\sigma_2 \models I \land \phi = P_{k} \land f > k$.
Finally by applying the induction hypothesis we have that $\sigma_1 \models \neg \mathcal{B} \llbracket b \rrbracket \land I$.
\\

%% Proving cost
By our assumption we also know that $\forall k. \mathcal{A} \llbracket(a + \phi - P_{k})\rrbracket \sigma \ge t_1$. 

By induction, we know that 
$\langle \mathrm{while b do S} , \sigma_2 \rangle \rightarrow ^{t_2} \sigma_1$
where 
$$\mathcal{A} \llbracket (N-1) \times a + N \times \mathcal{T} \mathcal{B} \llbracket b \rrbracket \rrbracket \sigma_2 \ge t_2$$

The real cost of the \textit{while} is given by $t_1 + t_2 + \mathcal{T}\mathcal{B}\llbracket b \rrbracket$.

$$\mathcal{A} \llbracket(a + \phi - P_{k})\rrbracket \sigma \ge t_1$$

Since $\sigma \models \phi = 0 \rightarrow \mathcal{A} \llbracket \phi \rrbracket \sigma = 0$. And since $\sigma_2 \models I$ and $I \rightarrow \phi \ge 0$, then $\sigma \models \phi \ge 0$. $P_k = \mathcal{A} \llbracket \phi \rrbracket \sigma_2$, then $P_k \ge 0$. Then we have

$$\mathcal{A} \llbracket a \rrbracket \sigma \ge t_1$$

Since $a$, $N$, and $\mathcal{TB} \llbracket b \rrbracket$ are all constant, we get

\begin{equation}
\mathcal{A} \llbracket N \times a + N \times \mathcal{T} \mathcal{B} \llbracket b \rrbracket \rrbracket \sigma \ge t_1 + t_2 \\
\mathcal{A} \llbracket N \times a + (N+1) \times \mathcal{T} \mathcal{B} \llbracket b \rrbracket \rrbracket \sigma \ge t_1 + t_2 + \mathcal{T} \mathcal{B} \llbracket b \rrbracket
\end{equation}

Therefore the rule for \textit{while} is sound.

$$\models \{\ I\ \land f \ge 0\ \}\ \mathrm{while}\ b\ \mathrm{do}\ S\ \{\ I \land \neg \mathcal{B} \llbracket e \rrbracket\ |\ N \times a + (N+1) \times \mathcal{T}\mathcal{B}\llbracket b \rrbracket\}$$

\subsection*{Example: Binary Counter}

We now illustrate this new version of the Hoare Logic on a typical application of amortized analysis: a binary counter.
In the binary counter algorithm, we represent a binary number as an array of zeros and ones. We start with an array with every value at zero, and with each iteration, we increase the number by one until we reach the desired value. Our implementation can be seen in figure~\ref{fig:binary-counter}. 

\begin{figure}[htbp]
  \centering
  \includegraphics[width=0.8\linewidth]{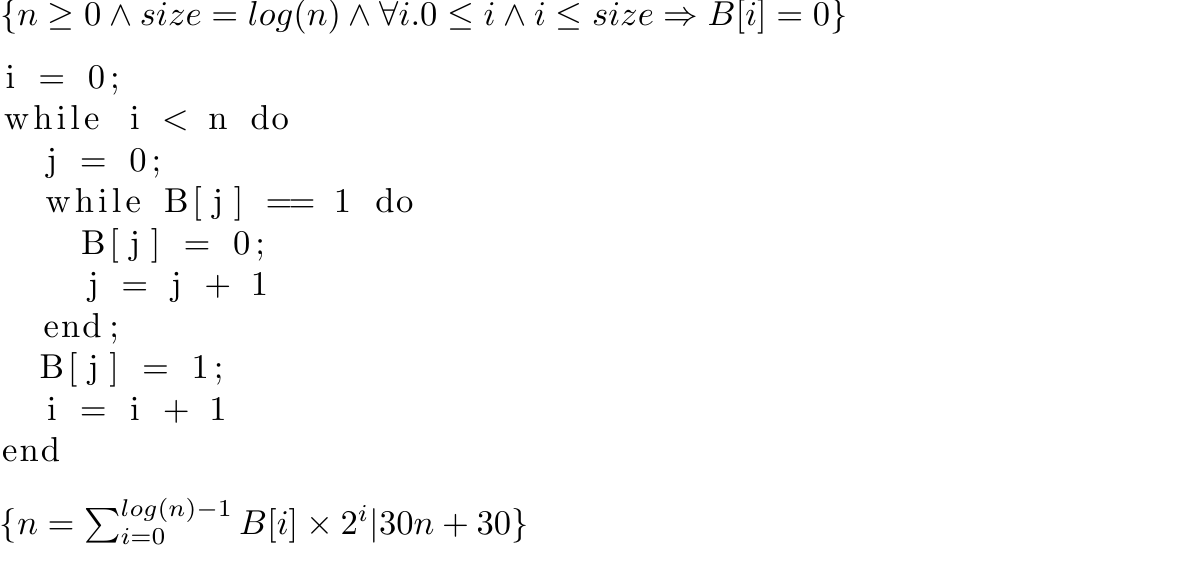}
  \caption{Binary Counter Implementation with Annotation.}
  \label{fig:binary-counter}
\end{figure}

To start our proof we will define the oracle information for each of the two \textit{while} loops. For simplicity, we refer to the external \textit{while} loop as \textit{while} 1, and the internal as \textit{while} 2.

The invariant of \textit{while} 1, $I_1 \equiv i = \sum_{k=0}^{size-1} B[k] \times 2^k \land 0 \le i \le n$, the variant $f_1 \equiv i $, the maximum number of iterations $N_1 \equiv n$, the amortized cos $a_1 \equiv 20$, and the potential function $ \phi_1 \equiv fun\ k \rightarrow \sum_{i=0}^{size} B[i]$. %IDK if right 
Here log(n) is the base two logarithm of n.

For the \textit{while} 2 we have $I_2 \equiv \forall k. 0 \le k < j \rightarrow B[k] = 0 \land 0 \le j \le size$, $f_2 \equiv j$, $N_2 \equiv size$, $a_2 \equiv 10$, and $ \phi_2 \equiv fun\ k \rightarrow 1$.

By the \textit{assign rule}, we have
$$\vdash \{(I_1 \land f_1 \ge k \land \phi_1 = P_k)[i+1/i]\} i = i + 1 \{ I_1 \land f_1 \ge k \land \phi_1 = P_k | 4 \}$$

By the \textit{assign rule}, we have
$$\vdash \{(I_1 \land f_1 \ge k \land \phi_1 = P_k)[i+1/i][1/B[j]]\} B[j] = 1 \{ (I_1 \land f_1 \ge k \land \phi_1 = P_k)[i+1/i] | 3 \}$$

Applying the \textit{seq rule}, we get
$$\vdash \{(I_1 \land f_1 \ge k \land \phi_1 = P_k)[i+1/i][1/B[j]]\} B[j] = 1; i = i + 1 \{ I_1 \land f_1 \ge k \land \phi_1 = P_k | 7 \}$$

By the \textit{assign rule}, we get
$$\vdash \{(I_2 \land f_2 \ge k \land \phi_2 = P_k)[j+1/j]\} j = j+1 \{ I_2 \land f_2 \ge k \land \phi_2 = P_k | 4 \}$$

By the \textit{assign rule}, we get
$$\vdash \{(I_2 \land f_2 \ge k \land \phi_2 = P_k)[j+1/j][0/B[j]]\} B[j]=0 \{ (I_2 \land f_2 \ge k \land \phi_2 = P_k)[j+1/j] | 3 \}$$

By the \textit{seq rule}
$$\vdash \{(I_2 \land f_2 \ge k \land \phi_2 = P_k)[j+1/j][0/B[j]]\} B[j]=0;j=j+1 \{ I_2 \land f_2 \ge k \land \phi_2 = P_k | 7 \}$$

Since we know $I_2 \land B[j]=1 \land f_2=k \rightarrow (I_2 \land f_2 \ge k \land \phi_2 = P_k)[j+1/j][0/B[j]] \land a_2 + \phi_2 - Pk \ge 7$, then by the \textit{weak rule}
$$\vdash \{I_2 \land B[j]=1 \land f_2=k\} B[j]=0;j=j+1\{I_2 \land f_2 > k \land \phi_2 = P_k | a_2 + \phi_2 - P_k\}$$

Then by the \textit{while rule}, we get
$$\vdash \{I_2 \land f_2 \ge 0 \} while_2 \{ I_2 \land \neg(B[j] = 1) | N_2 \times a_2 + (N_2 +1) \times \mathcal{TB} \llbracket B[j] = 1 \rrbracket \}$$

Since $I_2 \land \neg (B[j] = 1) \rightarrow (I_1 \land f_1 \ge k \land \phi_1 = P_k)[i+1/i][1/B[j]]$, then by the \textit{weak rule}
$$\vdash \{I_2 \land f_2 \ge 0 \} while_2 \{ (I_1 \land f_1 \ge k \land \phi_1 = P_k)[i+1/i][1/B[j]] | N_2 \times a_2 + (N_2 +1) \times \mathcal{TB} \llbracket B[j] = 1 \rrbracket \}$$

By the \textit{seq rule}
$$\vdash \{I_2 \land f_2 \ge 0 \} while_2;B[j]=1;i=i+1 \{ I_1 \land f_1 \ge k \land \phi_1 = P_k | N_2 \times a_2 + (N_2 +1) \times \mathcal{TB} \llbracket B[j] = 1 \rrbracket + 7 \}$$

By the\textit{ assign rule}, we have
$$\{(I_2 \land f_2 \ge 0)[0/j]\}j=0\{I_2 \land f_2 \ge 0 | 2\}$$

By the \textit{seq rule}
$$\vdash \{(I_2 \land f_2 \ge 0)[0/j] \} j=0;while_2;B[j]=1;i=i+1 \{ I_1 \land f_1 \ge k \land \phi_1 = P_k | N_2 \times a_2 + (N_2 +1) \times \mathcal{TB} \llbracket B[j] = 1 \rrbracket + 9 \}$$

Since $I_1 \land i < n \land f_1 = k \rightarrow (I_2 \land f_2 \ge 0)[0/j]$, and $a_1 + \phi_1 - P_k \ge N_2 \times a_2 + (N_2 +1) \times \mathcal{TB} \llbracket B[j] = 1 \rrbracket + 9$, then by the \textit{weak rule}
$$\vdash \{I_1 \land i < n \land f_1 = k \} j=0;while_2;B[j]=1;i=i+1 \{ I_1 \land f_1 \ge k \land \phi_1 = P_k | a_1+ \phi_1 - P_k \}$$

Since $I_1 \land i < n \rightarrow f_1 \le N_1$, and $I_1 \rightarrow \phi_1 \ge 0$, then we can apply the \textit{while rule} and we get
$$\vdash \{I_1 \land f_1 \ge 0 \land \phi_1 = 0 \} while_1 \{I_1 \land \neg(i<n) | N_1 \times a_1 + (N_1 + 1) \times 3\}$$

By the \textit{assign rule }
$$\vdash \{(I_1 \land f_1 \ge 0 \land \phi_1 = 0)[0/i] \} i=0 \{I_1 \land f_1 \ge 0 \land \phi_1 = 0 | 2\}$$

Applying the \textit{seq rule} gives us
$$\vdash \{(I_1 \land f_1 \ge 0 \land \phi_1 = 0)[0/i] \} i=0;while_1 \{I_1 \land \neg(i<n) | 2 + N_1 \times a_1 + (N_1 + 1) \times 3 \}$$

Since $n \ge 0 \land size=log(n) \rightarrow (I_1 \land f_1 \ge 0 \land \phi_1 = 0)[0/i]$, $I_1 \land \neg(i<n) \rightarrow n=\sum_{i=0}^{log(n) - 1} B[i] \times 2^i$, and $40n + 10n + 30 \ge 2 + N_1 \times a_1 + (N_1 + 1) \times 3$, by the \textit{weak rule}
$$\vdash \{n \ge 0 \land size=log(n)\} i=0; while_1 \{n=\sum_{i=0}^{log(n) - 1} B[i] \times 2^i | 40n + 10n + 30\}$$

%%%%%%%%%%%%%%%% 3 - VCG %%%%%%%%%%%%%%%% 
\section{Verification Conditions Generation with Amortized Costs}
%%%%%%%%%%%%%%%%%%%%%%%%%%%%%%%%%%%%%%%%%
Considering the extensions to our logic, as presented and explained in the last section, we need to modify our VCG accordingly. The only changes we need to perform in wpc and VC are for the while cases. Both these changes are shown in Figure~\ref{fig:vcg-amortized}.

\begin{figure}[htbp]
  \centering
  \includegraphics[width=0.8\linewidth]{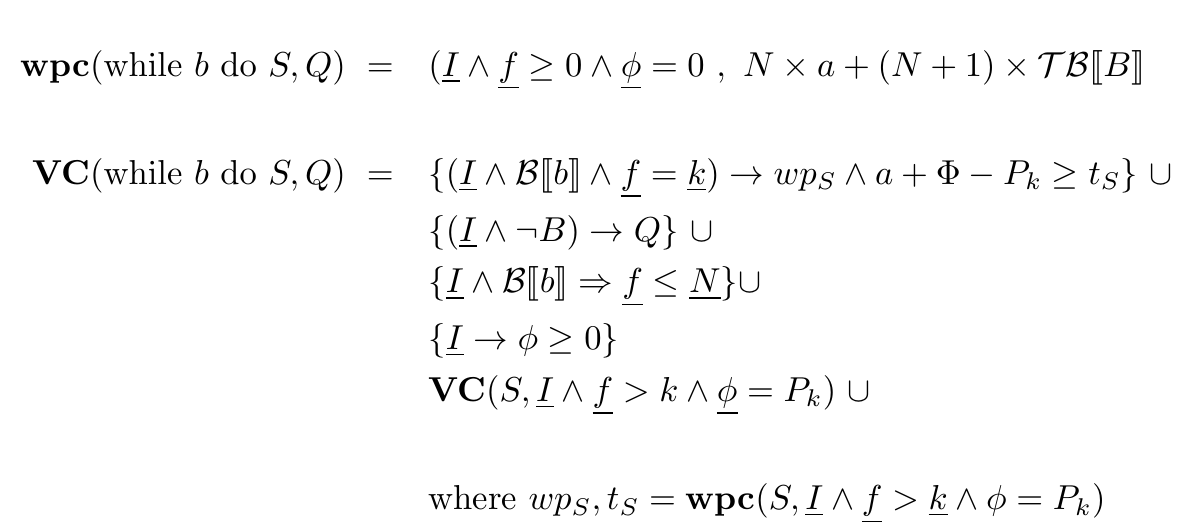}
  \caption{ VCG rules for \textit{while} statement with amortized costs.}
  \label{fig:vcg-amortized}
\end{figure}

The new wpc rule for \textit{while} returns a new term in the invariants conjunction that stipulates that $\phi = 0$, i.e., the potential function must be zero before the \textit{while} begins. The upper bound is given by the sum of the amortized cost for every iteration $\sum_{i = 0}^{N}a$, plus the sum of every $N+1$ evaluation of $b$, $\sum_{i=0}^{N+1}\mathcal{TB} \llbracket b \rrbracket$.

The VC rule for \textit{while} generates four VCs:
\begin{itemize}
    \item $\forall k.(I \land B \land f = k) \rightarrow wpc(S, I \land f > k \land \phi = P_k)$ guarantees both the invariant, the variant increase with every iteration, and defines a logic variable $P_k$ which allows us to refer to the state of variable $\phi$ in the postcondition. 
    \item $I \land \neg b \rightarrow Q$ which states that at the end of the \textit{while} (when $b$ is false), the invariant and $\neg b$ imply the postcondition $Q$ we wish to prove.
    \item $I \land \mathcal{B} \llbracket b \rrbracket \rightarrow f \le N$, which states that while the \textit{while} is still running (when $b$ is true), the variant is always less or equal to $N$.
    \item $I \rightarrow \phi \ge 0$, the potential function is always positive.
\end{itemize}
This rule also makes a recursive call to VC for the loop's body.

\subsection*{Soundness}
% TODO add soundness theorem reference

\begin{proof}

We start by defining our Induction Hypothesis.
$$\models VCG(\{P\}S\{Q|T\}) \rightarrow\ \vdash \{P\}S\{Q|T\}$$

We also calculate the result of wpc, VC and VCG for \textit{while}.

$$wpc(\mathrm{while}\ b\ \textit{do}\ S, Q) = (I \land f \ge 0 \land \Phi = 0, N \times a + (N+1) \times \mathcal{TB} \llbracket b \rrbracket)$$

$$wp_S, t_S = wpc(S,I \land f > k \land \phi = P_k)$$

\begin{align}
VC(\mathrm{while}\ b\ \textit{do}\ S, Q) = &\{ (I \land \mathcal{B} \llbracket b \rrbracket \land f = k) \rightarrow {wp}_S \} \cup \label{chap5:eq-1}\\
&\{I \land \mathcal{B} \llbracket b \rrbracket \land f = k \rightarrow a + \Phi - P_k \ge t_s \} \cup \label{chap5:eq-2}\\
&\{ I \land \neg \mathcal{B} \llbracket b \rrbracket \rightarrow Q \} \cup \label{chap5:eq-3}\\
&\{I \land \mathcal{B} \llbracket b \rrbracket \rightarrow f \le N \} \cup \label{chap5:eq-4}\\
&\{ I \rightarrow \Phi \ge 0 \} \cup \label{chap5:eq-5}\\
&VC(S,I \land f>k \land \Phi=P_k) \label{chap5:eq-6}
\end{align}

\begin{align}
VCG(\{P\}\mathrm{while}\ b\ \textit{do}\ S\{Q|T\}) = &\{P \rightarrow (I \land f \ge 0 \land \Phi=0)\} \cup \label{chap5:eq-7}\\
&\{P \rightarrow T \ge N \times a + (N+1) \times \mathcal{TB} \llbracket b \rrbracket \} \cup \label{chap5:eq-8} \\
&VC(\mathrm{while}\ b\ \textit{do}\ S, Q) \label{chap5:eq-9}
\end{align}

Assuming $\models VCG(\{P\}\mathrm{while}\ b\ \textit{do}\ S\{Q|T\})$.

Since we have \ref{chap5:eq-1}, \ref{chap5:eq-2}, and \ref{chap5:eq-6}, then
$$\models VCG(\{I \land \mathcal{B} \llbracket b \rrbracket \land f=k\} S \{I \land f>k \land \Phi = P_k | a + \Phi - P_k \})$$

By the induction hypothesis 
\begin{equation}
\vdash \{I \land \mathcal{B} \llbracket b \rrbracket \land f=k\} S \{I \land f>k \land \Phi = P_k | a + \Phi - P_k \}  \label{chap5:eq-10}
\end{equation}

Given \ref{chap5:eq-4}, \ref{chap5:eq-5}, and \ref{chap5:eq-10}, by the \textit{while rule}
$$\vdash \{I \land f \ge 0 \land \Phi = 0\} \mathrm{while}\ b\ \textit{do}\ S \{I \land \neg \mathcal{B} \llbracket b \rrbracket | N \times a + (N+1) \times \mathcal{TB} \llbracket b \rrbracket\}$$

Given \ref{chap5:eq-3}, \ref{chap5:eq-7}, and \ref{chap5:eq-8}, by the \textit{weak rule}
$$\vdash \{P\} \mathrm{while}\ b\ \textit{do}\ S \{Q | T\} $$

\end{proof}

\subsection*{Example: Binary Counter}
Let us apply the VCG algorithm to the binary counter example~\ref{fig:binary-counter}. We will use the same notation as in section~\ref{chap5:sec:axiomatic} and refer to our precondition as $P$, our program as $S$, our postcondition as $Q$ and our cost upperbound as $T$.

We start by calling the wpc function.

$$wpc(while_1,Q) = (I_1 \land I \ge 0 \land \phi_1 = 0, n \times a_1 + (n + 1) \times 3)$$

$$wpc(i=0,I_1 \land i \ge 0 \land \phi_1 = 0) = ((I_1 \land i \ge 0 \land \phi_1 = 0)[0/i], 2)$$

$$wp_S, t_S = wpc(S,Q) = ((I_1 \land i \ge 0 \land \phi_1 = 0)[0/i], 2 + n \times a_1 + (n + 1) \times 3)$$

Then we call the VC function for our program. Since this function only generates extra VCs for \textit{while} loops, we will omit other calls to the VC function for simplicity.

\begin{align}
VC(S,Q) &= VC(while_1, Q) = \nonumber \\
&\{ (I_1 \land i < n \land i = k) \rightarrow wp_1\}\ \cup \\
&\{ (I_1 \land i < n \land i = k) \rightarrow 30 + \phi_1 -P_k \ge t_1\}\ \cup \\
&\{ (I_1 \land \neg i<n) \rightarrow Q \}\ \cup \\
&\{ I_1 \land i<n \Rightarrow i \le n\} \cup\ \\
&\{  I_1 \rightarrow \phi_1 \ge 0 \}\ \cup \\
&\textbf{VC}(S_w, I_1 \land i > k \land \phi_1 = P_k) \nonumber
\end{align}

Where $S_w \equiv j=0;while_2;B[j]=1;i=i+1$, and
\begin{align*}
wp_1, t_1 &= \textbf{wpc}(S_w, I_1 \land i > k \land \phi_1 = P_k) \\
&= (I_2 \land j \ge 0 \land \phi_2 = 0)[0/j], size \times a_2 + (size + 1) \times \mathcal{TB} \llbracket B[j] = 1 \rrbracket
\end{align*}

\begin{align}
VC&(S_w,I_1 \land i > k \land \phi_1 = P_k) = \nonumber \\
VC&(while_2, (I_1 \land i > k \land \phi_1 = P_k)[i+1/i][1/B[j]]) = \nonumber \\
&\{ (I_2 \land B[j]=1 \land j = k) \rightarrow wp_{2} \}\ \cup \\
&\{ (I_2 \land B[j]=1 \land j = k) \rightarrow a_2 + \phi_2 - P_K \ge t_2 \}\ \cup \\
&\{ (I_2 \land \neg B[j]=1) \rightarrow (I_1 \land i > k \land \phi_1 = P_k)[i+1/i][1/B[j]] \}\ \cup \\
&\{ I_2 \land B[j]=1 \Rightarrow j \le size\} \cup\ \\
&\{  I_2 \rightarrow \phi_2 \ge 0 \}\
\end{align}

Where
\begin{align*}
wp_{2}, t_{2} &= \textbf{wpc}(B[j]=0;j=j+1, I_2 \land j > k \land \phi_2= P_k) \\
&= ((I_2 \land j \ge 0 \land \phi_2 = 0)[j+1/j][0/B[j]], size \times a_2 + (size+1) \times 3 + 9)
\end{align*}

Finally we call the VCG function

\begin{equation}
    VCG(\{P\}S\{Q|T\}) = \{ P \rightarrow wp_S \}\ \cup \\
    \{ P \rightarrow T \ge t_S \}\ \cup \\
     VC(S,Q) \nonumber
\end{equation}

To prove our triple, we now simply need to prove all of the VCs generated by the algorithm (5.11 to 5.21), this can easily be done for all the conditions manually, or with the assistance of a theorem prover. 

As would be expected, proving a Hoare triple by applying the VCG algorithm is simpler and more mechanic than proving it directly by applying our rules and deriving the inference tree.

\section{Exact Costs}\label{sec:exact}
The final part of our work consists of extending our language and logic so that we can verify the exact cost of a program instead of an upper bound like in the previous chapters. This approach is advantageous in scenarios where the approximation of execution time is not enough to guarantee safety, as it happens for critical systems and real-time programming. 
Moreover, proving that the exact execution time of a program is an expression that does not depend on confidential data provides a direct way to prove the absence of timing leakage, which is relevant in cryptographic implementations.
We must restrict the programming language to guarantee the ability to prove exact costs. Thus, in this third scenario, programs have bound recursion (\textit{for} loops), and both branches of conditional statements must have identical costs.

\section{Operational Semantics for Exact Costs}\label{chap6:sec:operational}

Since we need to be able to calculate exact costs, we need to make some extensions to our language so that the execution time is fully deterministic. The first one is a restriction to our conditional statement, $\mathrm{if}\ b\ \mathrm{then}\ S_1\ \mathrm{else}\ S_2$. It is still possible to have conditional statements in this scenario. However, we need to guarantee that both branches of the \textit{if} will take the same time to run.

In our original language, we used \textit{while} loops, but since we can not predict accurately the exact amount of times a \textit{while} is going to run, we can not have this statement in this version. We will then replace our \textit{while} loop with a \textit{for} loop, which executes a deterministic number of times, solving our problem.

Let us now present our updated syntax rules for statements, which remain the same for every statement except the \textit{for} loop.

\begin{equation}
S ::= skip\ 
 |\ x = a\  
 |\ x [ a_1 ] = a_2\ 
 |\ \textbf{if}\ b\ \textbf{then}\ S_1\ \textbf{else}\ S_2\ \textbf{done}\ 
 |\ \textbf{for}\ i=a\ \textbf{to}\ b\ \textbf{do}\ S\ \textbf{done}\ 
 |\ S_1 ; S_2
% <program> ::= \{ <annotation> \} <statement> \{ <annotation> \}
% \alt <statement>
\end{equation}

The semantic rules for \textit{for} loop are presented in figure~\ref{fig:for-semantic}.
We have one rule and one axiom for \textit{for} loop, $[for^{true}]$ and $[for^{false}]$.

If $\mathcal{B} \llbracket b \rrbracket \sigma$ is false, we apply the axiom $[for^{false}]$, that says we will remain in the same state $\sigma$ and the cost is simply the cost of the evaluation of $a < b$, $\mathcal{T}\mathcal{B} \llbracket a < b \rrbracket$. 

If $\mathcal{B} \llbracket b \rrbracket \sigma$  is true, we apply rule $[for^{true}]$, which means we will, assign $a$ to variable $i$, which will lead to a state $\sigma'''$, we will then execute the loop body, $S$, once from state $\sigma'''$ and this will lead to a state $\sigma''$. Finally, we execute the \textit{for} loop again, but this time starting at $a + 1$ and from state $\sigma''$. The cost of the \textit{for} loop, in this case, is the cost of evaluating $a < b$, plus the cost of executing the \textit{assign}, plus the cost of executing the body, plus the cost of executing the \textit{for} loop from state $\sigma''$.

\begin{figure}[htbp]
  \centering
  \includegraphics[width=\linewidth]{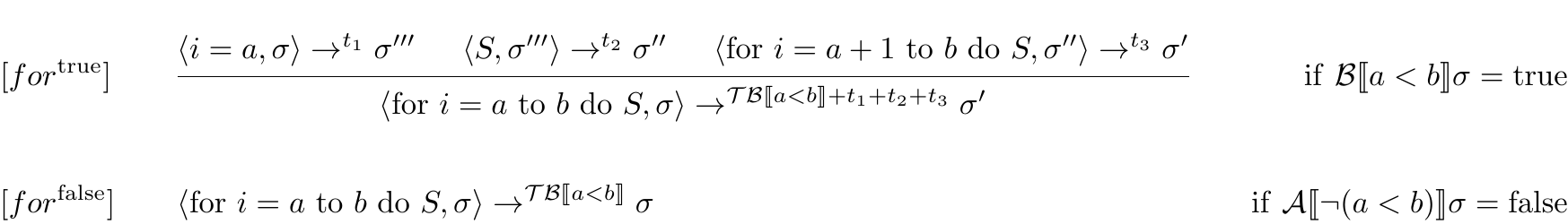}
  \caption{Operational semantic of \textit{for} loop.}
  \label{fig:for-semantic}
\end{figure}

The semantic rules for $skip$, $assign$, $array$, $seq$ and $if$ remain the same as presented in section~\ref{chap4:sec:formal-operational}.

\section{Proof Rules for Exact Costs}\label{chap6:sec:axiomatic}

In this new logic we have that $\models \{ P \} S \{ Q | t \}$
if and only if, for all state $\sigma$ such that $\sigma \models P$ and
$\langle S, \sigma \rangle \rightarrow^{t} \sigma'$ we have that $\sigma' \models Q$ and $\mathcal{A}\llbracket t \rrbracket\sigma = t'.$ Notice that this is fairly similar to what was presented in section~\ref{chap4:sec:axiomatic} but now, for a Hoare triple to be valid, the value passed in the cost section needs to represent the exact cost of the program ($\mathcal{A}\llbracket t \rrbracket\sigma = t'$).

The new axiomatic rules for the {\em for} loop are defined in figure~\ref{fig:hoare-for}.

\begin{figure}[htbp]
  \centering
  \includegraphics[width=\linewidth]{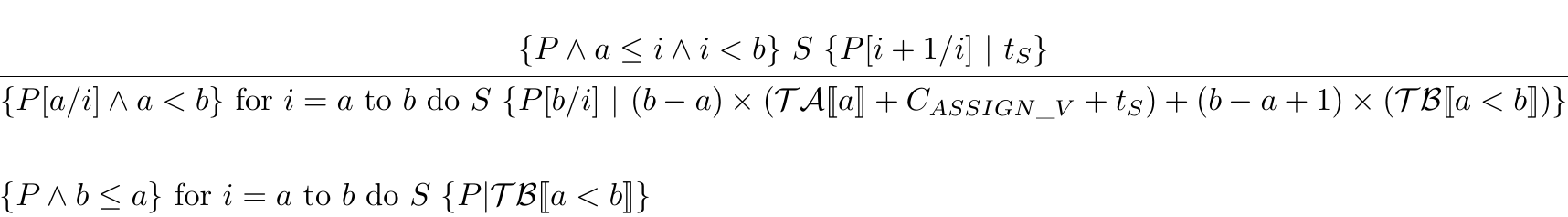}
  \caption{Hoare rule for \textit{for}-loop statement.}
  \label{fig:hoare-for}
\end{figure}

We have one rule and one axiom for $for$. The axiom says that if we are in a state that validates $P$ and $b \le a$, then after executing the \textit{for}, we will be in a state that validates P, and this execution will have an exact cost of $\mathcal{TB} \llbracket a < b \rrbracket$. 
The $for$ rule says that if we start in a state that validates $P$ and $a < b$ then after executing the \textit{for} loop we will be in a state that validates $P[b/i]$ and this will have a cost of $ (b - a + 1) \times \mathcal{TB} \llbracket a < b \rrbracket + (b-a) \times (t_S + \mathcal{TA}\llbracket a \rrbracket + C_{ASSIGN\_V)}$. To prove this, we must first prove that if we execute $S$ from a state that validates $P$ and $a \le i < b$, then after executing $S$, we will be in a state that validates $P[b/i]$ and this will cost $t_S$ to execute.

We also modify the rule for conditional statements by imposing that both branches must execute with the same exact cost. The rule for \textit{if} is then redefined as shown in Figure~\ref{fig:hoare-if}. Note that balancing \textit{if} branches with, e.g., dummy instructions is a common technique used in cryptography to eliminate execution time dependencies from branch conditions that may be related to secret data. The new rule is shown in Figure~\ref{fig:hoare-if}.

\begin{figure}[htbp]
  \centering
  \includegraphics[width=0.8\linewidth]{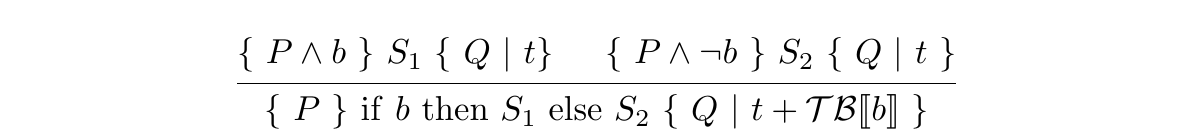}
  \caption{Hoare rule for \textit{if} statement where both branches take exactly the same time to execute.}
  \label{fig:hoare-if}
\end{figure}

The rule for $if$ says that, if we start at a state validating $P$ then after executing $if$ we will arrive to a state that validates Q. This execution will take exactly $t + \mathcal{TB} \llbracket b \rrbracket$. To prove this, we need to guarantee
\begin{itemize}
    \item Executing $S_1$ from a state validating $P \land b$ generates a state that validates $Q$ and $S_1$ takes $t$ to execute.
    \item Executing $S_2$ from a state validating $P \land \neg b$ generates a state that validates $Q$ and $S_2$ takes $t$ to execute.
\end{itemize}

\subsection*{Soundness}

We need to ensure that our Hoare logic is sound with respect to our operational semantic. For this version, our soundness theorem will be slightly different than the one we have previously presented. 

\begin{theorem}[Soundness]
We have that $\models \{ P \} S \{ Q | t \}$ if and only if, forall state $\sigma$ such that $\sigma \models P$ and $\langle S, \sigma \rangle \rightarrow^{t'} \sigma'$, we have $\sigma' \models Q$ and $\mathcal{A}\llbracket t \rrbracket\sigma = t'.$
\end{theorem}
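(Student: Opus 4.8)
Because the statement is phrased as a biconditional whose right-hand side is verbatim the clause that \emph{defines} $\models \{P\}S\{Q|t\}$ in this chapter (Section~\ref{chap6:sec:axiomatic}), the plan is to treat it as a precise restatement of that definition rather than as a derivability-to-validity claim. The forward direction is then immediate: assuming $\models \{P\}S\{Q|t\}$, I unfold the definition of exact-cost validity to obtain exactly that for every $\sigma$ with $\sigma\models P$ and every transition $\langle S,\sigma\rangle\rightarrow^{t'}\sigma'$ one has $\sigma'\models Q$ and $\mathcal{A}\llbracket t\rrbracket\sigma = t'$. The backward direction is symmetric: the displayed condition is the defining clause, so it entails $\models \{P\}S\{Q|t\}$. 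As worded, the theorem also serves to fix the notation cleanly, separating the \emph{annotation expression} $t$, which is evaluated to an integer by $\mathcal{A}\llbracket t\rrbracket\sigma$, from the \emph{realized running time} $t'$ recorded by the operational semantics in $\langle S,\sigma\rangle\rightarrow^{t'}\sigma'$.

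The only genuinely substantive obligation behind the biconditional is that the equality $\mathcal{A}\llbracket t\rrbracket\sigma = t'$ be \emph{well-posed}: for exact costs to characterize validity, each configuration must realize a unique running time. I would therefore establish, as a supporting lemma, that the exact-cost operational semantics of Section~\ref{chap6:sec:operational} is deterministic, i.e. for each $S$ and $\sigma$ there is at most one pair $(\sigma',t')$ with $\langle S,\sigma\rangle\rightarrow^{t'}\sigma'$. The proof is by structural induction on $S$. For $skip$, assignment and array assignment exactly one axiom applies, so $(\sigma',t')$ is forced. For $S_1;S_2$ determinism follows from the induction hypotheses together with the fact that the sequence cost is the sum of the two uniquely determined component costs. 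For the conditional the truth value $\mathcal{B}\llbracket b\rrbracket\sigma$ is fixed, so exactly one branch is taken; crucially, the revised \textbf{if} rule (Figure~\ref{fig:hoare-if}) forces both branches to run in the same time $t$, so the total cost $t+\mathcal{T}\mathcal{B}\llbracket b\rrbracket$ is independent of the branch and the result is uniquely determined.

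The main obstacle is the \textbf{for} loop, where I expect to need an auxiliary induction on the number of remaining iterations $b-a$ rather than on the syntax of the statement, since the $[for^{true}]$ rule re-enters the loop with the index incremented to $a+1$ in a mutated state. Here I would argue that the guard $a<b$ evaluated at $\sigma$ deterministically selects $[for^{false}]$ (yielding $(\sigma,\mathcal{T}\mathcal{B}\llbracket a<b\rrbracket)$) or $[for^{true}]$, that the index assignment and the body step are each uniquely determined by the induction hypothesis, and that the accumulated cost at each level is a fixed sum of the guard-evaluation cost, the assignment cost, the body cost, and the cost of the remaining iterations. Once determinism is in hand, the $(\sigma',t')$ appearing in the biconditional are unambiguous, and the two directions close by unfolding the definition as above; no appeal to the proof rules or to $\vdash$ is required.
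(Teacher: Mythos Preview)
Your reading of the statement as a definitional biconditional is textually accurate, and your forward/backward unfolding argument is sound for that reading. The paper, however, treats this theorem quite differently: despite the ``$\models \ldots$ if and only if'' phrasing, the proof it actually gives is a rule-by-rule soundness argument for the exact-cost Hoare logic, showing that the new \textbf{if} and \textbf{for} proof rules preserve validity (the remaining cases being inherited verbatim from Theorem~\ref{theorem:soundness}). In other words, the paper is really establishing $\vdash \{P\}S\{Q\mid t\} \Rightarrow \models \{P\}S\{Q\mid t\}$ under the revised definition of $\models$, and the theorem statement is somewhat infelicitously worded. Your approach buys a literal proof of what is written; the paper's approach buys the substantive soundness result that the chapter actually needs for the VCG correctness argument that follows.

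One minor slip in your determinism lemma: in the conditional case you invoke the revised Hoare rule (Figure~\ref{fig:hoare-if}) to argue that both branches run in the same time. That rule belongs to the proof system, not to the operational semantics; the operational \textbf{if} rules are unchanged from Section~\ref{chap4:sec:formal-operational}, and determinism there follows simply because $\mathcal{B}\llbracket b\rrbracket\sigma$ selects exactly one branch. Equal-cost branches are a constraint on \emph{provability}, not on \emph{execution}, so they play no role in showing that $(\sigma',t')$ is unique.
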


Even though our theorem changed, the proof for \textit{skip}, \textit{assign}, \textit{array} and \textit{seq} will look exactly the same since the upper bound calculated by our previous logic was already identical to the real cost of execution. Therefore we will only show the proof for \textit{if} and \textit{for}. 

\begin{proof}
Case if:
Assume $\models \{ P \land b \} S_1 \{ Q | t \}$ and $\models \{P \land \neg b\} S_2 \{ Q | t\}$.
Suppose $\sigma \models P$. 

If $\sigma \models b$, then $\sigma \models P \land b$ so, assuming $\langle S_1, \sigma \rangle \rightarrow^{t_1} \sigma_1$, we have that $\sigma_1 \models Q$

If $\sigma \models \neg b$, then $\sigma \models P \land \neg b$ e so, assuming $\langle S_2, \sigma \rangle \rightarrow^{t_2} \sigma_2$, we have that $\sigma_2 \models Q$.

Given our assumptions we know that $t_1 = t_2 = t$. We then have that the exact cost for the \textit{if} statement is $t + \mathcal{T}\mathcal{B}\llbracket b \rrbracket$. The rule for \textit{if} is sound.

$$\models \{\ P\ \}\ \mathrm{if}\ e\ \mathrm{then}\ S_1\ \mathrm{else}\ S_2\ \{\ t_1 + \mathcal{T} \mathcal{B} \llbracket b \rrbracket\ |\ Q\ \}$$

Case for:
Assume $\models \{\ P \land (a \le i) \land i < b\ \}\ S\ \{\ P[i+1/i]\ | t_S\}$ \\
Suppose $\sigma \models P[a/i]$ and $\langle \mathrm{for}\ i=a\ \mathrm{to}\ b\ \mathrm{do}\ S, \sigma\rangle \rightarrow^{t} \sigma_1$.

If $\sigma \models \neg (a \le b)$ then $\sigma = \sigma_1$. From our lemma~\ref{lemma:subst} we get that $\sigma \models P$. In this case, the execution time is $\mathcal{T}\mathcal{B} \llbracket a \le b \rrbracket$. When $a > b$, $t = \mathcal{T}\mathcal{B} \llbracket a < b \rrbracket$, then the axiom is sound.

If $\sigma \models a< b$ then $\sigma \models P[a/i] \land a < b$.
Let us consider a state $\sigma_2$ such that  $\langle i=a, \sigma \rangle \rightarrow^{t_1} \sigma_2$. By the \textit{assign rule}, we know $\sigma_2 \models P$ and $t_1 = \mathcal{TA} \llbracket a \rrbracket + C_{ASSIGN\_V}$. We also have that $\sigma_2 \models a \le i < b$, therefore if we execute $S$ from state $\sigma_2$, we will get a state $\sigma_3$, $\langle S, \sigma_2 \rangle \rightarrow^{t_2} \sigma_3$, such that $\sigma_3 \models P[i+1/i]$, and $t_S = t_2$. By our induction hypothesis: $\langle \mathrm{for}\ i=a+1\ \mathrm{to}\ b\ \mathrm{do}\ S, \sigma_3 \rangle \rightarrow^{t_3} \sigma_1$, where $\sigma_1 \models P[b/i]$, and $t_3 = (b - a) \times \mathcal{TB} \llbracket a < b \rrbracket + (b-a-1) \times (t_S + \mathcal{TA}\llbracket a \rrbracket + C_{ASSIGN\_V)}$.
The real cost of executing statement $for$ from state $\sigma$ is $t = t_1 + t_2 + t_3 + \mathcal{TB} \llbracket a < b \rrbracket$. Knowing $t_1 = \mathcal{TA} \llbracket a \rrbracket + C_{ASSIGN\_V}$, $t_2 = t_S$, and $t_3 = (b - a) \times \mathcal{TB} \llbracket a < b \rrbracket + (b-a-1) \times (t_S + \mathcal{TA}\llbracket a \rrbracket + C_{ASSIGN\_V)}$, gives us
$$t = \mathcal{TA} \llbracket a \rrbracket + C_{ASSIGN\_V} + t_S + (b - a) \times \mathcal{TB} \llbracket a < b \rrbracket + (b-a-1) \times (t_S + \mathcal{TA}\llbracket a \rrbracket + C_{ASSIGN\_V)} + \mathcal{TB} \llbracket a<b \rrbracket$$
$$t = (b - a + 1) \times \mathcal{TB} \llbracket a < b \rrbracket + (b-a) \times (t_S + \mathcal{TA}\llbracket a \rrbracket + C_{ASSIGN\_V)} + \mathcal{TB} \llbracket a<b \rrbracket$$

Therefore the rule for \textit{for} is sound.
\end{proof}

\subsection*{Example: Range Filter}

To illustrate our logic, we will apply our rules to the range filter algorithm, as presented in figure~\ref{fig:range-filter}.
This algorithm consists of a simple filter where, given an array ($a$) and a range [l..u], we use an auxiliary array ($b$) to filter the elements in $a$ that are within the range.

\begin{figure}[htbp]
  \centering
  \includegraphics[width=0.8\linewidth]{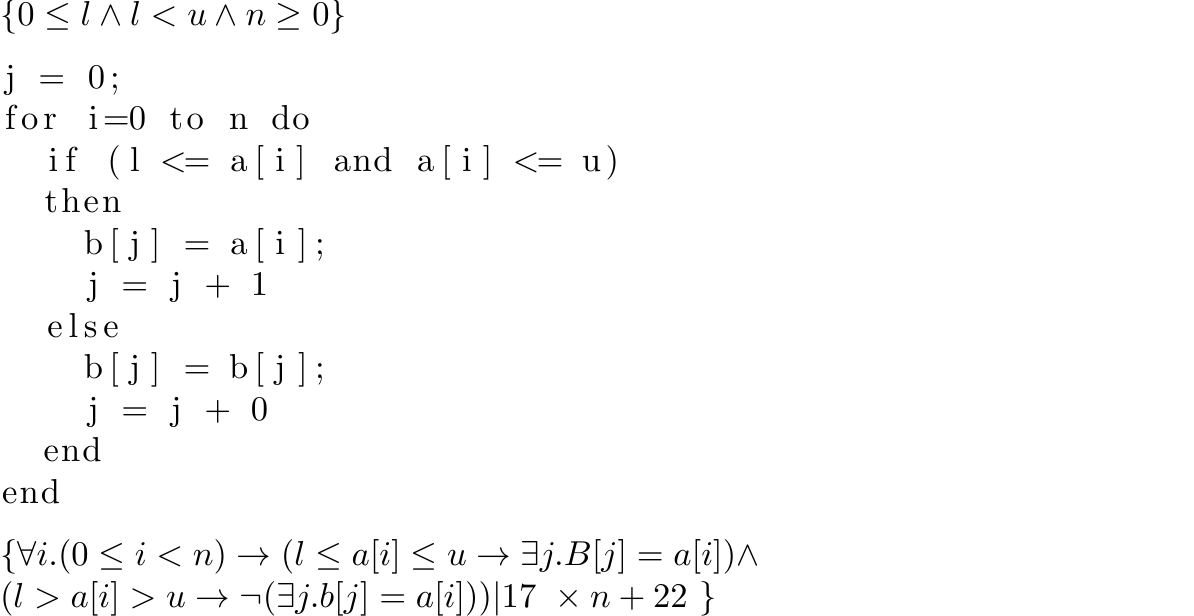}
  \caption{Array filtering algorithm with annotations for exact cost.}
  \label{fig:range-filter}
\end{figure}

We provide the invariant $I \equiv \forall k. (0 \le k \land k<i) \rightarrow (l \le a[k] \land a[k] \le u \rightarrow \exists k. b[j] = a[i] ) \land (l > a[k] \land a[k] > u \rightarrow \neg (\exists j. b[k] = a[i]))$ in order to prove correctness.

Let us refer to the \textit{if} statement as $S_{if}$ and to the \textit{for} statement as $S_{for}$.

By the \textit{assign rule} we have
$$\{I[i+1/i][j+1/j]\}j=j+1\{I[i+1/i]|\mathcal{TA} \llbracket j+1 \rrbracket + 1\}$$

By the \textit{assign rule} we also have
$$\{I[i+1/i][j+1/j][a[i]/b[j]]\}b[j]=a[i]\{I[i+1/i][j+1/j]|\mathcal{TA} \llbracket i \rrbracket + \mathcal{TA} \llbracket j \rrbracket + 1\}$$

By the \textit{seq rule} we get
$$\{I[i+1/i][j+1/j][a[i]/b[j]]\}b[j]=a[i]; j=j+1\{I[i+1/i]|4 + 3 + 1\}$$

By the \textit{assign rule} we have
$$\{I[i+1/i][j+0/j]\}j=j+0\{I[i+1/i]|\mathcal{TA} \llbracket j+0 \rrbracket + 1\}$$

By the \textit{assign rule} we also have
$$\{I[i+1/i][j+0/j][b[j]/b[j]]\}b[j]=b[j]\{I[i+1/i][j+0/j]|\mathcal{TA} \llbracket j \rrbracket + \mathcal{TA} \llbracket j \rrbracket + 1\}$$

By the \textit{seq rule} we get
$$\{I[i+1/i][j+0/j][b[j]/b[j]]\}b[j]=b[j]; j=j+0\{I[i+1/i]|4 + 3 + 1\}$$

Since $I[0/i] \land l \le a[i] \le u \rightarrow I[i+1/i][j+1/j][a[i]/b[j]]$, $I[0/i] \land  \neg (l \le a[i] \le u) \rightarrow I[i+1/i][j+0/j][b[j]/b[j]]$, then by the \textit{if rule}

$$\{I[0/i]\} S_{if} \{ I[i+1/i] | 8 + \mathcal{TB} \llbracket l \le a[i] \land a[i] \le u \rrbracket \}$$

By the \textit{for rule} we get
$$\{I[0/i]\} S_{for} \{I[b/i] | n \times 17 + (n+1) \times \mathcal{TB} \llbracket i < n \rrbracket\}$$

By the \textit{assign rule}
$$\{I[0/i][0/j]\}j = 0\{\ I[b/i] | 2\}$$

By the \textit{seq rule}
$$\{I[0/i][0/j]\} S \{ I[i+1/i] | 8 + n \times 9 + (n+1) \times \mathcal{TB} \llbracket i < n \rrbracket + 2\}$$

Since $P \rightarrow I[0/i][0/j]$, $I[i+1/i] \rightarrow Q$, and $T \ge n \times 9 + (n+1) \times \mathcal{TB} \llbracket i < n \rrbracket + 2$, then by the \textit{weak rule}

$$\vdash \{P\} S \{Q | T\}$$

\section{Verification Conditions Generation for Exact Costs}\label{chap6:sec:vcg}

Given the extensions to our logic, we must rewrite our VCG accordingly.

In figure~\ref{fig:wp-exact} we present the new wpc algorithm for $if$ and $for$. The rules for $skip$, $assign$, $array$, and $seq$ remain the same since they already output exact costs.
\begin{figure}[hbp]
  \centering
  \includegraphics[width=\linewidth]{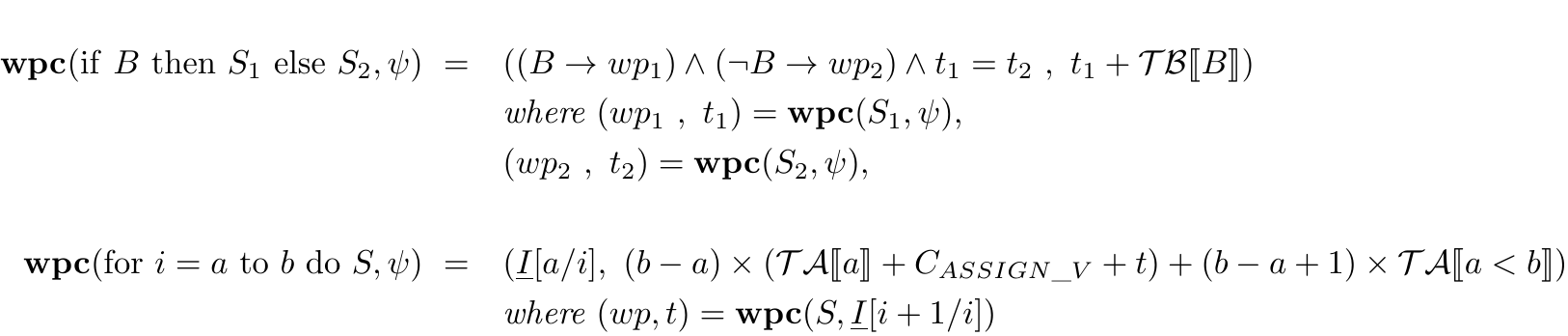}
  \caption{ Weakest Precondition Algorithm for Exact Costs.}
  \label{fig:wp-exact}
\end{figure}

In the $wpc$ result for $if$, we add a precondition restriction that says $t_1 = t_2$. In the cost expression, instead of computing the max between $t_1$ and $t_2$, we can simply define the cost as $t_1 + \mathcal{TB} \llbracket b \rrbracket$.

The wp of a \textit{for} loop is the invariant when $i=a$. The cost of executing a \textit{for} loop is $b-a$ times the cost of the loop body $t$, plus $b-a+1$ times the cost of evaluating $a < b$.

In figure~\ref{fig:vc-exact} we show the VC rules for $if$ and $for$. The rule for $if$ remains exactly the same as in the original version (\ref{fig:vc}). 

The rule for $for$ derives three VCs:
\begin{itemize}
    \item $I[b/i] \rightarrow Q$, meaning that when $i$ reaches value $b$ the loop breaks and the postcondition $Q$ is met. 
    \item $I \land a \le i < b \rightarrow wp(S, I)$, which guarantees the invariant is preserved and that before executing the \textit{for} loop body, $i$ must be a value between $a$ and $b$.
    \item $I \land \neg (a < b) \rightarrow Q$, which states that if $a < b$ is not met, then the loop will not execute and the postcondition $Q$ is true.
\end{itemize}
The rule also has a recursive call to VC applied to the body statement $S$.

\begin{figure}[htbp]
  \centering
  \includegraphics[width=0.8\linewidth]{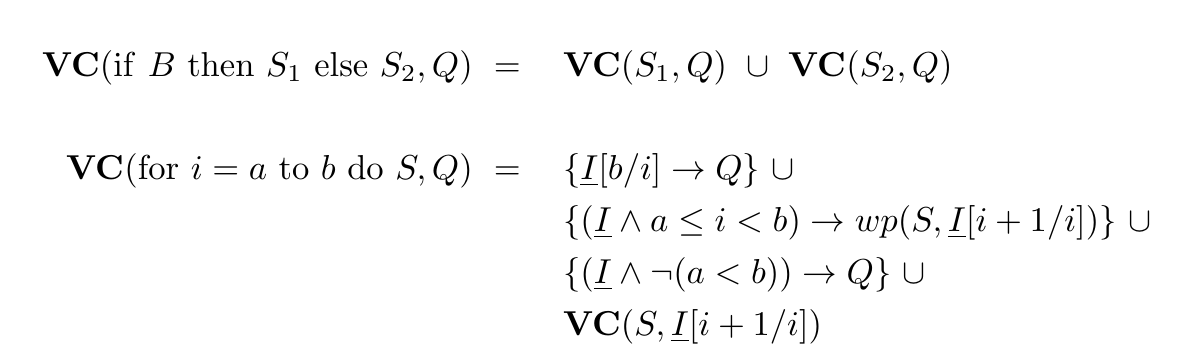}
  \caption{ VC Function for Exact Costs.}
  \label{fig:vc-exact}
\end{figure}

\subsection*{Soundness} 

We need to prove theorem~\ref{theor:soundness-vcg}, which states that the VCG algorithm is sound if the VC generated implies the Hoare triple we wish to prove, 
$$\models VCG(\{P\}Q\{R\})\ \Rightarrow\ \vdash \{P\}Q\{R\}$$

\begin{proof}
We prove $\Rightarrow$ by induction on the structure of Q and $\Leftarrow$ by induction in the derivation of $\vdash \{P\}Q\{R\}$.

\textit{Case if:}

Induction Hypothesis:
$$ \models VCG(\{P\} S_1 \{Q | t_1 \} \rightarrow \vdash \{P\} S_1 \{Q|t_1\})$$
$$ \models VCG(\{P\} S_2 \{Q | t_2 \} \rightarrow \vdash \{P\} S_2 \{Q|t_2\})$$

Let us consider
\begin{itemize}
    \item $wp_1, t_1 = wpc(S_1,Q)$
    \item $wp_2, t_2 = wpc(S_2,Q)$
    \item $wpc(\mathrm{if}\ b\ \mathrm{then}\ S_1\ \mathrm{else}\ S_2, Q) = (\mathcal{B} \llbracket b \rrbracket \rightarrow wp_1 \land \neg \mathcal{B} \llbracket b \rrbracket \rightarrow wp_2 \land t_1=t_2, t_1+\mathcal{TB}\llbracket b \rrbracket)$
\end{itemize}

\begin{align*}
VCG(\{P\}\mathrm{if}\ b\ \mathrm{then}\ S_1\ \mathrm{else} S_2\{Q|T\}) = &\{P \rightarrow (b \rightarrow wp_1 \land \neg b \rightarrow wp_2 \land t_1=t_2)\} \cup \\
&\{T = t_1 + \mathcal{TB}\llbracket b \rrbracket\} \cup \\
&VC(\mathrm{if}\ b\ \mathrm{then}\ S_1\ \mathrm{else} S_2, Q)
\end{align*}

Where $VC(\mathrm{if}\ b\ \mathrm{then}\ S_1\ \mathrm{else} S_2, Q) = VC(S_1, Q) \cup VC(S_2,Q)$.

Assuming $\models VCG(\{P\}\mathrm{if}\ b\ \mathrm{then}\ S_1\ \mathrm{else}\ S_2\{Q|T\})$.

\begin{itemize}
    \item Since $P \land \mathcal{B} \llbracket b \rrbracket \rightarrow wp_1$, $t_1 = t_1$, and $VC(S_1, Q)$,  $$\models VCG(\{P\}S_1\{Q|t_1\})$$
    \item Since $P \land \neg \mathcal{B} \llbracket b \rrbracket \rightarrow wp_2$, $t_2 = t_2$, and $VC(S_2, Q)$,  $$\models VCG(\{P\}S_2\{Q|t_2\})$$
\end{itemize}

From our Induction Hypothesis, we have $\vdash \{P \land \mathcal{B} \llbracket b \rrbracket\} S_1 \{Q|t_1\}$, and $\vdash \{P \land \neg \mathcal{B} \llbracket b \rrbracket\} S_2 \{Q|t_2\}$.

By the \textit{if rule}, we get 
$$\{P\}\mathrm{if}\ b\ \mathrm{then}\ S_1\ \mathrm{else}\ S_2 \{Q|t_1 + \mathcal{TB} \llbracket b \rrbracket\}$$

Since $T = t_1 + \mathcal{TB} \llbracket b \rrbracket$, by the \textit{weak rule}
$$\vdash \{P\}\mathrm{if}\ b\ \mathrm{then}\ S_1\ \mathrm{else}\ S_2\{Q|T\}$$

\textit{Case for:}

Induction Hypothesis:
$$ \models VCG(\{P\} S \{Q | T \} \rightarrow\ \vdash \{P\} S \{Q|T\})$$

Let us consider
$$wpc(\mathrm{for}\ i = a\ \mathrm{to}\ b\ \mathrm{do}\ S, Q) = (I[a/i], (b-a) \times (\mathcal{TA} \llbracket a \rrbracket + C_{ASSIGN\_V} + t)+ (b-a+1) \times \mathcal{TB}\llbracket a \le b \rrbracket)$$

where $wp_S, t_S = wpc(S, I[i+1/i])$

\begin{align}
VC(\mathrm{for}\ i = a\ \mathrm{to}\ b\ \mathrm{do}\ S,Q) = 
&\{I[b/i] \rightarrow Q \}\ \cup \label{chap6:eq-1}\\
&\{(I \land \neg(a < b)) \rightarrow Q\}\ \cup \label{chap6:eq-2}\\
&\{(I \land a \le i < b) \rightarrow wp\}\ \cup \label{chap6:eq-3}\\
&VC(S,I[i+1/i]) \label{chap6:eq-4}
\end{align}

\begin{align}
VCG(\{P\}\mathrm{for}\ i = a\ &\mathrm{to}\ b\ \mathrm{do}\ S\{Q|T\}) = \{P \rightarrow I[a/i]\}\ \cup \label{chap6:eq-5}\\
&\{T = (b-a) \times t+ (b-a+1) \times \mathcal{TB}\llbracket a \le b \rrbracket\}\ \cup \label{chap6:eq-6}\\
&VC(\mathrm{for}\ i = a\ \mathrm{to}\ b\ \mathrm{do}\ S, Q) \label{chap6:eq-7}
\end{align}

Assuming $\models VCG(\{P\}\mathrm{for}\ i = a\ \mathrm{to}\ b\ \mathrm{do}\ S\{Q|T\})$.

Given \ref{chap6:eq-3}, , $t_S = t_S$, and VC(S,I[i+1/i]) then $\models VCG[\{I\land a \le i < b\}S\{I[i+1/i]|t\}]$

From our Induction Hypothesis
$$\vdash \{I\land a \le i < b\}S\{I[i+1/i]|t\}$$

By the \textit{for} rule
$$\vdash \{I[a/i]\}\mathrm{for}\ i = a\ \mathrm{to}\ b\ \mathrm{do}\ S\{I[b/i] | (b-a) \times (\mathcal{TA} \llbracket a \rrbracket + C_{ASSIGN\_V} + t)+ (b-a+1) \times \mathcal{TB}\llbracket a \le b \rrbracket\}$$

Given \ref{chap6:eq-5}, \ref{chap6:eq-1} and \ref{chap6:eq-6} we get

$$\vdash \{P\}\mathrm{for}\ i = a\ \mathrm{to}\ b\ \mathrm{do}\ S\{Q|T\}$$
\end{proof}

\subsection*{Example: Range Filter}

We now apply the VCG algorithm to the range filter example~\ref{fig:range-filter}. 
We will use the same notation as in section~\ref{chap6:sec:axiomatic} and refer to our precondition as $P$, our program as $S$, our postcondition as $Q$, and our tight cost as $T$. We will also refer to the \textit{for} loop body statement as $S_{if}$. The algorithm starts with a call to the VCG function.

\begin{align}
VCG(\{P\}S\{Q|T\}) = &\{P \rightarrow wp)\ \cup \label{chap6:eq-8}\\
&\{T = t\}\ \cup \label{chap6:eq-9}\\
&VC(S,Q) \nonumber
\end{align}

Where
$$wp, t = wpc(S,Q) = (I[0/i][0/j] , n \times t_{if} + (n+1) \times \mathcal{TB}\llbracket 0 < n \rrbracket + 2)$$
$$wp_{if}, t_{if} = wpc(S_{if}, I[i+1/i])$$

Then the VCG function calls the VC function for $S$. Since only for loops generate extra VCs, we will omit other calls to the VC function for simplicity. 
\begin{align}
VC(S,Q) = 
&\{(I \land 0 \le i < n) \rightarrow wp_{if}\}\ \cup \label{chap6:eq-10}\\
&\{(I[n+1/i]) \rightarrow Q\}\ \cup \label{chap6:eq-11}\\
&\{(I[0/i] \land n < 0) \rightarrow Q\}\ \cup \label{chap6:eq-12}\\
&VC(S_f, I[i+1/i]) \nonumber
\end{align}

To prove our triple, we now simply need to prove all of the VCs generated by the algorithm (\ref{chap6:eq-8} to \ref{chap6:eq-12}), this can easily be done for all the conditions manually, or with the assistance of a theorem prover. 

As would be expected, proving a Hoare triple by applying the VCG algorithm is simpler and more mechanic than proving it directly by applying our rules and deriving the inference tree.

%\section{Soundness}\label{sec:soundness}

\section{Implementation}\label{sec:results}
This chapter describes how we implemented our verification tool for all three versions of our logic. We will also present implementations of classic algorithms and how we prove their correctness and cost using our tool.
We have implemented our verification system prototype in OCaml, and all the code and examples are in the GitHub repository \url{https://github.com/carolinafsilva/time-verification}.

\section{Tool Architecture}

Our goal is to write programs with annotation of correctness and time bounds and be able to prove these conditions. In figure~\ref{fig:architecture} we show the architecture of our tool, with each of the steps that will allow us to meet our goal.

\begin{figure}[htbp]
  \centering
  \includegraphics[width=0.8\linewidth]{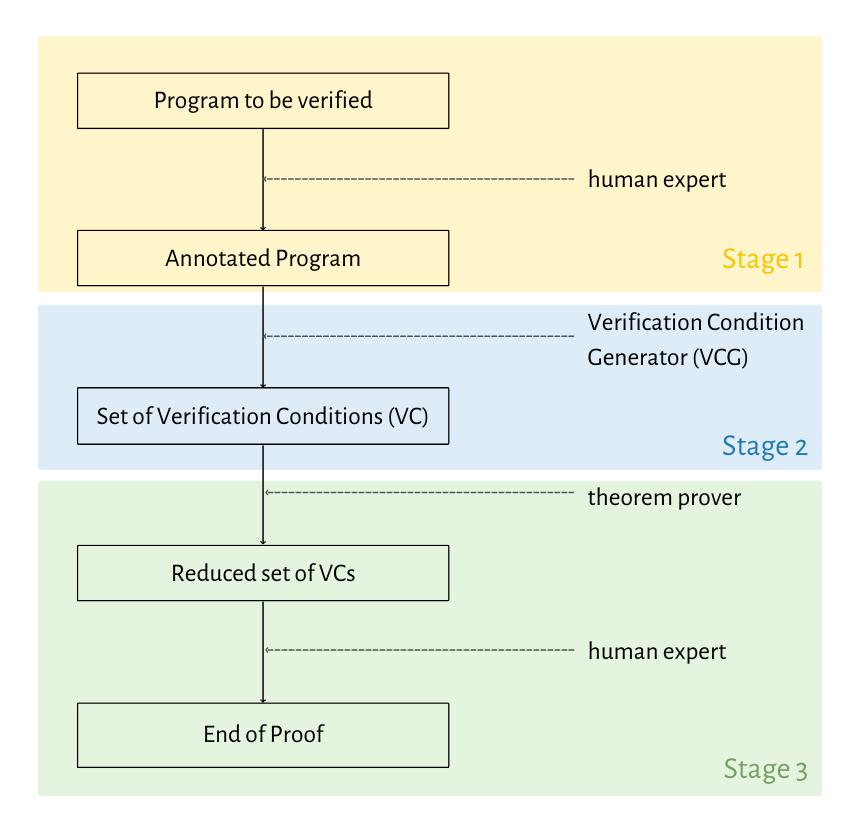}
  \caption{Architecture of our tool.}
  \label{fig:architecture}
\end{figure}

Program verification is conducted in three stages:
\begin{enumerate}
    \item Annotation of the program by the programmer, who specifies the correctness conditions that must be met, as well as the cost upper bound.
    \item Implementation of the VCG which, given an annotated program generates a set of goals that need to be proved.
    \item The proof stage: proof goals are passed to a theorem prover which attempts to prove them automatically. If it fails, some interaction with the user is needed to guide the proof.
\end{enumerate} 

Achieving the first stage involves defining the language Abstract Syntax Tree (AST), parsing the program and annotations, and implementing the operational semantics interpreter.

The second stage includes the implementation of our VCG algorithm and interaction with the oracle, here instantiated with user annotations, to provide extra information about program loops. At the end of stage two, our tool has generated a set of proof goals needed to ensure correction, termination, and resource usage of our input program. 

Finally, in stage three, we discard these proof goals by sending them to an automatic prover (e.g., Easycrypt, why3) for validation. This step might need some assistance from the user since the generated VCs might be too complex to be automatically proved. If all our VCs are validated, we know our program is correct, and we have learned some restrictions on its execution time. 

Verification is semi-automatic in the sense that in certain situations, the user has to give extra information to the program, either in the form of an oracle that defines some needed parameters or in the proof stage in situations where the proof is interactive.

\section{Implementation Details}

\subsection{Cost Model}
One essential part of our system is our cost model. We need to define a way to evaluate the cost of a program. We start by defining a map to store the cost of atomic operations. For instance, the cost of a sum ($C_{+}$) might be defined as 1, and the cost of multiplication ($C_{*}$) as 3. Besides our dictionary, we implement a semantic to define the cost of evaluating arithmetic expressions, boolean expressions, and statements. Our operational semantics uses the cost model to compute the real execution cost.

\subsection{Oracle}

Programs with loops require additional information to prove correctness, termination, and cost bounds. This information is provided through an oracle. This oracle will request user input when necessary to complete the VCG algorithm. Since information such as invariants is needed multiple times throughout the algorithm, we need to store this information to be easily accessible. To achieve this, we assign a unique identifier to each \textit{while-loop} and create an oracle hashtable to store the oracle information for each loop. The code for our oracle can be seen in listing~\ref{lis:oracle}.

\begin{lstlisting}[language=ml,label=lis:oracle,caption={Oracle Implementation.}]
let oracle_hashtbl = Hashtbl.create 43

let parse_info name parser_function =
  Printf.printf name ;
  let input = read_line () in
  Lexing.from_string input |> parser_function Lexer.token

let oracle () =
  try
    let inv = parse_info "Invariant: " Parser.annot_start in
    let f = parse_info "Variant: " Parser.aexp_start in
    let n = parse_info "Number of iterations: " Parser.aexp_start in
    let t = parse_info "Cost Function of While: " Parser.lambda_start in
    (inv, f, n, t)
  with _ -> failwith "Oracle Error\n"

let get_oracle id =
  if Hashtbl.mem oracle_hashtbl id then Hashtbl.find oracle_hashtbl id
  else
    let inv, f, n, t = oracle () in
    Hashtbl.add oracle_hashtbl id (inv, f, n, t) ;
    (inv, f, n, t)
\end{lstlisting}

\subsection{VCG}
Let us look at our VCG implementation. We implemented this algorithm as similar to the theoretical definitions as possible to guarantee soundness, as per our proofs. We presented three theoretical definitions of our logic, one classic for upper bounds, one which uses amortized analysis to further refine our upper-bound estimation, and one that proves the exact costs of a restricted version of our language.

Consider the implementation of the wp function in listing~\ref{lis:wp}. 
% The similarity between the implementation and the definition from image~\ref{fig:wp} is apparent when we compare the two.
Note how in the while case we start by calling the $get\_oracle$ function to get our loop information.

\begin{lstlisting}[language=ml,label=lis:wp,float=htb,caption={Weakest Precondition Implementation.}]
let rec wpc s phi =
  match s with
  | Skip ->
      (phi, Var "Skip")
  | Assign (x, a) ->
      (subst phi x a, Sum (Var "Assign", time_aexp a))
  | ArrAssign (x, a1, a2) ->
      let t' = Sum (time_aexp a1, time_aexp a2) in
      (subst_arr phi x a1 a2, Sum (Var "Assign", t'))
  | Seq (s1, s2) ->
      let phi', t2 = wpc s2 phi in
      let phi, t1 = wpc s1 phi' in
      (phi, Sum (t1, t2))
  | If (b, s1, s2) ->
      let wp1, t1 = wpc s1 phi in
      let wp2, t2 = wpc s2 phi in
      let v_b = annot_of_bexp b in
      let tb = time_bexp b in
      (AAnd (AImpl (v_b, wp1), AImpl (ANeg v_b, wp2)), Sum (Sum (t1, t2), tb))
  | While (id, b, _) ->
      let inv, f, n, t = get_oracle id in
      let time =
        Sum
          ( Mul (Sum (n, Cons 1), time_bexp b)
          , Sigma ("k", 0, Sub (n, Cons 1), lambda_app t (Var "k")) )
      in
      (AAnd (inv, AGe (f, Cons 0)), time)
\end{lstlisting}

Similarly, we can see the VC function implementation in listing~\ref{lis:vc}. The \textit{while} case calls the $get\_oracle$ function again to retrieve the information about loop invariant, variant, and cost.

\begin{lstlisting}[language=ml,label=lis:vc,float=htb,caption={VC Implementation.}]
let rec vc s phi: annot list =
  match s with
  | Skip | Assign (_, _) | ArrDef (_, _) | ArrAssign (_, _, _) ->
      [] 
  | Seq (s1, s2) ->
      vc s1 (wp s2 phi) @ vc s2 phi
  | If (_, s1, s2) ->
      vc s1 phi @ vc s2 phi
  | While (id, b, s') ->
      let inv, f, n, t = get_oracle id in
      let b = annot_of_bexp b in
      let wp, t' = wpc s' (AAnd (inv, AGt (f, Var "k"))) false in
      AForall ("k", AImpl (AAnd (inv, AAnd (b, AEq (f, Var "k"))), wp))
      :: AImpl (AAnd (inv, AAnd (b, AEq (f, Var "k"))), AGe (lambda_app t (Var "k"), t'))
      :: AImpl (AAnd (inv, ANeg b), phi)
      :: AImpl (AAnd (inv, b), ALe (f, n))
      :: vc s' (AAnd (inv, ALe (f, Var "k")))
\end{lstlisting}

Finally, we show the entry point function VCG in listing~\ref{lis:vcg}. This function calls both $wpc$ and $VC$ and combines all the VCs together.

\begin{lstlisting}[language=ml,label=lis:vcg,float=htb,caption={VCG Implementation.}]
let vcg pre s t pos =
  let wp, ts = wpc s pos true in
  AImpl (pre, ALe (ts, t)) :: AImpl (pre, wp) :: vc s pos
\end{lstlisting}

The VCG implementation for amortized costs differs from the previous one, only for the \textit{while} case. The oracle will also request new information in this version, an amortized cost and a potential function instead of a function of cost.

The exact cost version of our logic requires additional changes. We start by extending our language with for-loops and implementing the required adaptations to our interpreter. A restriction will be added to \textit{if} statements to ensure equal run time for both branches. Our VCG algorithm will now be extended to deal with \textit{for}-loops and the inequality operator in the cost assertion will now be replaced with an equality operator to prove the exact-time bound.

\section{Examples}

Let us now analyze some working examples implemented in this language and the conditions generated by our VCG algorithm. Particularly we will be able to look at the implementation and results we have already analyzed in previous chapters. For simplicity, we defined the cost of all atomic operations as 1 in our cost dictionary.
In table~\ref{tab:examples-info} we show how many VCs each example generated and which of the three versions of our logic was used.

\subsection{Insertion Sort}

Our first example is of a classic sorting algorithm, insertion sort. The implementation is presented in listing~\ref{lis:insertion-sort}.

\begin{lstlisting}[language=ml,label=lis:insertion-sort,float=htb,caption={Insertion Sort Implementation with Annotations.}]
{ n > 0 }
i = 1;
while i < n do
  key = x[i];
  j = i - 1;
  while x[j] > key and j >= 0 do
    x[j + 1] = x[j];
    j = j - 1
  end;
  x[j + 1] = key;
  i = i + 1
end
{ forall k. (0<=k and k<n) => x[k] >= x[k-1] | 9*n*n + 27*n + 13 }
\end{lstlisting}

The precondition simply states that $n$ is a positive number. The postcondition says that our final array is in ascending order.
Since the implementation has two {\em while} loops, we will have two calls to the oracle. 

For the external loop, we define the maximum number of iterations as $n$. The variant is the $i$ variable since it always increases until it reaches the value of $n$. The invariant states that the array is always ordered from the first position until the $(i-1)$-th position: $\forall k. (0< k \land k < i) \rightarrow (x[k-1] \le x[k])$. 
The cost of the body of the external {\em while} is not the same for all iterations, since we have a nested while. We define this cost with the function: $t(i) = 9 \times i + 15$.

For the internal loop, it will iterate $i$ times. The variant is the increasing expression $i-j$. 
The invariant is that all elements between positions $i$ and $j$ are greater than the key and that from the first position until $i-1$ the array is sorted, excluding the element on position j: $(\forall k. (j < k \land k < i) \rightarrow x[k] > key) \rightarrow \forall k1, k2.  0 \le k1 \land k1 \le k2 \land k2 < i \land \neg (k1 = j) \land \neg (k2 = j) \rightarrow x[k1] \le x[k2]$.
Note that one can define a cost function $t(k)$ that would allow us to derive an exact cost for the {\em while} rule, however, our logic does not allow proving that this bound is tight.

Given this information, our VCG generates the conditions needed to prove the termination, correctness, and cost bound of our program. 

% \begin{table}[htpb] %TODO complete table
%     \centering
%     \begin{tabular}{|c|c|c|}
%     \hline
%          Description & VC & Automatic \\
%          \hline
%          Cost & $13 + n + \sum_{k=0}^{n-1}(9 \times k + 15) \le 9n^2 + 27n + 5$  & Yes \\
%          - & $True \rightarrow ((\forall k.(k < 1 \land k > 0) \rightarrow x[k] \ge x[k - 1]) \land 1 \ge 0)$ & Yes \\
%          \hline
%     \end{tabular}
%     \caption{Insertion Sort \ac{VC}s.}
%     \label{tab:insertion-sort-vcs}
% \end{table}

\subsection{Binary Search}

Our next example is another classic algorithm, Binary Search. Here we want to prove that, not only our implementation is correct and terminates, but also that the algorithm runs in logarithmic time in the size of the array. The specification can be seen in listing~\ref{lis:binary-search}

\begin{lstlisting}[language=ml,label=lis:binary-search,float=htb,caption={Binary Search Implementation with Annotations.}]
{(forall i. (0 <= i and i < n) => a[i] < a[i+1]) and (exists j. a[j] = v)}
l = 0;
u = n - 1;
while l <= u do
  m = l + ((u - l) / 2);
  if a[m] < v then
    l = m + 1
  else
    if a[m] > v then
      u = m - 1
    else
     result = m;
     l = u + 1
    end
  end
end
{0 <= result and result < n and a[result] = v | 43 * log(n) + 10}
\end{lstlisting}

The precondition states that the array $a$ is sorted, and that value $v$ is in the array. The postcondition says that $result$ is a valid position in $a$ and it corresponds to the position of $v$ in $a$, $a[result]=v$.

To prove the execution time bound, we provide to the oracle the maximum number of iterations as being $log(n)$ and a constant value as the cost of each loop body iteration.
To prove termination we must also provide $n-u+l$ as a variant. Our invariant says that the position we are looking for is between $l$ and $u$, $0 \le l \land u < n \land (\forall i. (0 \le i \land i < n \land a[i] = v) \rightarrow l \le i \land i \le u)$ as invariant.

\subsection{Binary Counter}
In the binary counter algorithm, we represent a binary number as an array of zeros and ones. We start with an array with every value at zero, and with each iteration, we increase the number by one until we reach the desired value. Our implementation can be seen in listing~\ref{lis:binary-counter}. 

\begin{lstlisting}[language=ml,label=lis:binary-counter,float=htb,caption={Binary Counter Implementation with Annotations.}]
{n >= 0 and size = log(n)}
i = 0;
while i < n do
  j = 0;
  while B[j] = 1 do
    B[j] = 0;
    j = j + 1
  end;
  B[j] = 1;
  i = i + 1
end
{n = sum(i,0,log(n) - 1, B[i]*2^i) | 20*c*n + 3*n + 30}
\end{lstlisting}

Unlike in previous examples, we applied our amortized logic to prove the bound of the binary counter algorithm.
Our precondition says that $n$ is a positive value, size is $log(n)$ and that all elements in array $B$ from 0 to $size$ start at zero. Our postcondition says that at the end of the program, array $B$ is a binary representation of decimal number $n$.
In order to prove this assertion, we must provide the oracle with the amortized cost ($2c$) and a potential function denoting the number of ones in the array at each iteration.
We must also specify the invariant $i = sum(k,0,size, B[k] * 2^k)$, the variant $i$, and the maximum number of iterations $size$, to prove correctness and termination respectively.

If we were to use a worst-case analysis on this implementation, we would get that this algorithm is $\mathcal{O}(n\ log n)$, meaning we would flip every bit ($log n$) a total of $n$ times. However, this is not the case. While the first bit ($B[0]$) does flip every iteration, the second bit($B[1]$) flips every other iteration, the third ($B[2]$) every 4th iteration, and so on. We can see a pattern where each bit $B[i]$ flips every $2^i$th iteration. This will mean that, at most, we have $2n$ bit flips, meaning our algorithm is actually $\mathcal{O}(n)$, as we successfully proved with our algorithm.
we can define a potential function as:

\subsection{Range Filter}

In our last example, we implement a simple filter where, given an array ($a$) and a range [l..u], we use an auxiliary array ($b$) to filter if the elements in $a$ are within the range, listing~\ref{lis:range-filter}.

\begin{lstlisting}[language=ml,label=lis:range-filter,float=htb,caption={Range Filter Implementation with Annotations.}]
{ 0 <= l and l < u and n >= 0}
j = 0;
for i=0 to n do
  if (l <= a[i] and a[i] <= u)
  then
    b[j] = a[i];
    j = j + 1
  else
    b[j] = b[j];
    j = j + 0
  end
end
{ forall i. (0<=i and i<n) => (l <= a[i] and a[i] <= u => exists j. B[j] = a[i] )
and (l > a[i] and a[i] > u => not (exists j. b[j] = a[i]) ) | 13*n + 10 }
\end{lstlisting}
Our pre-condition states that $l$ and $n$ are positive values, and $u$ is greater than $l$.
Our postcondition says that for every $i$ element in $a$ in the range [l..u], $i$ will also be in $B$. And for every $i$ element in $a$ not in the range [l..u], $i$ will not be in $B$. 

We provide the invariant $\forall k. (0 \le k \land k<i) \rightarrow (l \le a[k] \land a[k] \le u \rightarrow \exists k. b[j] = a[i] ) \land (l > a[k] \land a[k] > u \rightarrow \neg (\exists j. b[k] = a[i]))$ in order to prove correctness.
Using our VC generator and EasyCrypt we prove that not only is this algorithm correct, the cost we provide of $13n + 10$ is the exact cost of this program. This result allows us to conclude the time it takes to run depends only on the size of the array, and not on its values.

\begin{table}[htpb] %TODO complete table
    \centering
    \begin{tabular}{|c|c|c|}
    \hline
         Algorithm & Logic & Number of VCs Generated \\
         \hline
         Insertion Sort & Classic & 10 \\
         Binary Search & Classic & 6 \\
         Binary Counter & Amortized & 17 \\
         Range Filter & Exact &  5\\
         \hline
    \end{tabular}
    \caption{Logic Used and Number of VCs generated by each example.}
    \label{tab:examples-info}
\end{table}

\section{Conclusions}\label{sec:conclusion}
The topic of static cost analysis is not new. There is a lot of previous research on how to get reasonable estimations of cost or worst-case scenario costs, either by using type systems or using an axiomatic semantics. Our work continues on this but aims to produce tighter bounds than what we found so far in the literature. 

We first extended the traditional logic of worst-case cost to use amortized analysis, giving better results for programs that fit the amortized analysis scenario.

Then, we further extended our logic to a restricted version of our language, where one can prove the exact cost of execution. As far as we know, this is a novel logic. This result is rather significant if we consider the application to critical systems where the worst-case cost is not enough to guarantee all safety goals. It is also relevant if applied to cryptographic implementations, where timing leakage might be a security concern. 

\section{Future Work}\label{sec:trab}

One of the first improvements we are aiming towards is to develop a single system capturing all of the cost logics together, creating a more cohesive, powerful tool.

We primarily focused on theoretical definitions and guaranteeing a sound logic that produced reasonable bounds. Our implementation is a simple prototype that serves as proof of concept of these definitions. Therefore many improvements can be made to our tool concerning efficiency, transforming it from a conceptual tool to a practical one.

We also defined our logic using a simple language, which allowed us to focus on the cost estimation aspect without having to worry so much about language details. In the future, we want to extend our language with more features, such as functions, to improve the expressiveness of programs.
We would also like to expand the application of our logic to more extensive and complex case studies, namely cryptographic implementations. 

Our work started as an adaptation of the EasyCrypt cost logic developed in~\cite{barbosa21}. In the future, we would like to propose an extension to the EasyCrypt tool with our logic.

%
% ---- Bibliography ----
%
% BibTeX users should specify bibliography style 'splncs04'.
% References will then be sorted and formatted in the correct style.
%
\bibliographystyle{splncs04}
\bibliography{refs}
\end{document}